\documentclass[a4paper, UKenglish, cleveref, numberwithinsect]{lipics-arxiv}

\hideLIPIcs

\bibliographystyle{plainurl}

\usepackage{xspace}
\usepackage[lined,commentsnumbered]{algorithm2e}
\DontPrintSemicolon
\usepackage{mathtools, bm}
\usepackage{xargs}
\usepackage{tikz}
\usetikzlibrary{snakes, positioning}
\newtheorem{hypothesis}[theorem]{Hypothesis}


\newcommand*{\ie}{i.e.\@\xspace}

\newcommand{\set}[1]{\{#1\}} 
\newcommand{\integers}[0]{\mathbb{Z}} 
\newcommand{\Z}{\integers}
\newcommand{\naturals}[0]{\mathbb{N}} 

\newcommand{\N}{\naturals}
\newcommand{\Uu}{\mathcal{U}}
\newcommand{\Vv}{\mathcal{V}}
\newcommand{\Ww}{\mathcal{W}}
\newcommand{\class}[1]{\textsf{#1}} 
\newcommand{\poly}[1]{\textsf{poly}(#1)} 

\newcommand{\Oh}{\mathcal{O}}

\newcommand{\abs}[1]{\lvert #1 \rvert}


\newcommand{\len}[1]{\mathrm{len}(#1)} 
\newcommand{\eff}[1]{\mathrm{eff}(#1)} 
\renewcommand{\vec}[1]{{\bf #1}}

\newcommand{\configuration}[3]{#1(\vec{#2}#3)} 
\newcommand{\config}[2]{\configuration{#1}{#2}{}} 
\newcommand{\covfig}[2]{\configuration{#1}{#2}{'}} 

\newcommand{\run}[3]{#1\xrightarrow{#2}#3} 

\newcommand{\PreserveBackslash}[1]{\let\temp=\\#1\let\\=\temp}
\newcolumntype{C}[1]{>{\PreserveBackslash\centering}p{#1}}
\newcolumntype{R}[1]{>{\PreserveBackslash\raggedleft}p{#1}}
\newcolumntype{L}[1]{>{\PreserveBackslash\raggedright}p{#1}}

\newcommand{\problemx}[3]{
\par\noindent\underline{\sc#1}\par\nobreak\vskip.2\baselineskip
\begingroup\clubpenalty10000\widowpenalty10000
\setbox0\hbox{\bf INPUT:\ }\setbox1\hbox{\bf QUESTION:\ }
\dimen0=\wd0\ifnum\wd1>\dimen0\dimen0=\wd1\fi
\vskip-\parskip\noindent
\hbox to\dimen0{\box0\hfil}\hangindent\dimen0\hangafter1\ignorespaces#2\par
\vskip-\parskip\noindent
\hbox to\dimen0{\box1\hfil}\hangindent\dimen0\hangafter1\ignorespaces#3\par
\endgroup}

\definecolor{lightpink}{RGB}{253, 164, 229}
\definecolor{lightblue}{RGB}{171, 236, 255}
\definecolor{lightorange}{RGB}{255, 204, 144}

\newcommand\norm[1]{\lVert#1\rVert}

\newcommand{\thin}{\mathrm{thin}}
\newcommand{\tail}{\mathrm{tail}}

\newcommand{\trans}[1]{\xrightarrow{#1}}              
\newcommand{\reach}{\trans{*}}                       


\newcommand{\Mod}[1]{\,(\mathrm{mod}\ #1)}

\newcommand{\Multiply}[2]{\textbf{\texttt{Multiply}}$[#1, #2]$}
\newcommand{\MultiplyGadget}{\textbf{\texttt{Multiply}}\xspace}
\newcommand{\Divide}[2]{\textbf{\texttt{Divide}}$[#1, #2]$}
\newcommand{\DivideGadget}{\textbf{\texttt{Divide}}\xspace}
\newcommand{\Edge}[1]{\textbf{\texttt{Edge}}$[#1]$}
\newcommand{\EdgeGadget}{\textbf{\texttt{Edge}}\xspace}
\newcommand{\HyperEdge}[1]{\textbf{\texttt{HyperEdge}}$[#1]$}
\newcommand{\HyperEdgeGadget}{\textbf{\texttt{HyperEdge}}\xspace}
\newcommand{\Vertices}[1]{\textbf{\texttt{Verticies}}$[#1]$}

\newcommand{\VertexSelected}[1]{\textbf{\texttt{VertexSelected}}$[#1]$}
\newcommand{\VertexSelectedGadget}{\textbf{\texttt{VertexSelected}}\xspace}
\newcommand{\increase}[2]{$#1\,+${}$=#2$}
\newcommand{\decrease}[2]{$#1\,-${}$=#2$}

\newcommand{\vr}[1]{\mathsf{#1}}

\title{Coverability in VASS Revisited: Improving Rackoff’s Bound to Obtain Conditional Optimality}

\titlerunning{Coverability in VASS Revisited} 

\author{Marvin K\"unnemann}
{RPTU Kaiserslautern-Landau}
{kuennemann@cs.uni-kl.de}
{}
{Research partially supported by the Deutsche Forschungsgemeinschaft (DFG, German Research Foundation) -- 462679611.}
 
\author{Filip Mazowiecki}
{University of Warsaw, Poland}
{f.mazowiecki@mimuw.edu.pl}
{}
{Supported by the ERC grant INFSYS, agreement no. 950398.}

\author{Lia Sch\"utze}
{Max Planck Institute for Software Systems (MPI-SWS)}
{lschuetze@mpi-sws.org}
{0000-0003-4002-5491}
{}

\author{Henry Sinclair-Banks}
{Centre for Discrete Mathematics and its Applications (DIMAP) \& Department of Computer Science, University of Warwick, Coventry, UK \and \url{http://henry.sinclair-banks.com}}
{h.sinclair-banks@warwick.ac.uk}
{https://orcid.org/0000-0003-1653-4069}
{Supported by EPSRC Standard Research Studentship (DTP), grant number EP/T5179X/1.}

\author{Karol W\k{e}grzycki}
{Saarland University and Max Planck Institute for Informatics, Saarbr\"ucken, Germany}
{wegrzycki@cs.uni-saarland.de}
{https://orcid.org/0000-0001-9746-5733}
{Supported by the project TIPEA that has received funding from the European Research Council (ERC) under the European Unions Horizon 2020 research and innovation programme (grant agreement No. 850979).}

\authorrunning{M. K\"{u}nnemann, F. Mazowiecki, L. Sch\"{u}tze, H. Sinclair-Banks, and K. W\k{e}grzycki}


\begin{CCSXML}
    <ccs2012>
    <concept>
    <concept_id>10003752.10003753</concept_id>
    <concept_desc>Theory of computation~Models of computation</concept_desc>
    <concept_significance>500</concept_significance>
    </concept>
    </ccs2012>
\end{CCSXML}

\ccsdesc[500]{Theory of computation~Models of computation}

\keywords{
    Vector Addition System, 
    Coverability, 
    Reachability, 
    Fine-Grained Complexity, 
    Exponential Time Hypothesis, 
    $k$-Cycle Hypothesis, 
    Hyperclique Hypothesis
}

\acknowledgements{
    We would like to thank our anonymous reviewers for their comments, for their time, and especially for highlighting a piece of related work~\cite{KoppenhagenM00}.
}

\nolinenumbers

\begin{document}

\maketitle

\begin{abstract}
    Seminal results establish that the coverability problem for Vector Addition Systems with States (VASS) is in \class{EXPSPACE} (Rackoff, '78) and is \class{EXPSPACE}-hard already under unary encodings (Lipton, '76).
More precisely, Rosier and Yen later utilise Rackoff's bounding technique to show that if coverability holds then there is a run of length at most $n^{2^{\Oh(d \log d)}}$, where $d$ is the dimension and $n$ is the size of the given unary VASS.
Earlier, Lipton showed that there exist instances of coverability in $d$-dimensional unary VASS that are only witnessed by runs of length at least $n^{2^{\Omega(d)}}$.
Our first result closes this gap.
We improve the upper bound by removing the twice-exponentiated $\log(d)$ factor, thus matching Lipton's lower bound.
This closes the corresponding gap for the exact space required to decide coverability.
This also yields a deterministic $n^{2^{\Oh(d)}}$-time algorithm for coverability.
Our second result is a matching lower bound, that there does not exist a deterministic $n^{2^{o(d)}}$-time algorithm, conditioned upon the Exponential Time Hypothesis.

When analysing coverability, a standard proof technique is to consider VASS with bounded counters.
Bounded VASS make for an interesting and popular model due to strong connections with timed automata.
Withal, we study a natural setting where the counter bound is linear in the size of the VASS.
Here the trivial exhaustive search algorithm runs in $\Oh(n^{d+1})$-time.
We give evidence to this being near-optimal.
We prove that in dimension one this trivial algorithm is conditionally optimal, by showing that $n^{2-o(1)}$-time is required under the $k$-cycle hypothesis.
In general fixed dimension $d$, we show that $n^{d-2-o(1)}$-time is required under the 3-uniform hyperclique hypothesis.
    \begin{picture}(0,0)
        \put(-90,-240) {\hbox{\includegraphics[width=40px]{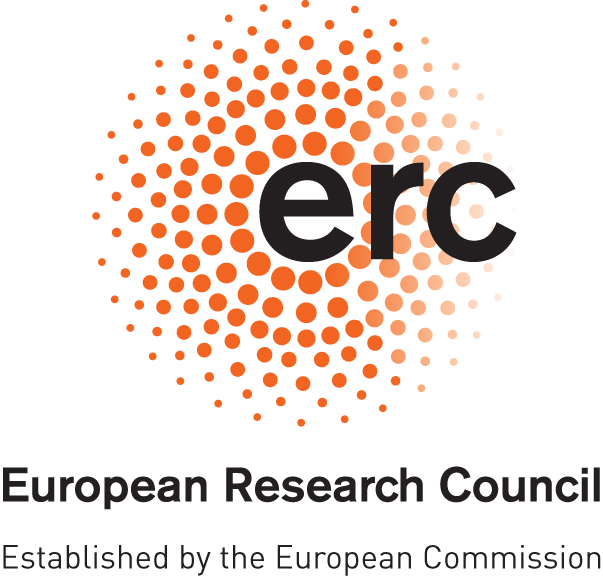}}}
        \put(-45,-260) {\hbox{\includegraphics[width=60px]{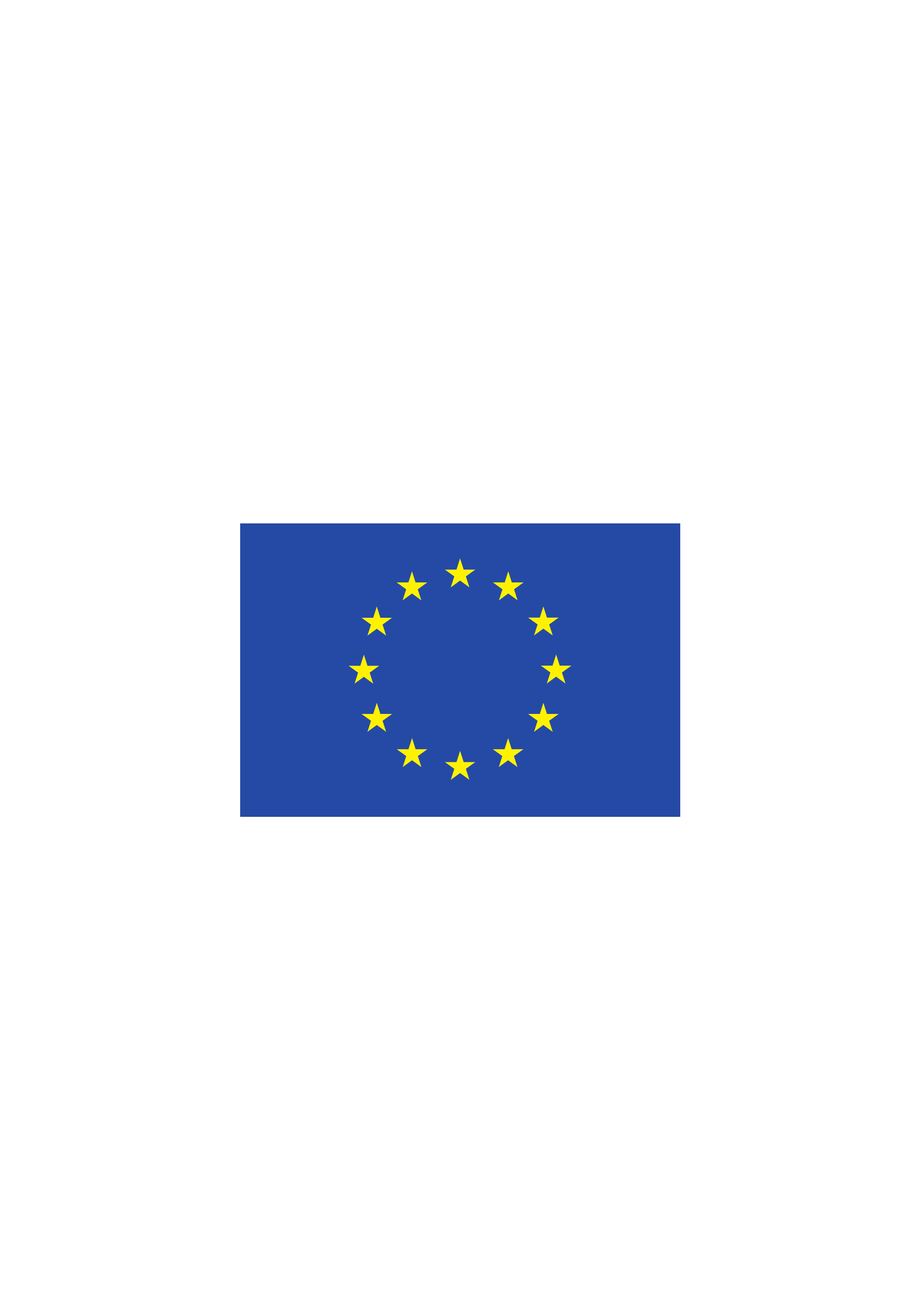}}}
    \end{picture}
\end{abstract}

\section{Introduction}
\label{sec:introduction}

Vector Addition Systems with States (VASS) are a popular model of concurrency with a number of applications in database theory~\cite{BojanczykDMSS11}, business processes~\cite{Aalst97}, and more (see the survey~\cite{Schmitz16}).
A $d$-dimensional VASS ($d$-VASS) consists of a finite automaton equipped with $d$ non-negative valued counters that can be updated by transitions.
A configuration in a $d$-VASS consists of a state and a $d$-dimensional vector over the naturals.
One of the central decision problems for VASS is the \emph{coverability problem}, that asks whether there is a run from a given initial configuration to some configuration with at least the counter values of a given target configuration. 
Coverability finds application in the verification of safety conditions, which often equate to whether or not a particular state can be reached without any precise counter values~\cite{ComonJ98, GantyM12}.
Roughly speaking, one can use VASS as a modest model for concurrent systems where the dimension corresponds with the number of locations a process can be in and each counter value corresponds with the number of processes in a particular location~\cite{EsparzaLMMN14,GermanS92}.

In 1978, Rackoff~\cite{Rackoff78} showed that coverability is in \class{EXPSPACE}, by proving that if coverability holds then there exists a run of double-exponential length.
Following, Rosier and Yen~\cite{RosierY86} analysed and discussed Rackoff's ideas in more detail and argued that if a  coverability holds then it is witnessed by a run of length at most $n^{2^{\Oh(d \log d)}}$, where $n$ is the size of the given unary encoded $d$-VASS.
Furthermore, this yields a $2^{\Oh(d\log{d})}\cdot\log(n)$-space algorithm for coverability.
Prior to this in 1976, Lipton~\cite{Lipton76} proved that coverability is \class{EXPSPACE}-hard even when VASS is encoded in unary, by constructing an instance of coverability witnessed only by a run of double-exponential length $n^{2^{\Omega(d)}}$.
Rosier and Yen~\cite{RosierY86} also presented a proof that generalises Lipton's constructions to show that $2^{\Omega(d)}\cdot\log(n)$-space is required for coverability.
Although this problem is \class{EXPSPACE}-complete in terms of classical complexity, a gap was left open for the exact space needed for coverability~{\cite[Section 1]{RosierY86}}.
By using an approach akin to Rackoff's argument, we close this thirty-eight-year-old gap by improving the upper bound to match Lipton's lower bound.

\begin{enumerate}[Result 1:]
	\item If coverability holds then there exists a run of length at most $n^{2^{\Oh(d)}}$ (Theorem~\ref{thm:min-run}).
	Accordingly, we obtain an optimal $2^{\Oh(d)}\cdot\log(n)$-space algorithm that decides coverability (Corollary~\ref{cor:coverability-algorithm}).
\end{enumerate}

Our bound also implies the existence of a deterministic $n^{2^{\Oh(d)}}$-time algorithm for coverability.
We complement this with a matching lower bound on the deterministic running time that is conditioned upon the Exponential Time Hypothesis (ETH).

\begin{enumerate}[Result 2:]
	\item Under ETH, there is no deterministic $n^{2^{o(d)}}$-time algorithm deciding coverability in unary $d$-VASS (Theorem~\ref{thm:unary-kvass-coverability-lb}).
\end{enumerate}

While our results establish a fast-increasing, conditionally optimal exponent of $2^{\Theta(d)}$ in the time complexity of the coverability problem, they rely on careful constructions that enforce the observation of large counter values.
In certain settings, however, it is natural to instead consider a restricted version of coverability, where all counter values remain \emph{bounded}. This yields one of the simplest models, fixed-dimension bounded unary VASS, for which we obtain even tighter results.
Decision problems for $B$-bounded VASS, where $B$ forms part of the input, have been studied due to their strong connections to timed automata~\cite{HaaseOW12,FearnleyJ13,MazowieckiP19}.
We consider linearly-bounded unary VASS, that is when the maximum counter value is bounded above by a constant multiple of the size of the VASS.
Interestingly, coverability and reachability are equivalent in linearly-bounded unary VASS.
The trivial algorithm that employs depth-first search on the space of configurations runs in $\Oh(n^{d+1})$-time for both coverability and reachability.
We provide evidence that the trivial algorithm is optimal.

\begin{enumerate}[Result 3:]
	\item Reachability in linearly-bounded unary 1-VASS requires $n^{2-o(1)}$-time, subject to the $k$-cycle hypothesis (Theorem~\ref{thm:k-cycle-to-vass}).
\end{enumerate}

This effectively demonstrates that the trivial algorithm is optimal in the one-dimensional case.
For the case of large dimensions, we show that the trivial algorithm only differs from an optimal deterministic-time algorithm by at most an $n^{3+o(1)}$-time factor.

\begin{enumerate}[Result 4:]
	\item Reachability in linearly-bounded unary $d$-VASS requires $n^{d-2-o(1)}$-time, subject to the 3-uniform $k$-hyperclique hypothesis (Theorem~\ref{thm:bounded-reachability}).
\end{enumerate}

Broadly speaking, these results add a time complexity perspective to the already known space complexity, that is for any fixed dimension $d$, coverability in unary $d$-VASS is \class{NL}-complete~\cite{Rackoff78}.

\subparagraph*{Organisation and Overview}
Section~\ref{sec:space-upper-bound} contains our first main result, the improved upper bound on the space required for coverability.
Most notably, in Theorem~\ref{thm:min-run} we show that if coverability holds then there exists a run of length at most $n^{2^{\Oh(d)}}$.
Then, in Corollary~\ref{cor:coverability-algorithm} we are able to obtain a non-deterministic $2^{\Oh(d)}\cdot\log(n)$-space algorithm and a deterministic $n^{2^{\Oh(d)}}$-time algorithm for coverability.
In much of the same way as Rackoff, we proceed by induction on the dimension.
The difference is in the inductive step; Rackoff's inductive hypothesis dealt with a case where all counters are bounded by the same well-chosen value.
Intuitively speaking, the configurations are bounded within a $d$-hypercube. This turns out to be suboptimal. 
This is due to the fact that the volume of a $d$-hypercube with sides of length $\ell$ is $\ell^d$; unrolling the induction steps gives a bound of roughly $n^{d \cdot (d-1) \cdot \ldots \cdot 1} = n^{d!} = n^{2^{\Oh(d \log d)}}$, hence the twice-exponentiated $\log(d)$ factor.
The key ingredient in our proof is to replace the $d$-hypercubes with a collection of objects with greatly reduced volume, thus reducing the number of configurations in a run witnessing coverability.

Section~\ref{sec:time-lower-bound} contains our second main result, the matching lower bound on the time required for coverability that is conditioned upon ETH.
In Lemma~\ref{lem:k-clique-reduction}, we first reduce from finding a $k$-clique in a graph to an instance of coverability in bounded unary 2-VASS with zero-tests.
Then, via Lemma~\ref{lem:rosier-yen}, we implement the aforementioned technique of Rosier and Yen to, when there is a counter bound, remove the zero-tests at the cost of increasing to a $d$-dimensional unary VASS.
Then, in Theorem~\ref{thm:unary-kvass-coverability-lb} we are able to conclude, by setting $k=2^d$, that if ETH holds, then there is no deterministic $n^{2^{o(d)}}$-time algorithms for coverability in unary $d$-VASS.
This is because ETH implies that there is no $f(k) \cdot n^{o(k)}$-time algorithm for finding a $k$-clique in a graph with $n$ vertices (Theorem~\ref{thm:k-clique}).

Section~\ref{sec:bounded-vass} contains our other results where we study bounded fixed dimension unary VASS.
Firstly, Theorem~\ref{thm:k-cycle-to-vass} states that under the $k$-cycle hypothesis (Hypothesis~\ref{hyp:k-cycle-hypothesis}), there does not exist a deterministic $n^{2-o(1)}$-time algorithm deciding reachability in linearly-bounded unary 1-VASS.
Further, we conclude in Corollary~\ref{cor:unary-2vass-coverability}, if the $k$-cycle hypothesis is assumed then there does not exist a deterministic $n^{2-o(1)}$-time algorithm for coverability in (not bounded) unary 2-VASS.
Following, we prove Theorem~\ref{thm:bounded-reachability}, that claims there does not exist a deterministic $n^{d-o(1)}$-time algorithm reachability in linearly-bounded unary $(d+2)$-VASS under the 3-uniform $k$-hyperclique hypothesis (Hypothesis~\ref{hyp:hyperclique}).
We achieve this with two components.
First, in Lemma~\ref{lem:hyperclique-reduction}, we first reduce from finding a $4d$-hyperclique to an instance of reachability in a bounded unary $(d+1)$-VASS with a fixed number of zero-tests.
Second, via Lemma~\ref{lem:czerwinski-orlikowski}, we implement the newly developed controlling counter technique of Czerwi\'nski and Orlikowski~\cite{CzerwinskiO21} to remove the fixed number of zero-tests at the cost of increasing the dimension by one.

\subparagraph{Related Work}
The coverability problem for VASS has plenty of structure that still receives active attention. 
The set of configurations from which the target can be covered is upwards-closed, meaning that coverability still holds if the initial counter values are increased.
An alternative approach, the \emph{backwards algorithm} for coverability, relies on this phenomenon.
Starting from the target configuration, one computes the set of configurations from which it can be covered~\cite{AbdullaCJT00}. 
Thanks to the upwards-closed property, it suffices to maintain the collection of minimal configurations. 
The backwards algorithm terminates due to Dickson's lemma, however, using Rackoff's bound one can show it runs in double-exponential time~\cite{BozzelliG11}. 
This technique has been deeply analysed for coverability in VASS and some extensions~\cite{FigueiraFSS11,LazicS21}. 
Despite high complexity, there are many implementations of coverability relying on the backwards algorithm that work well in practice. 
Intuitively, the idea is to prune the set of configurations, using relaxations that can be efficiently implemented in SMT solvers~\cite{EsparzaLMMN14,BlondinFHH16,BlondinHO21}. 

Another central decision problem for VASS is the \emph{reachability problem}, asking whether there is a run from a given initial configuration to a given target configuration.
Reachability is a provably harder problem. In essence, reachability differs from coverability by allowing one zero-test to each counter.
Counter machines, well-known to be equivalent to Turing machines~\cite{Minsky67}, can be seen as VASS with the ability to arbitrarily zero-test counters; coverability and reachability are equivalent here and are undecidable.
In 1981, Mayr proved that reachability in VASS is decidable~\cite{Mayr81}, making VASS one of the richest decidable variants of counter machines.
Only recently, after decades of work, has the complexity of reachability in VASS been determined to be Ackermann-complete~\cite{LerouxS19,CzerwinskiO21,Leroux21}.
A widespread technique for obtaining lower bounds for coverability and reachability problems in VASS is to simulate counter machines with some restrictions.
Our overall approach to obtaining lower bounds follows suit; we first reduce finding cliques in graphs, finding cycles in graphs, and finding hypercliques in hypergraphs to various intermediate instances of coverability in VASS with extra properties such as bounded counters or a fixed number of zero-tests.
These VASS, that are counter machines restricted in some way, are then simulated by standard higher-dimensional VASS.
Such simulations are brought about by the two previously developed techniques.
Rosier and Yen leverage Lipton's construction to obtain VASS that can simulate counter machines with bounded counters~\cite{RosierY86}.
Czerwi\'nski and Orlikowski have shown that the presence of an additional counter in a VASS, with carefully chosen transition effects and reachability condition, can be used to implicitly perform a limited number of zero-tests~\cite{CzerwinskiO21}.

Recently, some work has been dedicated to the coverability problem for low-dimensional VASS~\cite{AlmagorCPSW20,MazowickiSW23}.
Furthermore, reachability in low-dimensional VASS has been given plenty of attention, in particular for 1-VASS~\cite{ValiantP75,HaaseKOW09} and for 2-VASS~\cite{HopcroftP79,BlondinEFGHLMT21}.
In the restricted class of flat VASS, other fixed dimensions have also been studied~\cite{Czerwinski0LLM20,CzerwinskiO22}.

Another studied variant, \emph{bidirected} VASS, has the property that for every transition $(p, \vec{x}, q)$, the reverse transition $(q, -\vec{x}, p)$ is also present.
The reachability problem in bidirected VASS is equivalent to the uniform word problem in commutative semigroups, both of which are \class{EXPSPACE}-complete~\cite{MayrM82}; not to be confused with the reversible reachability problem in general VASS which is also \class{EXPSPACE}-complete~\cite{Leroux13}.
In 1982, Meyer and Mayr listed an open problem that stated, in terms of commutative semigroups, the best known upper bound for coverability in general VASS~\cite{Rackoff78}, the best known lower bound for coverability in bidirected VASS~\cite{Lipton76}, and asked for improvements to these bounds~{\cite[Section 8, Problem 3]{MayrM82}}.
Subsequently, Rosier and Yen refined the upper bound for coverability in general VASS to $2^{\Oh(d\log{d})}\cdot\log(n)$-space~\cite{RosierY86}.
Finally, Koppenhagen and Mayr showed that the coverability problem in bidirected VASS can be decided in $2^{\Oh(n)}$-space~\cite{KoppenhagenM00}, matching the lower bound.

\section{Preliminaries}
\label{sec:preliminaries}

We use bold font for vectors.
We index the $i$-th component of a vector $\vec{v}$ by writing $\vec{v}[i]$.
Given two vectors $\vec{u}, \vec{v} \in \Z^d$ we write $\vec{u} \leq \vec{v}$ if $\vec{u}[i] \leq \vec{v}[i]$ for each $1 \leq i \leq d$. 
For every $1 \leq i \leq d$, we write $\vec{e}_i \in \Z^d$ to represent the $i$-th standard basis vector that has $\vec{e}_i[i] = 1$ and $\vec{e}_i[j] = 0$ for all $j \neq i$.
Given a vector $\vec{v} \in \Z^d$ we define $\norm{\vec{v}} = \max\set{1, \abs{\vec{v}[1]}, \ldots, \abs{\vec{v}[d]}}$.
Throughout, we assume that $\log$ has base $2$.
We use $\poly{n}$ to denote $n^{\Oh(1)}$.

A \emph{$d$-dimensional Vector Addition System with States} ($d$-VASS) $\Vv = (Q, T)$ consists of a non-empty finite set of states $Q$ and a non-empty set of transitions $T \subseteq Q \times \Z^d \times Q$.
A \emph{configuration} of a $d$-VASS is a pair $(q, \vec{v}) \in Q \times \N^d$ consisting of the current state $q$ and current counter values $\vec{v}$, denoted $q(\vec{v})$.
Given two configurations $\config{p}{u}$, $\config{q}{v}$, we write $\config{p}{u} \trans{} \config{q}{v}$ if there exists $t = (p, \vec{x}, q) \in T$ where $\vec{x} = \vec{v} - \vec{u}$. 
We may refer to $\vec{x}$ as the \emph{update} of a transition and  may also write $\config{p}{v} \trans{t} \config{q}{w}$ to emphasise the transition $t$ taken.

A \emph{path} in a VASS is a (possibly empty) sequence of transitions $((p_1, \vec{x}_1, q_1), \ldots, (p_\ell, \vec{x}_\ell, q_\ell))$, where $(p_i, \vec{x}_i, q_i) \in T$ for all $1 \leq i \leq \ell$ and such that the start and end states of consecutive transitions match $q_i = p_{i+1}$ for all $1 \leq i \leq \ell-1$.
A \emph{run} $\pi$ in a VASS is a sequence of configurations $\pi = (\configuration{q_0}{v}{_0}, \ldots, \configuration{q_\ell}{v}{_\ell})$ such that $\configuration{q_i}{v}{_i} \trans{} \configuration{q_{i+1}}{v}{_{i+1}}$ for all $1 \leq i \leq \ell-1$. 
We denote the length of the run by $\len{\pi} = \ell+1$. 
If there is such a run $\pi$, we can write $\configuration{q_0}{v}{_0} \trans{\pi} \configuration{q_\ell}{v}{_\ell}$.
We may also write $\config{p}{u} \reach \config{q}{v}$ if there exists a run from $\config{p}{u}$ to $\config{q}{v}$.
The \emph{underlying path} of a run $\pi$ is sequence of transitions $(t_1, \ldots, t_\ell)$ taken between each of the configurations in $\pi$, so $\configuration{q_i}{v}{_i} \trans{t_{i+1}} \configuration{q_{i+1}}{v}{_{i+1}}$ for all $0 \leq i \leq \ell-1$. 

A \emph{$B$-bounded} $d$-VASS, in short $(B, d)$-VASS, is given as an integer upper bound on the counter values $B \in \N$ and $d$-VASS $\Vv$.
A configuration in a $(B, d)$-VASS is a pair $\config{q}{v} \in Q \times \{0, \ldots, B\}^d$. 
The notions of paths and runs in bounded VASS remain the same as for VASS, but are accordingly adapted for the appropriate bounded configurations.
We note that one should think that $B$ forms part of the problem statement,  not the input, as it will be given implicitly by a function depending on the size of the VASS.
For example, we later consider \emph{linearly-bounded} $d$-VASS, that represent occasions where $B = \Oh(\norm{\Vv})$.

We do allow for zero-dimensional VASS, that is VASS with no counters, which can be seen as just directed graphs.
A \emph{hypergraph} is a generalisation of the graph. 
Formally, a hypergraph is a tuple $H = (V, E)$ where $V$ is a set of vertices and $E$ is a collection of non-empty subsets of $V$ called \emph{hyperedges}. 
For an integer $\mu$, a hypergraph is \emph{$\mu$-uniform} if each hyperedge has cardinality $\mu$. 
Note that a 2-uniform hypergraph is a standard graph.

We study the complexity of the \emph{coverability problem}. 
An instance $(\Vv, \config{p}{u}, \config{q}{v})$ of coverability asks whether there is a run in the given VASS $\Vv$ from the given initial configuration $\config{p}{u}$ to a configuration $\covfig{q}{v}$ with at least the counter values $\vec{v}' \geq \vec{v}$ of the given target configuration $\config{q}{v}$.
At times, we also consider the \emph{reachability problem} that additionally requires $\vec{v}' = \vec{v}$ so that the target configuration is reached exactly.

To measure the complexity of these problems we need to discuss the encoding used.
In \emph{unary encoding}, a $d$-VASS $\Vv = (Q, T)$ has \emph{size} $\norm{\Vv} = \abs{Q} + \sum_{(p, \vec{x}, q) \in T} \norm{\vec{x}}$.
We define a \emph{unary} $d$-VASS $\Uu = (Q', T')$ to have restricted transitions $T' \subseteq Q' \times \set{-1, 0, 1}^d \times Q'$, the size is therefore $\norm{\Uu} = \abs{Q'} + \abs{T'}$.
For any unary encoded $d$-VASS $\Vv$ there exists an equivalent unary $d$-VASS $\Uu$ such that $\norm{\Uu} = \norm{\Vv}$.
An instance $(\Vv, \config{p}{u}, \config{q}{v})$ of coverability has size $n = \norm{\Vv} + \norm{\vec{u}} + \norm{\vec{v}}$.
An equal in size, equivalent instance $(\Vv', p'(\vec{0}), q'(\vec{0}))$ of coverability exists; consider adding an initial transition $(p', \vec{s}, p)$ and a final transition $(q, -\vec{t}, q')$.

It is well known that for $d$-VASS, the coverability problem can be reduced to the reachability problem. 
Indeed, for an instance $(\Vv, \config{p}{u}, \config{q}{v})$ of coverability, define $\Vv' = (Q, T')$ that has additional decremental transitions at the target states $T' = T \cup \set{(q, \vec{e}_i, q) : 1 \leq i \leq d}$.
It is clear that $\config{p}{u} \reach \covfig{q}{v}$, for some $\vec{v}' \geq \vec{v}$, in $\Vv$ if and only if $\config{p}{u} \reach \config{q}{v}$ in $\Vv'$. 
\begin{lemma}[folklore]\label{lem:cov_to_reach}
    Let $(\Vv$, $\config{p}{u}$, $\config{q}{v})$ be an instance of coverability. 
    It can be reduced to an instance of reachability $(\Vv'$, $\config{p}{u}$, $\config{q}{v})$ such that $\norm{\Vv'} = \Oh(\norm{\Vv})$.
\end{lemma}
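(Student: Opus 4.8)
The plan is to verify the construction already proposed above, namely $\Vv' = (Q, T')$ with $T' = T \cup \set{(q, -\vec{e}_i, q) : 1 \leq i \leq d}$, by establishing the equivalence of the two instances in both directions and then bounding $\norm{\Vv'}$. I write $c_0 \trans{} c_1 \trans{} \cdots \trans{} c_m$ for runs and call a transition of the form $(q, -\vec{e}_i, q)$ a \emph{decrement}; note that every decrement is a self-loop at the target state $q$.

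For the forward direction, suppose coverability holds in $\Vv$, so there is a run $\config{p}{u} \reach \covfig{q}{v}$ with $\vec{v}' \geq \vec{v}$. Since each counter satisfies $\vec{v}'[i] \geq \vec{v}[i] \geq 0$, I would append, for every coordinate $i$, exactly $\vec{v}'[i] - \vec{v}[i]$ copies of the decrement $(q, -\vec{e}_i, q)$. Each intermediate configuration stays non-negative because we only ever lower a counter that is still strictly above its target, and the resulting run ends exactly in $\config{q}{v}$; hence reachability holds in $\Vv'$.

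The backward direction is where the real content lies. Suppose $\config{p}{u} \reach \config{q}{v}$ in $\Vv'$ along some run; the difficulty is that decrements may be interleaved arbitrarily with transitions of $T$, since $\Vv$ may revisit $q$ many times. I would delete every decrement from the underlying path. Because each decrement is a self-loop at $q$, deletion preserves state continuity, so what remains is a genuine path of $\Vv$ from $p$ to $q$. Running it from $\config{p}{u}$, a monotonicity argument shows it is still a valid run and that it witnesses coverability: removing a decrement $(q, -\vec{e}_i, q)$ adds $+\vec{e}_i$ to every subsequent configuration, so at each point the counter vector of the pruned run dominates that of the original run in $\Vv'$, which was non-negative; hence the pruned run never goes negative. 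Moreover, if $D$ denotes the sum of $\vec{e}_i$ over all deleted decrements, the pruned run ends in $\config{q}{v}$ shifted up by $D$, that is in $q(\vec{v} + D)$ with $\vec{v} + D \geq \vec{v}$. This is exactly a covering run in $\Vv$.

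For the size bound, $T'$ adds $d$ self-loops, each with update norm $\norm{-\vec{e}_i} = 1$, so $\norm{\Vv'} = \norm{\Vv} + d = \Oh(\norm{\Vv})$, using $d = \Oh(\norm{\Vv})$. The main obstacle is the backward direction: one must argue both that pruning the decrements preserves run validity and that the endpoint still dominates $\vec{v}$, which is precisely what the pointwise monotonicity of VASS runs under removal of decrements delivers.
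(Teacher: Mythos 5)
Your proposal is correct and uses exactly the construction the paper gives (adding decrementing self-loops $(q,-\vec{e}_i,q)$ at the target state; the paper's displayed $\vec{e}_i$ is evidently a sign typo, as the surrounding text calls them decremental), merely filling in the monotonicity details that the paper dismisses as ``clear.'' No substantive difference.
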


A \emph{$d$-dimensional Vector Addition System} ($d$-VAS) $\Vv$ is a system without states, consisting only of a non-empty collection of transitions $\Vv \subseteq \Z^d$.
All definitions, notations, and problems carry over for VAS except that, for simplicity, we drop the states across the board. 
For example, a configuration in a VAS is just a vector $\vec{v} \in \N^d$.
Another well-known result from the seventies by Hopcroft and Pansiot, one can simulate the states of a VASS at the cost of three extra dimensions in a VAS~\cite{HopcroftP79}.
For clarity, the VAS obtained has an equivalent reachability relation between configurations; a configuration $q(\vec{x})$ in the original VASS corresponds with a configuration $(\vec{x}, a, b, c)$ in the VAS, where $a$, $b$, and $c$ represent the state $q$.
\begin{lemma}[{\cite[Lemma 2.1]{HopcroftP79}}]\label{lem:no_states}
    A $d$-VASS $\Vv$ can be simulated by $(d+3)$-VAS
    $\mathcal{\Vv'}$ such that $\norm{\mathcal{\Vv'}} = \poly{\norm{\Vv}}$.
\end{lemma}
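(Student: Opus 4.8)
The plan is to eliminate the states of $\Vv = (Q, T)$ by encoding the current state into three fresh counters, so that every run of the stateless VAS $\Vv'$ faithfully mirrors a run of $\Vv$ and vice versa. Write $Q = \{q_1, \dots, q_m\}$ with $m = \abs{Q}$, and let the configuration $\config{q_i}{x}$ of $\Vv$ correspond to the configuration $(\vec{x}, i, m+1-i, 0)$ of $\Vv'$, where the three extra coordinates play the roles of a state counter $s$, its complement $\bar{s}$, and a lock $z$. Along the \emph{encoded} configurations I will maintain the invariants $s + \bar{s} = m+1$ and $z = 0$, so that the value of $s$ alone pins down the simulated state.

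The key device is that the non-negativity constraint of a VAS lets us test equality through this complementary pair: a transition that adds $-i$ to $s$ and $-(m+1-i)$ to $\bar{s}$ can fire only when $s \geq i$ and $\bar{s} \geq m+1-i$, and since $s + \bar{s} = m+1$ these inequalities can hold simultaneously exactly when $s = i$, that is, exactly when $\Vv$ sits in state $q_i$. I would therefore simulate each transition $(q_i, \vec{y}, q_j) \in T$ in two phases: an \emph{entry} phase that uses this equality test to consume the source state $q_i$ and engage the lock $z$, followed by an \emph{exit} phase that installs the target state $q_j$, applies the update $\vec{y}$ to the first $d$ coordinates, and releases the lock. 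Completeness is then immediate: whenever $\config{q_i}{x} \trans{(q_i, \vec{y}, q_j)} \covfig{q_j}{x}$ with $\vec{x}' = \vec{x} + \vec{y}$, the two phases applied in turn reach the encoding of $\config{q_j}{x'}$ precisely when $\vec{x} + \vec{y} \geq \vec{0}$, which is the firing condition in $\Vv$.

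The main obstacle is soundness: I must rule out \emph{spurious} runs of $\Vv'$ that pass between two encoded configurations without corresponding to any run of $\Vv$. The danger is that the two phases of distinct transitions could be interleaved --- consuming the source of one transition and then installing the target of another --- or that the lock counter $z$ could drift and later be restored to $0$ along an illegitimate path, thereby certifying a state change that $\Vv$ does not permit. This is exactly what the third counter is for: I would design the entry and exit phases, together with the values written to and read from $z$, so that the \emph{only} way to re-establish both invariants $s + \bar{s} = m+1$ and $z = 0$ is to pair each entry phase with the matching exit phase of the same transition. Concretely, I would prove an invariant on the triple $(s, \bar{s}, z)$ guaranteeing that every maximal run segment of $\Vv'$ between two encoded configurations decomposes into such matched entry/exit pairs; projecting away the three extra coordinates then turns it into a genuine run of $\Vv$, yielding both directions of the simulation.

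Finally, the size bound is routine and should be checked only to confirm the statement: there are $\Oh(\abs{T} + \abs{Q})$ transitions in $\Vv'$, each of norm at most $m + 1 + \max_{(p, \vec{y}, q) \in T} \norm{\vec{y}} = \poly{\norm{\Vv}}$, so that $\norm{\Vv'} = \poly{\norm{\Vv}}$ as required.
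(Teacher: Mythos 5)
The paper offers no proof of this statement at all --- it is imported as a black box from Hopcroft and Pansiot --- so the only question is whether your argument stands on its own. Your setup is the right shape and matches the cited construction's starting point: encode state $q_i$ in a complementary pair $(i,\,m+1-i)$ so that non-negativity turns two $\geq$-tests into an equality test, split each transition into an entry vector and an exit vector, and reserve a third counter as a lock. The completeness direction and the size bound are fine.

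The gap is in soundness, and it is not a deferrable detail. You write that you ``would design the entry and exit phases, together with the values written to and read from $z$, so that the \emph{only} way to re-establish both invariants is to pair each entry phase with the matching exit phase of the same transition'' --- but this is exactly where all the difficulty lives, and the mechanism you describe cannot deliver it. A single VAS counter supports only lower-bound tests. After the entry vector of a transition $t$ fires, the equality test forces the two state counters to $(0,0)$, so the identity of $t$ can live only in the value $c_t$ deposited in $z$ (and if the entry vector instead tried to deposit identifying information back into $s$ or $\bar{s}$, it would weaken its own equality test). From $(0,0,c_t)$ the exit vector of \emph{any} transition $t'$ with $c_{t'} \leq c_t$ is enabled: its components on the two state counters are the non-negative installation values for the target state, and its only requirement on $z$ is $z \geq c_{t'}$. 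Such a crossover installs the wrong target state and/or applies the wrong update, and no choice of the constants $c_t$ excludes it, since distinct positive integers are always comparable. Excluding these interleavings --- or proving that no sequence of them can ever restore all invariants at a configuration that is not legitimately reachable --- is the entire content of Hopcroft and Pansiot's Lemma~2.1 and requires a genuinely global argument about runs between encoded configurations, not a per-step lock. As written, the crucial invariant is asserted rather than established, and the most direct reading of your design is false.
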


\section{Improved Bounds on the Maximum Counter Value}
\label{sec:space-upper-bound}

This section is devoted to our improvement of the seminal result of Rackoff.
Throughout, we fix our attention to the arbitrary instance $(\Vv, \config{p}{s}, \config{q}{t})$ of the coverability problem in a $d$-VASS $\Vv = (Q,T)$ from the initial configuration $\config{p}{s}$ to a configuration $\covfig{q}{t}$ with at least the counter values of the target configuration $\config{q}{t}$.
We denote $n = \norm{\Vv} + \norm{\vec{s}} + \norm{\vec{t}}$. 
Informally, $n$ may as well be the number of states plus the absolute value of the greatest update on any transition, for these differences can be subsumed by the second exponent in our following upper bounds. 
The following two theorems follow from Rackoff's technique and subsequent work by Rosier and Yen, in particular see~{\cite[Lemma 3.4 and Theorem 3.5]{Rackoff78}} and~{\cite[Theorem 2.1 and Lemma 2.2]{RosierY86}}.

\begin{theorem}[Corollary of~{\cite[Lemma 3.4]{Rackoff78}} and~{\cite[Theorem 2.1]{RosierY86}}]\label{thm:rackoff-runs}
    Suppose $\config{p}{s} \reach \covfig{q}{t}$ for some $\vec{t}' \ge \vec{t}$.
    Then there exists a run $\pi$ such that $\config{p}{s} \trans{\pi} \config{q}{t''}$ for some $\vec{t}'' \geq \vec{t}$ and $\len{\pi} \leq n^{2^{\Oh(d\log d)}}$. 
\end{theorem}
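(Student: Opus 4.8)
The plan is to follow Rackoff's induction on the number of counters that are required to stay nonnegative, establishing a recursive length bound that unrolls to $n^{2^{\Oh(d\log d)}}$. For $0 \le i \le d$, I would call a run \emph{$i$-covering} from a configuration $c$ if it ends in the target state $q$, keeps coordinates $1,\dots,i$ nonnegative throughout, and its final counter vector satisfies $\vec{t}$ on coordinates $1,\dots,i$; coordinates $i+1,\dots,d$ are relaxed to range over $\Z$ and are ignored both for nonnegativity and at the end. Let $f(i)$ be the supremum, over all configurations $c$ admitting an $i$-covering run, of the length of the shortest such run. The true shortest covering run for the given instance is $d$-covering, so it suffices to prove $f(d) \le n^{2^{\Oh(d\log d)}}$; the base case $f(0) \le \abs{Q}$ is immediate, since a $0$-covering run only has to reach the target state along a simple path in the underlying graph.

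For the inductive step I would fix a shortest $i$-covering run $\pi$ from some $c$ and set the threshold $B := n\cdot f(i-1) + n$, recalling that every update changes a counter by at most $n$ in absolute value and that $\norm{\vec{t}} \le n$. The case analysis is on whether some constrained coordinate grows large. In the bounded case, where coordinates $1,\dots,i$ all stay in $\set{0,\dots,B-1}$ along $\pi$, I would argue that $\pi$ cannot repeat a pair (state, values of coordinates $1,\dots,i$): deleting the intervening loop preserves the state and coordinates $1,\dots,i$ at every later position (the loop has zero net effect on these coordinates), hence preserves both nonnegativity and the covering condition, which only concern coordinates $1,\dots,i$, contradicting minimality; therefore $\len{\pi} \le \abs{Q}\cdot B^i$. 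In the unbounded case, reorder so that coordinate $i$ is the first constrained coordinate to reach $B$, at time $\tau$. The prefix up to $\tau$ keeps coordinates $1,\dots,i$ below $B$, so by the same repetition argument it can be shortened to length at most $\abs{Q}\cdot B^i$ without changing the state or coordinates $1,\dots,i$ at time $\tau$ (in particular the large value of coordinate $i$ is preserved). The suffix from $\tau$ is $i$-covering, but relaxing coordinate $i$ to $\Z$ makes it $(i-1)$-covering, so by induction it can be replaced by one of length at most $f(i-1)$; since coordinate $i$ starts at value $\ge B$ and changes by at least $-n$ per step over at most $f(i-1)$ steps, it remains nonnegative and finishes at value $\ge n \ge \vec{t}[i]$, so this short suffix is in fact a valid $i$-covering run. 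Concatenation gives $f(i) \le \abs{Q}\cdot B^i + f(i-1)$.

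Substituting $B \le 2n\,f(i-1)$ and $\abs{Q} \le n$ yields $f(i) \le (2n)^{i+2} f(i-1)^i$. Taking logarithms base $2n$, the quantity $g(i) := \log_{2n} f(i)$ obeys $g(i) \le i\cdot g(i-1) + (i+2)$ with $g(0) = \Oh(1)$; dividing by $i!$ and telescoping, the multiplicative $i!$ factor dominates, so $g(d) = \Oh(d!) = 2^{\Oh(d\log d)}$ and hence $f(d) \le (2n)^{2^{\Oh(d\log d)}} = n^{2^{\Oh(d\log d)}}$, as required (the $+1$ in the $\len{\pi} = \ell+1$ convention is absorbed). I expect the main obstacle to be the bounded-case shortcutting argument: one must pin down exactly the right notion of $i$-covering -- ignoring the unconstrained coordinates entirely at the end -- so that deleting a loop balanced on coordinates $1,\dots,i$ genuinely preserves the covering property, and one must verify in the unbounded case that shortening the prefix does not disturb the large value of coordinate $i$ on which the suffix argument relies.
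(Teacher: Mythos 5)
Your proposal is correct and is essentially the argument the paper relies on: Theorem~\ref{thm:rackoff-runs} is stated as a corollary of Rackoff's Lemma~3.4 and Rosier--Yen's Theorem~2.1, and your induction on the number of constrained coordinates, with the threshold $B = n\cdot f(i-1)+n$, the bounded-case shortcutting of repeated $(\text{state}, \text{coords } 1,\dots,i)$ pairs, and the recursion $f(i)\le \abs{Q}\cdot B^i+f(i-1)$ unrolling to $n^{2^{\Oh(d\log d)}}$, is precisely that cited proof. (Note this is distinct from the paper's own contribution in Theorem~\ref{thm:min-run}, which replaces the hypercube counting by thin configurations to remove the $\log d$ factor.)
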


\begin{theorem}[cf.~{\cite[Theorem 3.5]{Rackoff78}}]
    \label{thm:rackoff-algorithm}
	For a given $d$-VASS $\Vv$, integer $\ell$, and two configurations $p(\vec{s})$ and $q(\vec{t})$, there is an algorithm that determines the existence of a run $\pi$ of length $\len{\pi} \leq \ell$ that witnesses coverability, so $\config{p}{s} \trans{\pi} \covfig{q}{t}$ for some $\vec{t}' \geq \vec{t}$. 
	The algorithm can be implemented to run in non-deterministic $\Oh(d\log(n\cdot\ell))$-space or deterministic $2^{\Oh(d\log(n\cdot\ell))}$-time.
\end{theorem}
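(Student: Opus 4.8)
The plan is to exploit the fact that restricting attention to runs of length at most $\ell$ automatically bounds every counter value, reducing the problem to a search over a finite, explicitly bounded, set of configurations. First I would establish this counter bound. Starting from $\config{p}{s}$ with $\norm{\vec{s}} \le n$, every transition $(p, \vec{x}, q) \in T$ changes each counter by at most $\norm{\vec{x}} \le \norm{\Vv} \le n$ in absolute value. A run of length at most $\ell$ takes at most $\ell$ transitions, so every counter stays within $\set{0, \ldots, n(\ell+1)}$. Hence each counter fits in $\Oh(\log(n\ell))$ bits, and an entire configuration $\config{q}{v}$ (the state, needing $\Oh(\log n)$ bits, together with its $d$ counters) can be stored in $\Oh(d\log(n\ell))$ space.

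For the non-deterministic space bound I would run the natural guess-and-verify procedure. The algorithm maintains the current configuration together with a step counter bounded by $\ell$, costing $\Oh(\log \ell)$ further bits. Beginning at $\config{p}{s}$, it repeatedly guesses a transition whose source matches the current state, checks that applying its update keeps all counters non-negative, performs the update, and increments the step counter; it accepts as soon as it reaches some $\config{q}{v}$ with $\vec{v} \ge \vec{t}$, and rejects once the step counter exceeds $\ell$. The accepting computations are then exactly the covering runs of length at most $\ell$, and by the bound above the total space used is $\Oh(d\log(n\ell))$.

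For the deterministic time bound I would make this search explicit on a layered reachability graph whose vertices are triples $(q, \vec{v}, i)$ with $\vec{v} \in \set{0, \ldots, n(\ell+1)}^d$ and step index $i \in \set{0, \ldots, \ell}$, placing an edge from $(p, \vec{u}, i)$ to $(q, \vec{v}, i+1)$ whenever $\config{p}{u} \trans{} \config{q}{v}$ via a transition of $\Vv$. This graph has $\abs{Q}\cdot(n(\ell+1)+1)^d\cdot(\ell+1) = 2^{\Oh(d\log(n\ell))}$ vertices, so a breadth-first search from $(p, \vec{s}, 0)$ decides in time linear in its size whether some $(q, \vec{v}, i)$ with $\vec{v} \ge \vec{t}$ is reachable, equivalently whether a covering run of length at most $\ell$ exists. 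Alternatively one may simply invoke the generic simulation of an $S$-space non-deterministic machine by a $2^{\Oh(S)}$-time deterministic one. Either way the running time is deterministic $2^{\Oh(d\log(n\ell))}$.

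The one point requiring care, rather than a genuine obstacle, is faithfully enforcing the length restriction: the deterministic search must not accept covering configurations that are reachable only by paths longer than $\ell$ within the bounded region. Tracking the step index explicitly, as the extra coordinate $i$ in the layered graph (equivalently, as a register on the non-deterministic machine's work tape), resolves this cleanly, and it is exactly this register that contributes the $\log \ell$ term absorbed into the $\Oh(d\log(n\ell))$ bounds.
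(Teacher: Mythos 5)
Your proof is correct and follows essentially the same route as the paper's: bound the counters by $\Oh(n\ell)$ along length-$\le\ell$ runs, guess the path on-the-fly in $\Oh(d\log(n\ell))$ non-deterministic space, and determinise by exhaustively searching the $2^{\Oh(d\log(n\ell))}$-size configuration graph. Your explicit step counter (the layered graph) is a slightly more careful treatment of the length restriction than the paper gives, but it does not change the argument.
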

\begin{proof}
	In runs whose length is bounded by $\ell$, the observed counter values are trivially bounded by $n\cdot\ell$.
	Notice that every configuration can be written in $\Oh(d \log(n\cdot\ell))$ space.
    A non-deterministic algorithm can therefore decide coverability by guessing a path on-the-fly by only maintaining the current configuration.
    The algorithm accepts if and only if $\vec{t}$ is covered by the final configuration.

	The second part follows from the standard construction that if a problem can be solved in $S(n)$ non-deterministic space then it can be solved in $2^{\Oh(S(n))}$ deterministic time. 
    Indeed, one can construct the graph of all configurations and check whether there is a path from the initial configuration to the final configuration. 
    Since there are at most $2^{\Oh(d \log(n\cdot\ell))}$ many configurations, this can be completed in $2^{\Oh(d \log(n\cdot\ell))}$-time. 
\end{proof}

Note that Theorem~\ref{thm:rackoff-runs} combined with Theorem~\ref{thm:rackoff-algorithm} yield non-deterministic $2^{\Oh(d\log
d)}$\nobreakdash-space and deterministic $n^{2^{\Oh(d\log(d))}}$-time algorithms for coverability. 
Our result improves this by a $\Oh(\log(d))$ factor in the second exponent.

\begin{theorem}\label{thm:min-run}
    Suppose $\config{p}{s} \reach \covfig{q}{t}$ for some $\vec{t}' \ge \vec{t}$.
    Then there exists a run $\pi$ such that $\config{p}{s} \trans{\pi} \config{q}{t''}$ for some $\vec{t}'' \geq \vec{t}$ and $\len{\pi} \leq n^{2^{\Oh(d)}}$. 
\end{theorem}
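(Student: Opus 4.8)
The plan is to retain Rackoff's induction on the number of non-negativity-constrained counters, but to replace the hypercube that governs his inductive step by a geometric object of much smaller volume. Following the scheme behind Theorem~\ref{thm:rackoff-runs}, for $0 \le i \le d$ I call a run an \emph{$i$-run} if only the counters $1,\dots,i$ are required to stay non-negative while the remaining $d-i$ counters range freely over $\Z$, and I let $f(i)$ denote the worst-case length of a shortest $i$-covering run over all configurations from which one exists. The base case $f(0)$ is polynomial in $n$: with no non-negativity constraints, covering reduces to reachability together with a linear constraint in the underlying graph, solvable by a short, cycle-padded path. In the inductive step one splits a shortest $i$-covering run at the first moment an active counter exceeds a threshold $\Theta_i = \Oh(n \cdot f(i-1))$; past that moment the counter cannot be driven negative within $f(i-1)$ further steps, so it may be released and the suffix replaced, via the induction hypothesis, by a shortest $(i-1)$-covering run. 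Rackoff bounds the complementary prefix---during which all $i$ active counters stay below $\Theta_i$---by the volume $\Theta_i^{\,i}$ of an $i$-dimensional hypercube, giving $f(i) \le \abs{Q}\,\Theta_i^{\,i} + f(i-1)$ and hence $\log f(i) \approx i \log f(i-1)$, which unrolls to the $n^{2^{\Oh(d\log d)}}$ bound of Theorem~\ref{thm:rackoff-runs}.

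The improvement is to bound the prefix by an object of volume $\prod_{j=1}^{i}\Theta_j$ rather than $\Theta_i^{\,i}$, where $\Theta_1 \le \dots \le \Theta_i$ is an increasing hierarchy of thresholds with $\Theta_j = \Oh(n\, f(j-1))$. The guiding principle is that the counters should not all be granted the single large range $\Theta_i$: only the counter actually peeled at level $i$ needs range up to $\Theta_i$, while the counters peeled earlier can be confined to their smaller thresholds $\Theta_1,\dots,\Theta_{i-1}$. I would formalise this by decomposing a covering run into a \emph{tail}, on which a counter is large and decoupled, and a \emph{thin} bounded remainder, and by arguing inductively that the thin part visits at most $\abs{Q}\prod_{j<i}\Theta_j$ distinct configurations---a shortest run repeats no configuration, since the loop between two equal configurations may be excised while preserving coverability and non-negativity. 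Multiplying by the at most $\Theta_i$ distinct values taken by the peeled counter yields $f(i) \le \abs{Q}\prod_{j=1}^{i}\Theta_j$, which is $\approx n \cdot f(i-1)^2$, that is $\log f(i) \le 2\log f(i-1) + \Oh(\log n)$. Since $\log f(0) = \Oh(\log n)$, unrolling gives $\log f(d) = 2^{\Oh(d)}\log n$, so $f(d) \le n^{2^{\Oh(d)}}$, which is exactly the bound claimed in Theorem~\ref{thm:min-run}.

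The principal obstacle---and the technical heart of the argument---is justifying the telescoping volume $\prod_j \Theta_j$, \ie that all but one counter can be kept within the smaller thresholds. The naive decoupling releases a single counter yet leaves every other counter active, so the structural parameter \emph{number of active counters} does not align with the metric fact that only one counter need reach $\Theta_i$; reconciling the two is the crux. My intended remedy is a hierarchical peeling that crosses the thresholds $\Theta_1 < \dots < \Theta_i$ one at a time, choosing each threshold just large enough that the counter peeled at that level survives the entire remaining, short run, and handling the still-bounded counters recursively. The delicate point is that shortening or re-boxing the cheap, bounded counters must neither disturb the climb of the expensive counter toward its threshold nor introduce a negative value on a counter already released; controlling this interaction, and verifying that the excised loops and reordered sub-runs preserve both coverability and the non-negativity of every constrained counter, is where the real work lies. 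Once the length bound $n^{2^{\Oh(d)}}$ is established, the promised $2^{\Oh(d)}\cdot\log n$-space and $n^{2^{\Oh(d)}}$-time algorithms follow immediately by feeding it into Theorem~\ref{thm:rackoff-algorithm}.
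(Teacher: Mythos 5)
Your high-level plan coincides with the paper's: replace Rackoff's hypercube of volume $\Theta_i^{\,i}$ by a telescoping product $\prod_{j\le i}\Theta_j$ of an increasing hierarchy of thresholds, split a shortest covering run into a bounded ``thin'' prefix and a short ``tail'' handled by induction, and unroll $\log f(i) \le 2\log f(i-1) + \Oh(\log n)$ to get $n^{2^{\Oh(d)}}$. This is exactly the shape of Lemma~\ref{lem:main_induction} (with $L_i = n^{4^i}$ and $M_i = n\cdot L_{i-1}$ playing the role of your $f$ and $\Theta$; the paper first passes to stateless VAS via Lemma~\ref{lem:no_states} and inducts on the dimension rather than on the number of active counters, which is cosmetic). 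However, you correctly identify the crux --- justifying that the prefix really fits inside an object of volume $\prod_j \Theta_j$ when the run is free to make any counter large in any order --- and you leave it unresolved: your proposed ``hierarchical peeling one threshold at a time, re-boxing the cheap counters'' is not worked out, and the interaction problems you list (disturbing the climb of the expensive counter, reordering sub-runs, re-introducing negative values) are real obstacles for that route. As written, the proposal is a correct strategy with a genuine gap at its central step.

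The paper closes that gap without any peeling or reordering, by two devices you are missing. First, \emph{thinness is defined existentially over permutations}: $\vec{c}$ is thin if \emph{some} permutation $\sigma$ has $\vec{c}[\sigma(j)] < M_j$ for all $j$. The thin prefix is then counted simply as the number of thin configurations, at most $d!\cdot\prod_j M_j$ (Claim~\ref{clm:thin}); the $d!$ from the union over permutations is absorbed into $n^{2^d}$, so no fixed assignment of counters to thresholds is ever needed --- this repairs the tacit assumption in your counting that one knows in advance which counter is ``peeled at level $j$''. Second, at the \emph{first} non-thin configuration $\vec{m}$ one takes the permutation $\sigma$ that sorts the entries of $\vec m$ increasingly; non-thinness forces $\vec{m}[\sigma(i)] \ge M_i$ for some $i$, and by sortedness \emph{every} counter of rank at least $i$ is simultaneously at least $M_i = n\cdot L_{i-1}$. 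All of these counters are dropped at once, the induction hypothesis at dimension $i-1$ supplies a covering tail of length at most $L_{i-1}$, and since each step moves a counter by at most $n$, the dropped counters stay at least $n \ge \norm{\vec t}$ throughout, so the lifted tail is a valid covering run (Claim~\ref{clm:tail}). The original run is never shortened, re-boxed, or reordered --- only its suffix is discarded and replaced wholesale --- which is precisely why the delicate interactions you anticipate never arise. If you adopt the existential-permutation definition of thin and the sorted-permutation argument at the first escape, your outline becomes the paper's proof.
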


This combined with Theorem~\ref{thm:rackoff-algorithm} yields the following corollary.
\begin{corollary}\label{cor:coverability-algorithm}
    Coverability in $d$-VASS can be decided by both a non-deterministic $2^{\Oh(d)} \cdot \log(n)$-space algorithm and a deterministic $n^{2^{\Oh(d)}}$-time algorithm.
\end{corollary}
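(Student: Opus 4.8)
The plan is to obtain the corollary directly by combining the run-length bound of Theorem~\ref{thm:min-run} with the bounded-length decision procedure of Theorem~\ref{thm:rackoff-algorithm}, instantiated at a carefully chosen length threshold. First I would observe that coverability for $(\Vv, \config{p}{s}, \config{q}{t})$ holds if and only if it is witnessed by a run of \emph{bounded} length: if coverability holds then $\config{p}{s} \reach \covfig{q}{t}$ for some $\vec{t}' \geq \vec{t}$, so Theorem~\ref{thm:min-run} supplies a witnessing run $\pi$ with $\len{\pi} \leq n^{2^{\Oh(d)}}$, while conversely any covering run certifies coverability. Hence it suffices to decide the existence of a covering run of length at most $\ell := n^{2^{\Oh(d)}}$, which is precisely the problem solved by Theorem~\ref{thm:rackoff-algorithm}.

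Next I would substitute this value of $\ell$ into the complexity guarantees of Theorem~\ref{thm:rackoff-algorithm} and simplify. Since $\log(n \cdot \ell) = \log n + 2^{\Oh(d)} \log n = 2^{\Oh(d)} \log n$, the non-deterministic space usage is
\[
    \Oh\!\left(d \log(n \cdot \ell)\right) = \Oh\!\left(d \cdot 2^{\Oh(d)} \log n\right) = 2^{\Oh(d)} \log n,
\]
where the last equality absorbs the factor $d$ using $d \leq 2^d$. For the deterministic time bound the same simplification of the inner exponent gives $d \log(n \cdot \ell) = 2^{\Oh(d)} \log n$, whence
\[
    2^{\Oh(d \log(n \cdot \ell))} = 2^{2^{\Oh(d)} \log n} = n^{2^{\Oh(d)}},
\]
as claimed.

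The only step requiring genuine care—and the one I would treat as the main (albeit minor) obstacle—is the bookkeeping of the nested asymptotics: one must verify that multiplying the inner exponent by the factor $d$, and the additive $\log n$ appearing in the space case, does not inflate $2^{\Oh(d)}$ beyond $2^{\Oh(d)}$. This follows from $d = 2^{\Oh(d)}$ together with the fact that the constant hidden in the outer $\Oh(\cdot)$ of Theorem~\ref{thm:rackoff-algorithm} only rescales the constant in the exponent of $2^{\Oh(d)}$. With this checked, the corollary is an immediate consequence of the two cited theorems once the threshold $\ell$ is fixed.
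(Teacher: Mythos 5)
Your proposal is correct and matches the paper exactly: the paper derives the corollary by combining Theorem~\ref{thm:min-run} with Theorem~\ref{thm:rackoff-algorithm}, instantiating the length bound $\ell = n^{2^{\Oh(d)}}$ and absorbing the factor $d$ via $d \leq 2^d$, just as you do. Your explicit bookkeeping of the nested asymptotics is precisely the (routine) verification the paper leaves implicit.
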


Note that by Lemma~\ref{lem:no_states}, we may handle VAS instead of VASS. 
Recall that, as there are no states, a $d$-VAS consists only of a set of vectors in $\Z^d$ that we still refer to as transitions.
A configuration is just a vector in $\N^d$.
Accordingly, we may fix our attention on the instance $(\Vv, \vec{s}, \vec{t})$ of the coverability problem in a $d$-VAS $\Vv = \set{\vec{v}_1, \ldots, \vec{v}_m}$ from the initial configuration $\vec{s}$ to a configuration $\vec{t}'$ that is at least as great as the target configuration $\vec{t}$.
The rest of this section is dedicated to the proof of Theorem~\ref{thm:min-run}. 
Imitating Rackoff's proof, we proceed by induction on the dimension $d$. 
Formally, we prove a stronger statement; Theorem~\ref{thm:min-run} is a direct corollary of the following lemma.

\begin{lemma}\label{lem:main_induction}
    Define $L_i \coloneqq n^{4^{i}}$, and let $\vec{t} \in \N^d$ such that $\norm{\vec{t}} \leq n$. 
    For any $\vec{s} \in \N^d$, if $\vec{s} \reach \vec{t}'$ for some $\vec{t}' \geq \vec{t}$ then there exists a run $\pi$ such that $\vec{s} \trans{\pi} \vec{t}''$ for some $\vec{t}'' \ge \vec{t}$ and $\len{\pi} \le L_d$.
\end{lemma}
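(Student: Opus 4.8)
The plan is to follow Rackoff's induction on the dimension $d$, but to replace the expensive ``hypercube'' volume bound in the inductive step by a bound that is polynomial in the counter threshold and \emph{independent of $d$}. By Lemma~\ref{lem:no_states} we may work in a $d$-VAS, so a configuration is just a vector in $\N^d$. The base case $d=0$ is immediate: there are no counters, so the single-configuration run $\vec{s}$ already witnesses coverability and has length $1 \le L_0 = n$. For the inductive step I fix any run $\pi$ with $\vec{s} \trans{\pi} \vec{t}'$ for some $\vec{t}' \ge \vec{t}$ and transform it into a covering run of length at most $L_d$. Throughout I keep the same parameter $n$, set the threshold $B \coloneqq 2 n L_{d-1}$, and split according to whether $\pi$ ever drives some counter up to $B$.

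Suppose first that some counter reaches value at least $B$, let $\tau$ be the first configuration of $\pi$ where this happens, say in coordinate $j$, and write $\pi = \pi_1\pi_2$ accordingly. For the suffix $\pi_2$ I project coordinate $j$ away: deleting the $j$-th entry of every transition yields a $(d-1)$-VAS of size at most $n$ in which $\pi_2$ still covers the target restricted to the other $d-1$ coordinates (whose norm is still at most $n$). The induction hypothesis produces there a run $\pi_2'$ of length at most $L_{d-1}$ covering those coordinates. Lifting $\pi_2'$ back with the same transitions keeps coordinate $j$ feasible: it starts at value at least $B$, decreases by at most $n$ over at most $L_{d-1}$ steps, hence stays at least $B - n L_{d-1} = n L_{d-1} \ge \norm{\vec{t}} \ge \vec{t}[j] \ge 0$, so $\pi_1\pi_2'$ still covers $\vec{t}$. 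This contributes length at most $\len{\pi_1}+L_{d-1}$. This half is essentially the dimension-reduction of Rackoff and Rosier--Yen and I expect it to be routine.

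The heart of the matter is the remaining situation: a run all of whose configurations keep every counter below $B$, which covers both $\pi_1$ above and $\pi$ itself when no counter ever reaches $B$. Here Rackoff bounds the length by the number of distinct bounded configurations, the hypercube volume $(B+1)^d$, and this is exactly what produces the superfluous $2^{\Oh(d\log d)}$. My plan is instead to bound the length by $B^{\Oh(1)}$. The idea is that such a run is determined by the multiset of transitions it uses, whose total effect $\vec{t}'-\vec{s}$ lies in the integer cone generated by the transition vectors, each of norm at most $n$. Using an integer Carath\'eodory argument one can realise this effect with each transition used at most $\poly{B}$ times, and then reorder the chosen transitions, in the spirit of the Steinitz lemma, so that all prefix sums stay within $\Oh(d\cdot n)$ of the segment from $\vec{s}$ to $\vec{t}'$, which certifies non-negativity along the reordered run. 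The resulting covering run has length at most $m\cdot\poly{B}\le B^{\Oh(1)}$. I expect this shortening step to be the main obstacle: one must simultaneously control the multiplicities and preserve feasibility of the reordered run while still covering $\vec{t}$, and this is exactly where the new ``reduced volume'' idea has to do its work.

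Finally I assemble the recursion. With $B = 2 n L_{d-1}$ the bounded part contributes length $B^{\Oh(1)}$; since $n \le L_{d-1}$ and, for $d \ge 2$, $n^3 \le L_{d-1}$, this is at most $L_{d-1}^{3}$, while the suffix contributes at most $L_{d-1}$. Hence the shortened covering run has length at most $L_{d-1}^3 + L_{d-1} \le L_{d-1}^4 = L_d$, closing the induction. The only quantities I would need to pin down carefully are the exact polynomial degree coming from the Steinitz/Carath\'eodory bound and the constant hidden in $B$, arranged so that the per-level blow-up stays a fixed power (here the fourth power, giving the exponent $4^d = 2^{\Oh(d)}$) rather than the power $i$ growing with the level that is responsible for Rackoff's $d!$.
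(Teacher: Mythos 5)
Your high-level skeleton (induction on dimension, split the run where a counter crosses a threshold, project that counter away and invoke the inductive hypothesis on the suffix) matches the paper's, and your suffix argument is essentially the one used there. But the step you yourself flag as ``the heart of the matter'' --- shortening the bounded prefix to length $B^{\Oh(1)}$ with an exponent independent of $d$ --- is not an argument yet, and the sketched route does not close. Two concrete problems. First, a Steinitz-type reordering only guarantees that the prefix sums stay within $\Oh(d\cdot n)$ of the straight segment from $\vec{s}$ to the endpoint; this certifies non-negativity only when that segment keeps every coordinate above $\Omega(d\cdot n)$, which fails in the typical case where $\vec{s}$ (or the endpoint) has small or zero coordinates. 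Feasibility in $\N^d$ is a genuine constraint that a multiset-plus-reordering argument cannot recover near the boundary of the orthant (already in $\N^2$ from $(0,0)$ there are multisets with the right total effect and no feasible ordering). Second, your shortened prefix must still end at (or componentwise above) the specific configuration $\tau$ from which the replaced suffix departs --- an exact-reachability-type constraint, not a coverability one --- and nothing in the Carath\'eodory/Steinitz sketch controls that. Until both issues are resolved with an explicit absolute constant in the exponent of $B$, the recursion $L_{d-1}^{3}+L_{d-1}\le L_d$ is not established.

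The paper avoids this entirely by never shortening the prefix at all. It takes $\pi$ of \emph{minimal} length, so no configuration repeats, and then bounds the prefix by counting the configurations it can visit. The key is to replace your single threshold $B=2nL_{d-1}$ (whose hypercube has volume $B^{d}=L_{d-1}^{\Theta(d)}$, too large) by a permutation-dependent staircase: a configuration is \emph{thin} if for some permutation $\sigma$ one has $\vec{c}[\sigma(i)]<M_i$ with $M_i=n\cdot L_{i-1}$, so only one coordinate is allowed to be as large as $nL_{d-1}$, the next at most $nL_{d-2}$, and so on. The number of thin configurations is $d!\cdot\prod_{i}M_i\le n^{2^d}\cdot L_{d-1}\cdots L_0$, which is what makes the exponent $4^d$ rather than $d!$ work out. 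The split point is then the first non-thin configuration, at which \emph{several} coordinates may be large simultaneously (all those ranked $\ge i$ in sorted order exceed $M_i$); the paper projects all of them away and applies the inductive hypothesis in dimension $i-1$, not just $d-1$. I would suggest abandoning the Carath\'eodory/Steinitz route for the prefix and adopting this minimal-run-plus-volume-of-a-staircase count instead.
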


The base case is $d=0$.
In a $0$-dimensional VAS, the only possible configuration is the empty vector $\bm{\varepsilon}$ and therefore there is only the trivial run $\bm{\varepsilon} \reach \bm{\varepsilon}$.
This trivially satisfies the lemma.

For the inductive step, when $d \geq 1$, we assume that Lemma~\ref{lem:main_induction} holds for all lower dimensions $0, \ldots, d-1$.
Let $\pi = (\vec{c}_0, \vec{c}_1, \ldots, \vec{c}_\ell)$ be a run with minimal length such that $\run{\vec{s}}{\pi}{\vec{t}'}$ for some $\vec{t}' \geq \vec{t}$, so in particular, $\vec{c}_0 = \vec{s}$ and $\vec{c}_n = \vec{t}'$. 
Our objective is to prove that $\len{\pi} = \ell+1 \leq L_d$. 
Observe that configurations $\vec{c}_i$ need to be distinct, else $\pi$ could be shortened trivially. 
We introduce the notion of a \emph{thin configuration}.
\begin{definition}[Thin Configuration]
    In a $d$-VAS, we say that a configuration $\vec{c} \in \N^d$ is
    \emph{thin} if there exists a permutation $\sigma$ of $\set{1, \ldots, d}$ such that $\vec{c}[\sigma(i)] < M_i$ for every $i \in \{1,\ldots,d\}$,
    where $M_0 \coloneqq n$ and for $i \geq 1$, $M_i \coloneqq L_{i-1} \cdot n$.
\end{definition}

Recall, from above, the run $\pi = (\vec{c}_0, \vec{c}_1, \ldots, \vec{c}_\ell)$. 
Let $t \in \set{0, \ldots, \ell}$ be the first index where $\vec{c}_t$ is not thin, otherwise let $t = \ell+1$ if every configuration in $\pi$ is thin.
We decompose the run about the $t$-th configuration $\pi = \pi_\thin \cdot \pi_\tail$, where $\pi_\thin \coloneqq (\vec{c}_0, \ldots, \vec{c}_{t-1})$ and $\pi_\tail \coloneqq (\vec{c}_t, \ldots, \vec{c}_\ell)$.
Note that $\pi_\thin$ or $\pi_\tail$ can be empty.
Subsequently, we individually analyse the lengths of $\pi_\thin$ and $\pi_\tail$ (see Figure~\ref{fig:thin-config}).
We will also denote $\vec{m} = \vec{c}_t$ to be the first configuration that is not thin.

\begin{figure}[ht!]
    \centering
    \includegraphics[width=0.5\textwidth]{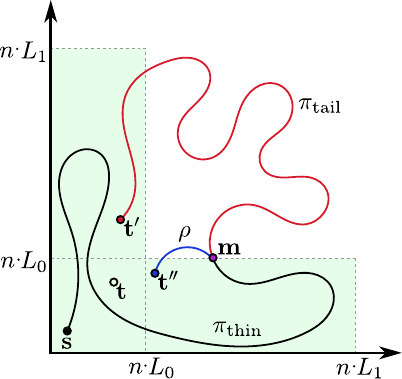}
    \caption{The schematic view of proofs of Claim~\ref{clm:thin} and Claim~\ref{clm:tail}, restricted to the two-dimensional case.
    Note that $\vec{s}$ is the initial configuration and $\vec{t}$ is the target configuration.
    Every configuration inside the green shaded polygon is thin, where each rectangular component of the green shaded polygon corresponds to a  permutation of the indices.
    Observe that $\vec{m}$ is the first configuration, just outside the green shaded polygon, that is not thin.
    Claim~\ref{clm:thin} bounds $\pi_\thin$, and therefore its maximum length, by the volume of the green polygon.
    Claim~\ref{clm:tail} argues that there is an executable run $\rho$ (drawn in blue) from $\vec{m}$ to $\vec{t}'' \geq \vec{t}$ of length at most $L_{d-1}$ that can be used in place of the run $\pi_\tail$ (drawn in red) from $\vec{m}$ to $\vec{t}' \geq \vec{t}$.}
    \label{fig:thin-config}
\end{figure}

\begin{claim}\label{clm:thin}
    $\len{\pi_\thin} \leq d! \cdot n^d \cdot L_{d-1} \cdot \ldots \cdot L_0$.
\end{claim}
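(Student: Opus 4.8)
The plan is to prove Claim~\ref{clm:thin} by a direct counting argument. Since every configuration appearing in $\pi_\thin$ is thin, and since all configurations along the minimal run $\pi$ are pairwise distinct, the quantity $\len{\pi_\thin}$ is bounded by the total number of distinct thin configurations in $\N^d$, which I will enumerate explicitly.

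First I would record the two structural facts that make the counting valid. By the choice of the index $t$ as the first position where $\vec{c}_t$ fails to be thin, each of the configurations $\vec{c}_0, \ldots, \vec{c}_{t-1}$ constituting $\pi_\thin$ is thin. Moreover, as already observed, the minimality of $\len{\pi}$ forces the $\vec{c}_i$ to be distinct, since any repetition would allow excising the intervening segment to obtain a strictly shorter run still witnessing coverability. Consequently $\len{\pi_\thin} = t$ equals the number of \emph{distinct} thin configurations it visits, and it suffices to upper bound the number of thin configurations overall.

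Next I would count the thin configurations. Fix a permutation $\sigma$ of $\set{1, \ldots, d}$. The vectors $\vec{c} \in \N^d$ with $\vec{c}[\sigma(i)] < M_i$ for all $i \in \{1, \ldots, d\}$ form a product of intervals: coordinate $\sigma(i)$ ranges over the $M_i$ values $\set{0, \ldots, M_i - 1}$, giving exactly $\prod_{i=1}^d M_i$ such vectors. Substituting $M_i = L_{i-1} \cdot n$ yields $\prod_{i=1}^d M_i = n^d \cdot \prod_{i=1}^d L_{i-1} = n^d \cdot L_0 \cdot L_1 \cdots L_{d-1}$. A configuration is thin exactly when it lies in at least one of these $d!$ boxes (one per permutation), so a union bound over the permutations gives that the number of distinct thin configurations, and hence $\len{\pi_\thin}$, is at most $d! \cdot n^d \cdot L_{d-1} \cdots L_0$, as claimed.

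There is no genuine obstacle in this argument; it is a lattice-point count combined with the distinctness of configurations. The only points demanding care are bookkeeping: matching the index range in the definition of thinness correctly, so that the relevant bounds are $M_1, \ldots, M_d$ (with $M_i = L_{i-1}\cdot n$) rather than starting from $M_0$, and noting that the union over permutations is used only as an upper bound, so the fact that a single thin configuration may satisfy the defining inequalities for several permutations causes only harmless overcounting.
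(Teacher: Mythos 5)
Your proof is correct and follows essentially the same route as the paper's: both argue that the configurations of $\pi_\thin$ are distinct thin configurations by minimality of $\pi$, and then bound the number of thin configurations by a union bound over the $d!$ permutations, each contributing a box of $\prod_{i=1}^d M_i = n^d \cdot L_{d-1} \cdots L_0$ lattice points. No discrepancies to report.
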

\begin{proof}
    By definition, every configuration in $\pi_\thin$ is thin. 
    Moreover, since $\pi$ has a minimal length, no configurations in $\pi$ repeat, let alone in $\pi_\thin$.
    We now count the number of possible thin configurations.
    There are $d!$ many permutations of $\set{1, \ldots, d}$.
    For a given permutation $\sigma$ and an index $i \in \set{1, \ldots, d}$, we know that for a thin configuration $\vec{c}$, $0 \leq \vec{c}[\sigma(i)] < M_i$, so there are at most $M_i = L_{i-1} \cdot n$ many possible values on the $\sigma(i)$-th counter.
    Hence the total number of thin configurations is at most $d! \cdot \prod_{i=1}^d (L_{i-1} \cdot n) = d! \cdot n^d \cdot L_{d-1} \cdot \ldots \cdot L_0$.
\end{proof}

\begin{claim}\label{clm:tail}
    $\len{\pi_\tail} \le L_{d-1}.$
\end{claim}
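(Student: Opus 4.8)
The plan is to shortcut the tail by projecting away the large coordinates of $\vec{m}$ and invoking the induction hypothesis in a strictly smaller dimension. First I would translate non-thinness of $\vec{m}$ into a usable combinatorial statement: sorting the coordinates of $\vec{m}$ in increasing order, $\vec{m}$ fails to be thin exactly when there is an index $i^\ast \in \set{1, \ldots, d}$ with at least $d - i^\ast + 1$ coordinates of value at least $M_{i^\ast}$. The nontrivial direction is a Hall-type exchange argument: any permutation witnessing thinness places, for each slot $i \le i^\ast$, a coordinate below $M_i \le M_{i^\ast}$, so at least $i^\ast$ coordinates lie below $M_{i^\ast}$, forcing the $i^\ast$-th smallest coordinate below $M_{i^\ast}$. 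Fix such an $i^\ast$ and let $S$ be the set of coordinates $p$ with $\vec{m}[p] \ge M_{i^\ast}$, so $\abs{S} \ge d - i^\ast + 1 \ge 1$ and the reduced dimension $r \coloneqq d - \abs{S}$ satisfies $0 \le r \le i^\ast - 1 \le d - 1$.

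Next I would build the $r$-dimensional VAS $\Vv'$ obtained by deleting every coordinate in $S$ from each transition; its size is still at most $n$, so the thresholds $L_r$ and $M_j$ are unchanged and the induction hypothesis applies. Deleting coordinates only discards non-negativity constraints, so projecting the tail $\pi_\tail \colon \vec{m} \reach \vec{t}'$ onto the coordinates outside $S$ yields a valid run of $\Vv'$ that covers the restriction of $\vec{t}$. Hence coverability holds in $\Vv'$, and by the induction hypothesis in dimension $r \le d-1$ there is a run $\rho'$ in $\Vv'$, of length $\len{\rho'} \le L_r \le L_{d-1}$, from the projection of $\vec{m}$ to a configuration dominating the restriction of $\vec{t}$.

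I would then lift $\rho'$ back to the full VAS by running its underlying path from $\vec{m}$. On the coordinates outside $S$ the lifted run equals $\rho'$, so these stay non-negative and end at least at $\vec{t}$ there. The crux is the coordinates in $S$: each starts at $\vec{m}[p] \ge M_{i^\ast} = L_{i^\ast - 1} \cdot n \ge L_r \cdot n$, and over the at most $L_r - 1$ transitions of $\rho'$ it drops by at most $(L_r - 1) \cdot n$, hence never falls below $L_r n - (L_r - 1) n = n \ge \vec{t}[p]$, using $\norm{\vec{t}} \le n$. So the lift is a genuine run $\rho \colon \vec{m} \reach \vec{t}''$ with $\vec{t}'' \ge \vec{t}$ and $\len{\rho} = \len{\rho'} \le L_{d-1}$. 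Replacing $\pi_\tail$ by $\rho$ inside $\pi$ yields a coverability run from $\vec{s}$ of length $\len{\pi_\thin} + \len{\rho}$; by minimality of $\pi = \pi_\thin \cdot \pi_\tail$ this is at least $\len{\pi_\thin} + \len{\pi_\tail}$, whence $\len{\pi_\tail} \le \len{\rho} \le L_{d-1}$, as claimed.

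The \emph{main obstacle} is the threshold bookkeeping in the last step: the exponents $L_i = n^{4^i}$ and the offsets $M_i = L_{i-1}\cdot n$ must be calibrated so that a single common threshold $M_{i^\ast}$ simultaneously (i) keeps the removed coordinates non-negative for the entire length-$L_r$ run and (ii) leaves them above every target entry; this is precisely why $M_i$ carries an extra additive factor of $n$ beyond the decrease budget $(L_r - 1)n$. A secondary point to get right is the sorted-coordinate characterisation of non-thinness, since it is what licenses choosing one threshold for the whole set $S$ rather than a separate bound per coordinate, and the bound $r \le d-1$ (not merely $r \le d$) is what makes the induction collapse to $L_{d-1}$.
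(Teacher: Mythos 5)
Your proposal is correct and follows essentially the same route as the paper's proof: identify via the sorted permutation an index $i^\ast$ past which all coordinates of $\vec{m}$ exceed $M_{i^\ast}$, project those coordinates away, apply the induction hypothesis in the reduced dimension to get a covering run of length at most $L_{d-1}$, and lift it back using the fact that the large coordinates start at $M_{i^\ast} = L_{i^\ast-1}\cdot n$ and can drop by at most $n$ per step. The only (harmless) differences are that you delete \emph{all} coordinates above the threshold rather than exactly the sorted suffix, and you make explicit the final appeal to minimality of $\pi$, which the paper leaves implicit.
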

\begin{proof}
    Consider $\vec{m} \in \N^d$, the first configuration of $\pi_\tail$.
    Let $\sigma$ be a permutation such that $\vec{m}[\sigma(1)] \leq \vec{m}[\sigma(2)] \leq \ldots \leq \vec{m}[\sigma(d)]$.
    Given that $\vec{m}$ is not thin, for every permutation $\sigma'$ there exists an $i \in \set{1, \ldots, d}$ such that $\vec{m}[\sigma'(i)] \ge M_i$; in particular, this holds for $\sigma$. Note that this also implies $M_i \leq \vec{m}[\sigma(i+1)] \leq \ldots \leq \vec{m}[\sigma(d)]$.

    We construct an $(i-1)$-VAS $\Uu$ from $\Vv$ by ignoring the counters $\sigma(i), \ldots, \sigma(d)$. 
    Formally, $\vec{u} \in \Uu$ if there is $\vec{v} \in \Vv$ such that $\vec{u}[j] = \vec{v}[\sigma(j)]$ for each $1 \leq j \leq i-1$.
    In such a case we say $\vec{u}$ is the \emph{projection} of $\vec{v}$ via $\sigma$. We will use the inductive hypothesis to show that there is a short path $\rho'$ in $\Uu$ from (the projection of) $\vec{m}$ covering (the projection of) $\vec{t}$. We will then show that the remaining components of $\vec{m}$ are large enough that the embedding of $\rho'$ into $\Vv$ maintains its covering status.

    Recall that $\vec{t}'$ is the final configuration of the run $\pi$.
    Note that the run $\pi_\tail$ induces a run $\pi_\tail'$ in $\Uu$ by
    permuting and projecting every configuration.
    More precisely, $(\vec{m}[\sigma(1)], \ldots, \vec{m}[\sigma(i-1)]) \trans{\pi_\tail'} (\vec{t}'[\sigma(1)], \ldots, \vec{t}'[\sigma(i-1)])$.
    By the inductive hypothesis there exists a run $\rho'$ in $\Uu$ such that $(\vec{m}[\sigma(1)], \ldots, \vec{m}[\sigma(i-1)]) \trans{\rho'} (\vec{t}''[\sigma(1)], \ldots, \vec{t}''[\sigma(i-1)])$, such that $(\vec{t}''[\sigma(1)], \ldots, \vec{t}''[\sigma(i-1)]) \geq (\vec{t}[\sigma(1)], \ldots, \vec{t}[\sigma(i-1)])$ and $\len{\rho'} \le L_{i-1}$.

    Let $(\vec{u}_1, \ldots, \vec{u}_{\len{\rho'}})$ be the underlying path of the run $\rho'$, that is, the sequence of transitions in $\Uu$ that are sequentially added to form the run $\rho'$.
    By construction, each transition vector $\vec{u}_i \in \Uu$ has a corresponding transition vector $\vec{v}_i \in \Vv$ where $\vec{u}_i$ is the projection of $\vec{v}_i$ via $\sigma$.
    We will now show that the following run witnesses coverability of $\vec{t}$. 
    \begin{equation*}
        \rho = \left( \vec{m}, \, \vec{m} + \vec{v}_1, \, \vec{m} + \vec{v}_1 + \vec{v}_2, \, \ldots, \, \vec{m} + \sum_{j = 1}^{\len{\rho'}} \vec{v}_j \right)
    \end{equation*}
    
	To this end, we verify that (i) $\rho$ is a run, that is, all configurations lie in $\N^d$, and (ii) the final configuration indeed covers $\vec{t}$. 
    For	components $\sigma(1)$, \ldots, $\sigma(i-1)$, this follows directly from
	the inductive hypothesis. For all other components we will show that
	\emph{all} configurations of $\rho$ are covering $\vec{t}$ in these components. This satisfies both (i) and (ii).
    
    Let $j$ be any of the remaining components. 
    Recall that by the choice of $\vec m$, $\vec{m}[j] \geq M_i = n \cdot L_{i-1}$. 
    Since $n > \norm{\Vv} \geq \norm{\vec{v}_j}$ for every $1 \leq j \leq \len{\rho'}$, this means that in a single step, the value of a counter can change by at most $n$. 
    Given that $\len{\rho} = \len{\rho'} \leq L_{i-1}$, the value on each of the remaining components must be at least $n$ for every configuration in $\rho$.
	In particular, observing that $\norm{\vec{t}} \leq n$, the final configuration of $\rho$ satisfies
    \begin{equation*}
        \vec{m} + \sum_{j = 1}^{\len{\rho'}} \vec{v}_j \geq \vec{t}.
    \end{equation*}

	Finally, observe that $\len \rho = \len{\rho'} \leq L_{i-1} \leq L_{d-1}$.
\end{proof}

To conclude this section, we show that Lemma~\ref{lem:main_induction} follows from Claim~\ref{clm:thin} and Claim~\ref{clm:tail}.

\begin{proof}[Proof of Lemma~\ref{lem:main_induction}]
    From Claim~\ref{clm:thin} and Claim~\ref{clm:tail},
    \begin{align*}
        \len{\pi} \leq \len{\pi_\thin} + \len{\pi_\tail} 
        & \, \leq d! \cdot n^d \cdot L_{d-1} \cdot \ldots \cdot L_0 + L_{d-1} \\
        & \, \leq 2 \cdot d! \cdot n^d \cdot L_{d-1} \cdot \ldots \cdot L_0.
    \end{align*}
    Recall that $n \geq 2$ and observe that $2 \cdot d! \cdot n^d \le n^{2^d}$. Hence,
    \begin{equation*}
        \len{\pi} \leq n^{2^{d}} \cdot L_{d-1} \cdot \ldots \cdot L_0.
    \end{equation*}
    Next, we use the definition of $L_i \coloneqq n^{4^{i}}$ to show
    \begin{equation*}
        \len{\pi} \leq n^{2^{d}} \cdot \prod_{i=0}^{d-1} n^{4^i} \leq n^{\left(2^{d} + \sum_{i=0}^{d-1} 4^i\right)} .
    \end{equation*}
    Finally, when $d \geq 1$, $2^{d} + \sum_{i=0}^{d-1} 4^i \leq 4^d$ holds, therefore
    \begin{equation*}
        \len{\pi} \le n^{4^d} = L_d \qedhere. 
    \end{equation*}
\end{proof}

\section{Conditional Time Lower Bound for Coverability}
\label{sec:time-lower-bound}

In this section, we present a conditional lower bound based on the
\emph{Exponential Time Hypothesis} (ETH)~\cite{ImpagliazzoP01}. Roughly speaking, ETH is a conjecture that an $n$-variable instance of $3$-SAT cannot be solved by a deterministic $2^{o(n)}$-time algorithm (for a modern survey, see~\cite{LokshtanonMS13}). In our reductions, it will be convenient for us to work with the $k$-clique problem instead.
In the $k$-clique problem we are given a graph $G = (V,E)$ as an
input and the task is to decide whether there is a set of $k$ pairwise adjacent vertices in $V$. 
The naive algorithm for $k$-clique runs in $\Oh(n^k)$ time.
Even though the exact constant in the dependence on $k$ can be
improved~\cite{NesetrilP85}, ETH implies that the exponent must have a linear dependence on $k$.

\begin{theorem}[{\cite[Theorem 4.2]{ChenCFHJKX05}},~{\cite[Theorem 4.5]{ChenHKX06}}, and~{\cite[Theorem 14.21]{CyganFKLMPPS15}}]
    \label{thm:k-clique}
    Assuming the Exponential Time Hypothesis, there is no algorithm running in time $f(k)\cdot n^{o(k)}$ for the $k$-clique problem for any computable function $f$. 
    Moreover one can assume that $G$ is $k$-partite, \ie $G = (V_1\cup\ldots\cup V_k, E)$ and edges belong to $V_i \times V_j$ for $i \neq j \in \{1,\ldots,k\}$.
\end{theorem}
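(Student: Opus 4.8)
This statement is a known consequence of ETH established in the cited works, so I would recover it through the standard two-step template of \emph{sparsification followed by a clause-partition reduction}. The plan is to show that an $f(k)\cdot n^{o(k)}$-time algorithm for the $k$-clique problem (where $n=\abs{V(G)}$) would yield a $2^{o(N)}$-time algorithm for $3$-SAT on $N$ variables, contradicting ETH. A pleasant feature of the reduction I have in mind is that it produces $k$-partite instances directly, so the ``moreover'' part of the statement comes for free.

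First, I would invoke the Sparsification Lemma of Impagliazzo, Paturi, and Zane to reduce, in time $2^{o(N)}$, any $3$-SAT instance on $N$ variables to a disjunction of $2^{o(N)}$ instances each having only $m = \Oh(N)$ clauses. It therefore suffices to rule out a $2^{o(N)}$-time algorithm for $3$-SAT with a linear number of clauses.

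Second, fixing such a sparse instance $\phi$ and a parameter $k$, I would partition its $m$ clauses into $k$ groups $C_1,\dots,C_k$ of size at most $\lceil m/k\rceil$ each and build a $k$-partite graph $G=(V_1\cup\dots\cup V_k,E)$. A vertex in $V_i$ encodes a choice, for every clause in $C_i$, of one of its (at most three) literals to be satisfied; hence $\abs{V_i}\le 3^{\lceil m/k\rceil}$ and $\abs{V(G)} = k\cdot 2^{\Oh(N/k)}$. I would place an edge between two vertices in distinct parts exactly when their literal choices are \emph{consistent}, \ie no variable is required to be both true and false. The crucial point is that global consistency of a family of literal choices is a purely \emph{pairwise} property, so a $k$-clique (which, by $k$-partiteness, selects exactly one vertex per part) corresponds precisely to a globally consistent choice of a satisfied literal for every clause, and thus to a satisfying assignment of $\phi$; conversely any satisfying assignment induces such a $k$-clique.

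Finally, I would compose the two steps: an $f(k)\cdot \abs{V(G)}^{o(k)}$-time clique algorithm decides $\phi$ in time $f(k)\cdot\bigl(k\cdot 2^{\Oh(N/k)}\bigr)^{o(k)} = g(k)\cdot 2^{o(N)}$ for some computable $g$, and summing over the $2^{o(N)}$ sparsified instances stays within $2^{o(N)}$, contradicting ETH. The main obstacle -- and the reason one cannot simply partition the \emph{variables} -- is guaranteeing that the clause constraints are checkable by the binary edge relation of a clique: a width-three clause may span three groups, so the satisfaction test must be recast as the pairwise-checkable consistency of literal selections rather than as a direct evaluation of clauses. The other delicate point is getting the parameter coupling right, so that the factor $\abs{V(G)}^{o(k)}$ collapses to $2^{o(N)}$ uniformly as $k$ is allowed to grow with $N$.
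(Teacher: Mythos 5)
The paper offers no proof of this theorem at all: it is imported verbatim from the cited references, so there is no in-paper argument to compare yours against. Measured against the standard literature proof (the one in the cited textbook of Cygan et al.), your sketch follows essentially the same two-step template. The textbook version reduces from $3$-colouring by partitioning the \emph{vertices} of the input graph into $k$ groups and taking the proper colourings of each group as vertices of the clique instance; you reduce from sparsified $3$-SAT by partitioning the \emph{clauses} and taking literal selections as vertices. Both are correct, both produce $k$-partite instances (so the ``moreover'' clause is equally free in either case), and your variant has the mild advantage of not needing a separate ETH-hardness statement for $3$-colouring as an intermediate.

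Two details deserve attention before this counts as a complete proof. First, the vertex set of each part $V_i$ must consist only of the \emph{internally} consistent literal selections for the clause group $C_i$: since your edges only test consistency across distinct parts, a selection that forces some variable to be both true and false within a single group could otherwise sit inside a $k$-clique without corresponding to any assignment. This is a one-line fix (discard such selections), but as written the construction omits it. Second, the step you explicitly defer --- ``getting the parameter coupling right'' --- is in fact the crux of the ``for any computable $f$'' formulation and is where the cited proofs do their real work. One writes the hypothesised running time as $f(k)\cdot n^{k/s(k)}$ for some unbounded nondecreasing computable $s$, and then chooses $k=k(N)$ unbounded but growing slowly enough that both $f(k(N))$ and $k(N)^{k(N)}$ are $2^{o(N)}$ (for instance, $k(N)$ is the largest $k$ with $f(k)\le N$, capped at $\log N$; computability of $f$ is used exactly here), while $s(k(N))\to\infty$ guarantees $2^{\Oh(N)/s(k(N))}=2^{o(N)}$. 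With these two points supplied, your argument is the standard one and is sound.
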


We will use Theorem~\ref{thm:k-clique} to show the following conditional lower bound for coverability in unary $d$-VASS, which is proved at the end of this section.
\begin{theorem}
	\label{thm:unary-kvass-coverability-lb}
	Assuming the Exponential Time Hypothesis, there does not exist an \mbox{$n^{2^{o(d)}}$-time} algorithm deciding coverability in a unary $d$-VASS with $n$ states.
\end{theorem}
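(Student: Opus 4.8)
The plan is to derive the bound by reducing the $k$-clique problem to coverability in a unary $d$-VASS of dimension only $d = \Theta(\log k)$, and then invoking the ETH-hardness of $k$-clique from Theorem~\ref{thm:k-clique}. The reduction chains the two lemmas of this section. First, Lemma~\ref{lem:k-clique-reduction} converts a $k$-clique instance on an $N$-vertex ($k$-partite) graph into coverability in a bounded unary $2$-VASS with zero-tests, of size $\poly{N,k}$. Second, Lemma~\ref{lem:rosier-yen} applies the Rosier--Yen form of Lipton's construction to delete the zero-tests, yielding a genuine unary $d$-VASS. The decisive point is that Lipton's gadgets realise a counter bounded by $2^{2^{\Theta(d)}}$ using only $\Theta(d)$ dimensions and $\poly{d}$ states, so the entire counter bound of the intermediate $2$-VASS collapses into $d = \Theta(\log k)$ additional dimensions while the number of states stays $\poly{N,k}$.

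Granting this reduction, I would argue by contradiction. Set $d \coloneqq \lceil \log_2 k\rceil$, so that $k = 2^{\Theta(d)}$, and suppose coverability in unary $d$-VASS were decidable in deterministic $n^{2^{o(d)}}$ time, with $n$ the number of states. Composing the two lemmas produces, from an $N$-vertex $k$-clique instance, a unary $d$-VASS with $n = \poly{N,k}$ states that covers its target if and only if the graph contains a $k$-clique. The hypothetical algorithm then decides $k$-clique in time
\[
n^{2^{o(d)}} \;=\; \bigl(\poly{N,k}\bigr)^{2^{o(\log k)}} \;=\; \bigl(\poly{N,k}\bigr)^{k^{o(1)}},
\]
where $d = \Theta(\log k)$ makes $2^{o(d)} = 2^{o(\log k)} = k^{o(1)}$.

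To see that this contradicts Theorem~\ref{thm:k-clique}, write $\poly{N,k} = (Nk)^{\Oh(1)}$ and split the exponent: the running time becomes $N^{k^{o(1)}}\cdot k^{k^{o(1)}}$, the constant factor being harmless since $\Oh(1)\cdot k^{o(1)} = k^{o(1)}$. The first factor is $N^{o(k)}$ because $k^{o(1)} = o(k)$, while the second depends only on $k$ and serves as the function $f(k)$; the $\poly{N,k}$ time spent building the VASS is likewise subsumed. Thus $k$-clique would be solvable in $f(k)\cdot N^{o(k)}$ time, contradicting Theorem~\ref{thm:k-clique}, and hence ETH.

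The hard part is not this final counting but the two reductions feeding it, and aligning their parameters. The clique must be encoded with merely two bounded counters and zero-tests, and both the counter bound and the zero-testing must be simulable at a cost of $\Theta(\log k)$ dimensions rather than, say, $\Theta(k)$: it is precisely Lipton's doubly-exponential counter simulation that makes a logarithmic dimension suffice, and this is what pins the exponent at the conditionally optimal $2^{\Theta(d)}$ matching Theorem~\ref{thm:min-run}. Care is also needed so that the final dimension depends on $k$ alone and not on $N$, which in turn constrains the size regime of the graphs used in the reduction.
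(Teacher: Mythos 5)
Your proposal is correct and follows essentially the same route as the paper: chain Lemma~\ref{lem:k-clique-reduction} with Lemma~\ref{lem:rosier-yen}, set $k = 2^{\Theta(d)}$, and derive the contradiction with Theorem~\ref{thm:k-clique} via $2^{o(d)} = k^{o(1)} = o(k)$. Your final accounting of the exponents is in fact spelled out more carefully than in the paper's own (rather terse) concluding step.
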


We first reduce the $k$-clique problem to coverability in bounded 2-VASS with the ability to perform a fixed number of zero-tests.
We will then leverage a result by Rosier and Yen to construct an equivalent, with respect to coverability, $(\Oh(\log k))$-VASS without zero-tests. 

\begin{lemma}
	Given a $k$-partite graph $G = (V_1\cup \cdots \cup V_k, E)$ with $n$
    vertices, there exists a unary $(\Oh(n^{2k}), 2)$-VASS with $\Oh(k^2)$ zero-tests
    $\mathcal{T}$ such that there is a $k$-clique in $G$ if and only if there
    exists a run from $q_I(\vec{0})$ to $q_F(\vec{v})$ in $\mathcal{T}$, for
    some $\vec{v} \geq \vec{0}$.
	Moreover, $\norm{\mathcal{T}} \le \poly{n+k}$ and $\mathcal{T}$ can be constructed in $\poly{n+k}$-time.
	\label{lem:k-clique-reduction}
\end{lemma}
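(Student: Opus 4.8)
Since the target $q_F(\vec{v})$ only requires $\vec{v} \geq \vec{0}$, covering it is automatic once $q_F$ is reached, so it suffices to build a bounded $2$-VASS with zero-tests in which $q_F$ is reachable from $q_I(\vec{0})$ exactly when $G$ has a $k$-clique. I would enumerate each part so that every $u \in V_i$ has an index $\mathrm{idx}(u) \in \{1, \ldots, n\}$, fix the base $N := n+1$, and encode a candidate clique $(v_1, \ldots, v_k)$, $v_i \in V_i$, as the single integer $C := \sum_{i=1}^{k} \mathrm{idx}(v_i)\,N^{i-1}$ held in the first counter, which I call the \emph{selection register}; the second counter is scratch. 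The machine then guesses and verifies the clique in a single loop over $i = 1, \ldots, k$.

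In stage $i$ the machine nondeterministically picks $v_i \in V_i$ (an $|V_i|$-way branch) and stores $\mathrm{idx}(v_i)$ in its finite control. It then verifies every edge $(v_j, v_i)$ with $j < i$: it \emph{sweeps} the current selection register, which holds $\mathrm{idx}(v_1), \ldots, \mathrm{idx}(v_{i-1})$, by shifting it base-$N$ digit by digit into the scratch counter, each shift performed by a division-by-$N$ gadget on the register together with a multiplication-by-$N$ gadget on the scratch counter, each gadget using a single zero-test. The peeled remainder reveals $\mathrm{idx}(v_j)$, whereupon a transition that exists precisely for the edges of $G$ checks $(v_j, v_i) \in E$ using the stored $\mathrm{idx}(v_i)$; simultaneously the register is reconstructed in the scratch counter, so the partial selection survives the sweep. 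Finally the machine appends $\mathrm{idx}(v_i)$ as a new digit (a multiply-by-$N$ followed by $\mathrm{idx}(v_i) \leq n$ unit increments) and continues. Stage $i$ costs $\Oh(i) = \Oh(k)$ zero-tests, so the whole loop uses $\Oh(k^2)$; after stage $k$ the machine moves to $q_F$.

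For correctness, if $(v_1, \ldots, v_k)$ is a $k$-clique then guessing exactly these vertices makes every gadget succeed and every edge transition available, reaching $q_F$. Conversely, every run to $q_F$ commits to some $v_i \in V_i$, and the zero-tests force each division, multiplication, and shift to be exact, so the digits read in the sweeps are genuinely $\mathrm{idx}(v_1), \ldots, \mathrm{idx}(v_{i-1})$ and each passed check certifies $(v_j, v_i) \in E$; hence all $\binom{k}{2}$ pairs are adjacent and $(v_1, \ldots, v_k)$ is a $k$-clique. The selection register never exceeds $N^k = \Oh(n^k)$, and the only transient overflow occurs inside a multiply-by-$N$, reaching at most $N \cdot N^k = \Oh(n^{k+1}) \leq \Oh(n^{2k})$ (using $k \geq 1$); thus both counters stay within $\Oh(n^{2k})$, giving a $(\Oh(n^{2k}), 2)$-VASS. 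The control has $\Oh(k)$ stages, each with an $\Oh(n)$-way vertex branch, $\Oh(k)$ digit-shift gadgets of size $\poly{n}$, and edge transitions over $\Oh(|E|) = \Oh(n^2)$ edges, all built from $\{-1,0,1\}$ updates so that $\mathcal{T}$ is unary; hence $\norm{\mathcal{T}} = \poly{n+k}$ and $\mathcal{T}$ is constructible in $\poly{n+k}$-time.

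The crux, and the reason for the sweep-and-rebuild structure, is that a VASS cannot access an individual base-$N$ digit of the selection register either nondestructively or at random: extracting the digit at position $j$ consumes the register and costs $\Omega(j)$ divisions. A naive read of both endpoints for each of the $\binom{k}{2}$ pairs would therefore spend $\Oh(k^3)$ zero-tests and break the stated $\Oh(k^2)$ budget. The key is to amortise by holding the current vertex $\mathrm{idx}(v_i)$ in the finite control and clearing all of its backward edges within one destructive-but-rebuilding sweep of the partial selection, so each stage costs only $\Oh(k)$ zero-tests; the zero-tests are exactly what guarantees that every such sweep restores $C$ faithfully, ensuring all stages test one and the same selection. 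I expect this amortisation---together with realising the exact digit-shift on only two bounded counters---to be the main technical obstacle.
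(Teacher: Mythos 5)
Your reduction hinges on a primitive that you assert but never construct, and I do not believe it exists in the form you need: a single ``digit shift'' that performs a division-by-$N$ on the selection register \emph{while simultaneously} performing a multiplication-by-$N$ (plus digit append) on the scratch counter. The standard \DivideGadget and \MultiplyGadget gadgets for a $2$-counter machine work by draining one counter into the other and back, and they crucially require the auxiliary counter to be zero at the start and end --- that is what the zero-tests certify. In your sweep, at the moment you want to peel the low-order digit $d_j$ off the register (holding $d_j + Nq$), the other counter already holds the partially rebuilt prefix $R$; any drain-based division merges the quotient $q$ with $R$ into a single value from which the two summands cannot be separated again, and holding $R$ (or all digits read so far) in the finite control instead would cost $n^{\Omega(k)}$ states, violating $\norm{\mathcal{T}} \le \poly{n+k}$. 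You flag ``realising the exact digit-shift on only two bounded counters'' as the main obstacle, but the proof leaves it unresolved, and it is precisely the step where a two-counter machine cannot keep meaningful data in both counters while doing arithmetic on either one. (A smaller inaccuracy: each multiply/divide gadget needs two zero-tests, not one, though this does not affect the $\Oh(k^2)$ budget.)

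The paper avoids this trap by choosing an encoding under which every verification step is of the form ``divide the main counter by a constant, then multiply it back,'' with the second counter used only as \emph{empty} scratch: each vertex is assigned a distinct prime, the selection is stored as the product of the $k$ chosen primes in one counter, and an edge $\set{u,v}$ is checked by testing divisibility by $p_u$ and $p_v$ (divide then re-multiply). This costs $\Oh(1)$ zero-tests per pair, $\Oh(k^2)$ in total, and the counter stays below the $n$-th prime to the $k$-th power, i.e.\ $\Oh(n^{2k})$. If you want to rescue a positional base-$N$ encoding, you would need either a third counter to serve as scratch during the shift or an explicit construction of the in-place shift, neither of which is provided.
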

\begin{proof}
	Without loss of generality, we may assume that each of the $k$ vertex subsets in the graph has the same size $\abs{V_1} = \ldots = \abs{V_k} = \ell$. Thus $n = k \cdot \ell$.
    For convenience, we denote $V = \{1,\ldots,k\}\times\{1,\ldots,\ell\}$.

	We begin by sketching the main ideas behind the reduction before they are implemented.
    We start by finding the first $n = k \cdot \ell$ primes and associating a distinct prime $p_{i,j}$ to each vertex $(i,j) \in V$.
	Note that a product of $k$ different primes uniquely corresponds to selecting $k$ vertices.
	Thus the idea is to guess such a product, and test whether the corresponding verticies form a $k$-clique.
	To simplify the presentation we present VASS also as counter programs, inspired by Esparza's presentation of Lipton's lower bound~\cite[Section 7]{Esparza96}.

	We present an overview of our construction in Algorithm~\ref{algo_vass}. 
	Note that the counter $\vr{y}$ is used only by subprocedures. 
	Initially both counter values are $0$, as in the initial configuration of the coverability instance. 
	The program is non-deterministic and we are interested in the existence of a certain run. 
	One should think that coverability holds if and only if there is a run through the code without getting stuck so to say. 
	In this example a run can be stuck only in the \Edge{e} subprocedure, that will be explained later. 
	The precise final counter values are not important, as we are simply aiming to cover the target counter values $\vec{0}$.
	The variable $i$ (in the first loop) and variables $i$ and $j$ (in the second loop) are just syntactic sugar for copying similar code multiple times. 
	The variables $j$ (in the first loop) and $e$ (in the second loop) allow us to neatly represent non-determinism in a VASS.
	\begin{algorithm}[ht!]
		\SetKwInOut{Input}{input}
		\Input{$\vr{x} = 0$,  $\vr{y} = 0$}
		\BlankLine
		\increase{\vr{x}}{1}\;
		\For{$i \leftarrow 1$ \KwTo $k$}{
		    $\textbf{guess } j \in \{1,\ldots,\ell\}$ \;
		\Multiply{\vr{x}}{p_{i,j}}\;
		}
		\For{$(i,j) \in \{1,\ldots,k\}^2$, $i \neq j$}{
		$\textbf{guess } e \in E \cap (V_i \times V_j)$ \;
		\Edge{e}\;
		}
		\BlankLine
		\caption{A counter program for a VASS with zero tests with two counters $\vr{x}$ and $\vr{y}$.
		}
		\label{algo_vass}
	\end{algorithm}

	Algorithm~\ref{algo_vass} uses the \Multiply{\vr{x}}{p} and \Edge{e} subprocedures.
	These two subprocedures will be implemented later. Note that \Multiply{\vr{x}}{p} takes a counter $\vr{x}$ as input as we later reuse this subprocedure when there is more than one counter subject to multiplication.
	The intended behaviour of \Multiply{\vr{x}}{p} is that it can be performed if and only if as a result we get $\vr{x} = \vr{x} \cdot p$,
 	despite the fact that VASS can only additively increase and decrease counters. 
	The subprocedure \Edge{e} can be performed if and only if both vertices of the edge $e$ are encoded in the value of the counter $\vr{x}$.
	Overall, Algorithm~\ref{algo_vass} is designed so that in the first part the variable $\vr{x}$ is multiplied by $p_{i,j}$, where for every $i$ one $j$ is guessed. This equates to selecting one vertex from each $V_i$. 
	Then the second part the algorithm checks whether between every pair of selected vertices from $V_i$ and $V_j$ there is an edge. 
	Clearly there is a run through the program that does not get stuck if and only if there is $k$-clique in $G$. 

	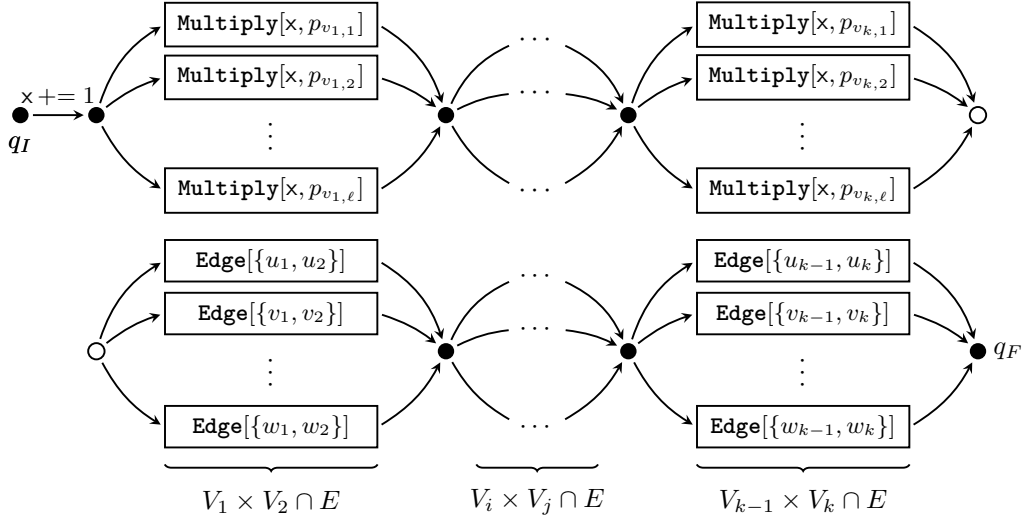
\begin{figure}[ht!]
		\centering
		\scalebox{1}{\begin{tikzpicture}
	\node[fill, circle, inner sep = 0.03in] (p) at (-1, 0) {};
	\node at (-1, -0.4) {$q_I$};
	\node[fill, circle, inner sep = 0.03in] (q1) at (0, 0) {};
	\node[fill, circle, inner sep = 0.03in] (q2) at (4.6, 0) {};
	\node[fill, circle, inner sep = 0.03in] (q3) at (7, 0) {};
	\node[draw, line width = 0.01in, circle, inner sep = 0.03in] (q4) at (11.6, 0) {};
	\node at (-1, -0.4) {$q_I$};
	\node at (12, 0) {\textcolor{white}{$q_F$}};

	\node[draw, rectangle, line width = 0.01in, minimum width = 1.1in] (g11) at (2.3, 1.2) {\small\Multiply{\vr{x}}{p_{v_{1,1}}}};
	\node[draw, rectangle, line width = 0.01in, minimum width = 1.1in] (g12) at (2.3, 0.5) {\small\Multiply{\vr{x}}{p_{v_{1,2}}}};
	\node[rotate = 90] (g1d) at (2.3, -0.25) {$\cdots$};
	\node[draw, rectangle, line width = 0.01in, minimum width = 1.1in] (g1n) at (2.3, -1) {\small\Multiply{\vr{x}}{p_{v_{1,\ell}}}};

	\node (g21) at (5.6, 1) {};
	\node (g22) at (5.6, 0.3) {};
	\node (g2n) at (5.6, -1) {};
	\node (qd) at (5.8, 1) {\large$\cdots$};
	\node (qd) at (5.8, 0.3) {\large$\cdots$};
	\node (qd) at (5.8, -1) {\large$\cdots$};
	\node (g31) at (6, 1) {};
	\node (g32) at (6, 0.3) {};
	\node (g3n) at (6, -1) {};

	\node[draw, rectangle, line width = 0.01in, minimum width = 1.1in] (gk1) at (9.3, 1.2) {\small\Multiply{\vr{x}}{p_{v_{k,1}}}};
	\node[draw, rectangle, line width = 0.01in, minimum width = 1.1in] (gk2) at (9.3, 0.5) {\small\Multiply{\vr{x}}{p_{v_{k,2}}}};
	\node[rotate = 90] (gkd) at (9.3, -0.25) {$\cdots$};
	\node[draw, rectangle, line width = 0.01in, minimum width = 1.1in] (gkn) at (9.3, -1) {\small\Multiply{\vr{x}}{p_{v_{k,\ell}}}};

	\draw[-stealth, line width = 0.01in, shorten <= 0.02in, shorten >= 0.02in] (p) -- node[above]{\small\increase{\vr{x}}{1}} (q1);

	\path[-stealth, line width = 0.01in, shorten <= 0.02in, shorten >= 0.02in] (q1) edge[bend left = 20] (g11.west);
	\path[-stealth, line width = 0.01in, shorten <= 0.02in, shorten >= 0.02in] (g11.east) edge[bend left = 20] (q2);
	\path[-stealth, line width = 0.01in, shorten <= 0.02in, shorten >= 0.02in] (q1) edge[bend left = 10] (g12.west);
	\path[-stealth, line width = 0.01in, shorten <= 0.02in, shorten >= 0.02in] (g12.east) edge[bend left = 10] (q2);
	\path[-stealth, line width = 0.01in, shorten <= 0.02in, shorten >= 0.02in] (q1) edge[bend right = 15] (g1n.west);
	\path[-stealth, line width = 0.01in, shorten <= 0.02in, shorten >= 0.02in] (g1n.east) edge[bend right = 15] (q2);

	\path[line width = 0.01in, shorten <= 0.02in, shorten >= 0.02in] (q2) edge[bend left = 20] (g21);
	\path[line width = 0.01in, shorten <= 0.02in, shorten >= 0.02in] (q2) edge[bend left = 10] (g22);
	\path[line width = 0.01in, shorten <= 0.02in, shorten >= 0.02in] (q2) edge[bend right = 15] (g2n);
	\path[-stealth, line width = 0.01in, shorten <= 0.02in, shorten >= 0.02in] (g31) edge[bend left = 20] (q3);
	\path[-stealth, line width = 0.01in, shorten <= 0.02in, shorten >= 0.02in] (g32) edge[bend left = 10] (q3);
	\path[-stealth, line width = 0.01in, shorten <= 0.02in, shorten >= 0.02in] (g3n) edge[bend right = 15] (q3);

	\path[-stealth, line width = 0.01in, shorten <= 0.02in, shorten >= 0.02in] (q3) edge[bend left = 20] (gk1.west);
	\path[-stealth, line width = 0.01in, shorten <= 0.02in, shorten >= 0.02in] (gk1.east) edge[bend left = 20] (q4);
	\path[-stealth, line width = 0.01in, shorten <= 0.02in, shorten >= 0.02in] (q3) edge[bend left = 10] (gk2.west);
	\path[-stealth, line width = 0.01in, shorten <= 0.02in, shorten >= 0.02in] (gk2.east) edge[bend left = 10] (q4);
	\path[-stealth, line width = 0.01in, shorten <= 0.02in, shorten >= 0.02in] (q3) edge[bend right = 15] (gkn.west);
	\path[-stealth, line width = 0.01in, shorten <= 0.02in, shorten >= 0.02in] (gkn.east) edge[bend right = 15] (q4);
\end{tikzpicture}}
		\\[3mm]
		\scalebox{1}{\begin{tikzpicture}
	\node[draw, line width = 0.01in, circle, inner sep = 0.03in] (q1) at (0, 0) {};
	\node[fill, circle, inner sep = 0.03in] (q2) at (4.6, 0) {};
	\node[fill, circle, inner sep = 0.03in] (q3) at (7, 0) {};
	\node[fill, circle, inner sep = 0.03in] (q4) at (11.6, 0) {};
		\node at (12, 0) {$q_F$};
		\node at (-1, -0.4) {\textcolor{white}{$q_I$}};

	\node[draw, rectangle, line width = 0.01in, minimum width = 1.1in] (g11) at (2.3, 1.2) {\small\Edge{\set{u_1, u_2}}};
	\node[draw, rectangle, line width = 0.01in, minimum width = 1.1in] (g12) at (2.3, 0.5) {\small\Edge{\set{v_1, v_2}}};
	\node[rotate = 90] (g1d) at (2.3, -0.25) {$\cdots$};
	\node[draw, rectangle, line width = 0.01in, minimum width = 1.1in] (g1n) at (2.3, -1) {\small\Edge{\set{w_1, w_2}}};
	\draw[line width = 0.01in, decoration={brace, mirror}, decorate] (0.9, -1.5) -- (3.7, -1.5);
	\node at (2.3, -2) {$V_1 \times V_2 \cap E$};

	\node (g21) at (5.6, 1) {};
	\node (g22) at (5.6, 0.3) {};
	\node (g2n) at (5.6, -1) {};
	\node (qd) at (5.8, 1) {$\cdots$};
	\node (qd) at (5.8, 0.3) {$\cdots$};
	\node (qd) at (5.8, -1) {$\cdots$};
	\node (g31) at (6, 1) {};
	\node (g32) at (6, 0.3) {};
	\node (g3n) at (6, -1) {};
	\draw[line width = 0.01in, decoration={brace, mirror}, decorate] (5, -1.5) -- (6.6, -1.5);
	\node at (5.8, -2) {$V_i \times V_j \cap E$};

	\node[draw, rectangle, line width = 0.01in, minimum width = 1.1in] (gk1) at (9.3, 1.2) {\small\Edge{\set{u_{k-1}, u_k}}};
	\node[draw, rectangle, line width = 0.01in, minimum width = 1.1in] (gk2) at (9.3, 0.5) {\small\Edge{\set{v_{k-1}, v_k}}};
	\node[rotate = 90] (gkd) at (9.3, -0.25) {$\cdots$};
	\node[draw, rectangle, line width = 0.01in, minimum width = 1.1in] (gkn) at (9.3, -1) {\small\Edge{\set{w_{k-1}, w_k}}};
	\draw[line width = 0.01in, decoration={brace, mirror}, decorate] (7.9, -1.5) -- (10.7, -1.5);
	\node at (9.3, -2) {$V_{k-1} \times V_{k} \cap E$};

	\path[-stealth, line width = 0.01in, shorten <= 0.02in, shorten >= 0.02in] (q1) edge[bend left = 20] (g11.west);
	\path[-stealth, line width = 0.01in, shorten <= 0.02in, shorten >= 0.02in] (g11.east) edge[bend left = 20] (q2);
	\path[-stealth, line width = 0.01in, shorten <= 0.02in, shorten >= 0.02in] (q1) edge[bend left = 10] (g12.west);
	\path[-stealth, line width = 0.01in, shorten <= 0.02in, shorten >= 0.02in] (g12.east) edge[bend left = 10] (q2);
	\path[-stealth, line width = 0.01in, shorten <= 0.02in, shorten >= 0.02in] (q1) edge[bend right = 15] (g1n.west);
	\path[-stealth, line width = 0.01in, shorten <= 0.02in, shorten >= 0.02in] (g1n.east) edge[bend right = 15] (q2);

	\path[line width = 0.01in, shorten <= 0.02in, shorten >= 0.02in] (q2) edge[bend left = 20] (g21);
	\path[line width = 0.01in, shorten <= 0.02in, shorten >= 0.02in] (q2) edge[bend left = 10] (g22);
	\path[line width = 0.01in, shorten <= 0.02in, shorten >= 0.02in] (q2) edge[bend right = 15] (g2n);
	\path[-stealth, line width = 0.01in, shorten <= 0.02in, shorten >= 0.02in] (g31) edge[bend left = 20] (q3);
	\path[-stealth, line width = 0.01in, shorten <= 0.02in, shorten >= 0.02in] (g32) edge[bend left = 10] (q3);
	\path[-stealth, line width = 0.01in, shorten <= 0.02in, shorten >= 0.02in] (g3n) edge[bend right = 15] (q3);

	\path[-stealth, line width = 0.01in, shorten <= 0.02in, shorten >= 0.02in] (q3) edge[bend left = 20] (gk1.west);
	\path[-stealth, line width = 0.01in, shorten <= 0.02in, shorten >= 0.02in] (gk1.east) edge[bend left = 20] (q4);
	\path[-stealth, line width = 0.01in, shorten <= 0.02in, shorten >= 0.02in] (q3) edge[bend left = 10] (gk2.west);
	\path[-stealth, line width = 0.01in, shorten <= 0.02in, shorten >= 0.02in] (gk2.east) edge[bend left = 10] (q4);
	\path[-stealth, line width = 0.01in, shorten <= 0.02in, shorten >= 0.02in] (q3) edge[bend right = 15] (gkn.west);
	\path[-stealth, line width = 0.01in, shorten <= 0.02in, shorten >= 0.02in] (gkn.east) edge[bend right = 15] (q4);
\end{tikzpicture}}
		\caption{The top part of the VASS implements the first line and the first loop in Algorithm~\ref{algo_vass}. 
		The variable $\vr{x}$ is multiplied by $k$ non-deterministically chosen primes $p_{i,j}$, each corresponding to a vertex in $V_i$. 
		The bottom part of the VASS implements the second loop in Algorithm~\ref{algo_vass}. 
		For every pair $i \neq j$ the VASS non-deterministically chooses $e \in V_i \cap V_j$ and invokes the subprocedure \Edge{e}.}
		\label{fig:clique-vass-implementation}
	\end{figure}
	In Figure~\ref{fig:clique-vass-implementation} we present a VASS with zero-tests implementing Algorithm~\ref{algo_vass}. 
	The construction will guarantee that $q_F(\vec{0})$ can be covered from $q_I(\vec{0})$ if and only if there is a $k$-clique in $G$.

	It remains to define the subprocedures. 
	One should think that every call of a subprocedure corresponds to a unique part of the VASS, like a gadget of sorts.
	To enter and leave the subprocedure one needs to add trivial transitions that to do not change the counter values.  
	All subprocedures rely on the invariant $\vr{y} = 0$ at the beginning and admit the invariant at the end.

	We start with \Multiply{\vr{x}}{p} and \Divide{\vr{x}}{p} that indeed multiply and divide $\vr{x}$ by $p$, respectively. 
	See Algorithm~\ref{alg:multiply-divide} for the counter program and VASS implementations.	
	Notice that the repeat loops correspond to the self-loops in the VASS. 
	In the \Multiply{\vr{x}}{p} gadget, it is easy to see that a run passes through the procedure if and only if the counter $\vr{x}$ is multiplied by $p$.
	Similarly, in the \Divide{\vr{x}}{p} gadget, it is easy to see that a run pass through the procedure if and only if the counter $\vr{x}$ is divided by $p$ wholly.
	Indeed, the division procedure would get stuck if $p \nmid \vr{x}$ because it will be impossible to exit the first loop.

	\pagebreak
	\begin{algorithm}[ht!]
		\begin{minipage}{.45\textwidth}
	  	\BlankLine
		\SetKwInOut{Input}{input}\SetKwInOut{Output}{output}
		\Input{$\vr{x} = v$, $\vr{y} = 0$}
		\Output{$\vr{x} = v \cdot p$,  $\vr{y} = 0$}
		\BlankLine
		\Repeat{$\vr{x}=0$}{
		\decrease{\vr{x}}{1}; \xspace\increase{\vr{y}}{1}\;
		}
		\Repeat{$\vr{y}=0$}{
		\increase{\vr{x}}{p}; \xspace\decrease{\vr{y}}{1}\;
		}
		\BlankLine
		\begin{tikzpicture}
	\node[fill, circle, inner sep = 0.03in] (am) at (0, 0) {};
	\node[fill, circle, inner sep = 0.03in] (bm) at (2, 0) {};
	\node[fill, circle, inner sep = 0.03in] (cm) at (4, 0) {};
	
	\node at (2, 2) {};
	\node at (2, -0.3) {};
	
	\draw[-stealth, line width = 0.01in, shorten <= 0.02in, shorten >= 0.02in] (am) -- node[above]{$\vr{x}=0$} (bm);
	\draw[-stealth, line width = 0.01in, shorten <= 0.02in, shorten >= 0.02in] (bm) -- node[above]{$\vr{y}=0$} (cm);

	\draw[-stealth, line width = 0.01in, shorten <= 0.02in, shorten >= 0.02in] (am) edge[loop above, distance = 0.5in, out=120, in=60]  (am);
	\node at (0, 1.7) {\decrease{\vr{x}}{1}};
	\node at (0, 1.2) {\increase{\vr{y}}{1}};
	\draw[-stealth, line width = 0.01in, shorten <= 0.02in, shorten >= 0.02in] (bm) edge[loop above, distance = 0.5in, out=120, in=60] (bm);
	\node at (2, 1.7) {\increase{\vr{x}}{p}};
	\node at (2, 1.2) {\decrease{\vr{y}}{1}};
\end{tikzpicture}
		\end{minipage}\begin{minipage}{.1\textwidth}
			\begin{tikzpicture}
				\draw[color=gray] (0, 0) -- (0, 6.25);
			\end{tikzpicture}
		\end{minipage}\begin{minipage}{.45\textwidth}
	  	\BlankLine
		\SetKwInOut{Input}{input}\SetKwInOut{Output}{output}
		\Input{$\vr{x} = v \cdot p$, $\vr{y} = 0$}
		\Output{$\vr{x} = v$,  $\vr{y} = 0$}
		\BlankLine
		\Repeat{$\vr{x}=0$}{
		\decrease{\vr{x}}{p}; \xspace\increase{\vr{y}}{1}\;
		}
		\Repeat{$\vr{y}=0$}{
		\increase{\vr{x}}{1}; \xspace\decrease{\vr{y}}{1}\;
		}
		\BlankLine
		\begin{tikzpicture}
	\node[fill, circle, inner sep = 0.03in] (ad) at (0, 0) {};
	\node[fill, circle, inner sep = 0.03in] (bd) at (2, 0) {};
	\node[fill, circle, inner sep = 0.03in] (cd) at (4, 0) {};
	
	\node at (2, 2) {};
	\node at (2, -0.3) {};
	
	\draw[-stealth, line width = 0.01in, shorten <= 0.02in, shorten >= 0.02in] (ad) -- node[above]{$\vr{x}=0$} (bd);
	\draw[-stealth, line width = 0.01in, shorten <= 0.02in, shorten >= 0.02in] (bd) -- node[above]{$\vr{y}=0$} (cd);

	\draw[-stealth, line width = 0.01in, shorten <= 0.02in, shorten >= 0.02in] (ad) edge[loop above, distance = 0.5in, out=120, in=60] (ad);
	\node at (0, 1.7) {\decrease{\vr{x}}{p}};
	\node at (0, 1.2) {\increase{\vr{y}}{1}};
	\draw[-stealth, line width = 0.01in, shorten <= 0.02in, shorten >= 0.02in] (bd) edge[loop above, distance = 0.5in, out=120, in=60] (bd);
	\node at (2, 1.7) {\increase{\vr{x}}{1}};
	\node at (2, 1.2) {\decrease{\vr{y}}{1}};
\end{tikzpicture}
		\end{minipage}
		\caption{The counter program of \Multiply{\vr{x}}{p} above its VASS implementation (left) and the counter program of \Divide{\vr{x}}{p} above its VASS implementation (right).}
		\label{alg:multiply-divide}
	\end{algorithm}

	The procedure \Edge{\set{u,v}} is very simple, it is a sequence of four subprocedures, see Algorithm~\ref{algo_edge}. 
	Indeed, to check if the vertices from edge $e$ are encoded in $\vr{x}$ we simply check whether $\vr{x}$ is divisible by the corresponding primes. 
	Afterwards we multiply $\vr{x}$ with the same primes so that the value does not change and it is ready for future edge checks.
	\begin{algorithm}[ht!]
		\begin{minipage}{.5\textwidth}
			\SetKwInOut{Input}{input}\SetKwInOut{Output}{output}
			\Input{$\vr{x} = v$, $\vr{y} = 0$}
			\Output{$\vr{x} = v$, $\vr{y} = 0$}
			\BlankLine
			\Divide{\vr{x}}{p_u} \;
			\BlankLine
			\Multiply{\vr{x}}{p_u} \;
			\BlankLine
			\Divide{\vr{x}}{p_v} \;
			\BlankLine
			\Multiply{\vr{x}}{p_v} \;
			\BlankLine
		\end{minipage}\begin{minipage}{.5\textwidth}
			\begin{tikzpicture}
	\node[fill, circle, inner sep = 0.03in] (p) at (-0.5, 0) {};
	\node[fill, circle, inner sep = 0.03in] (q) at (4.5, -3) {};
	\node at (0, -3.5) {};

	\node[draw, rectangle, line width = 0.01in, minimum width = 1.1in] (du) at (2, 0) {\small\Divide{\vr{x}}{p_u}};
	\node[draw, rectangle, line width = 0.01in, minimum width = 1.1in] (mu) at (2, -1) {\small\Multiply{\vr{x}}{p_u}};
	\node[draw, rectangle, line width = 0.01in, minimum width = 1.1in] (dv) at (2, -2) {\small\Divide{\vr{x}}{p_v}};
	\node[draw, rectangle, line width = 0.01in, minimum width = 1.1in] (mv) at (2, -3) {\small\Multiply{\vr{x}}{p_v}};

	\draw[-stealth, line width = 0.01in, shorten <= 0.02in, shorten >= 0.02in] (p) -- (du);
	\draw[-stealth, line width = 0.01in, shorten <= 0.02in, shorten >= 0.02in] (du) -- (mu);
	\draw[-stealth, line width = 0.01in, shorten <= 0.02in, shorten >= 0.02in] (mu) -- (dv);
	\draw[-stealth, line width = 0.01in, shorten <= 0.02in, shorten >= 0.02in] (dv) -- (mv);
	\draw[-stealth, line width = 0.01in, shorten <= 0.02in, shorten >= 0.02in] (mv) -- (q);
\end{tikzpicture}
		\end{minipage}
		\caption{The counter program for \Edge{\set{u, v}} and its VASS implementation.}
		\label{algo_edge}
	\end{algorithm}
	
	It remains to analyse the size of the VASS and its construction time in this reduction time.
    In every run from $q_I(\vec{0})$ to $q_F(\vec{v})$, for some $\vec{v} \geq \vec{0}$, the greatest counter value observable can be bounded above by $p^k$ where $p$ is the $n$-th prime.
    By the Prime Number Theorem (for example, see~\cite{Zagier79}), we know that $p^k \leq \Oh((n\log(n))^k) \leq \Oh(n^{2k})$ is an upper bound on the counter values observed.
    Hence $\mathcal{T}$ is an $\Oh(n^{2k})$-bounded unary 2-VASS.

    Now, we count the number of zero-tests performed in each run from $q_I(\vec{0})$ to $q_F(\vec{v})$, for some $\vec{v} \geq
    \vec{0}$.
    The only zero-tests occur in the instances of the \MultiplyGadget and \DivideGadget subprocedures, each performing two zero-tests.
    In the first part of $\mathcal{T}$, a run will encounter $k$ many \MultiplyGadget subprocedures, contributing $2k$ many zero-tests. 
    In the second part of $\mathcal{T}$, a run will encounter $k \choose 2$ many \EdgeGadget subprocedures, each containing two \MultiplyGadget subprocedures and two \DivideGadget subprocedures, in total contributing $8 {k \choose 2}$ many zero-tests.
    Together, every run encounters exactly $2k + 8 {k \choose 2} = 2k(2k-1)$ many zero-tests.
    Hence $\mathcal{T}$ is an $\Oh(n^{2k})$-bounded unary 2-VASS with $2k(2k-1)$ zero-tests.

	Finally, the \MultiplyGadget and \DivideGadget subprocedures contain three states and five transitions.
    Since the $n$-th prime is bounded above by $\Oh(n\log(n))$, we also get
    $\norm{\mathcal{T}} = \Oh(n\log(n))$, hence our VASS can be
    represented using unary encoding.
	Analysing Algorithm~\ref{algo_vass}, it is easy to see that overall the number of states is polynomial in $n$.
    Finally, the first $n$ primes can be found in $\Oh(n^{1+o(1)})$ time~\cite{AgrawalKS04}.
	Therefore, in total $\mathcal{T}$ has size $\norm{\mathcal{T}} = \poly{n+k}$ and can be constructed in $\poly{n+k}$ time.
\end{proof}

To attain conditional lower bounds for coverability we must replace the
zero-tests. 
We make use of a technique of Rosier and Yen~\cite{RosierY86} that relies on the construction of Lipton~\cite{Lipton76}.
They show that a $(2n)^{2^k}$-bounded counter machine with finite state control can be simulated by a unary $(\Oh(k))$-VASS with $n$ states. 
As Rosier and Yen detail after their proof, it is possible to apply this technique to multiple counters with zero-tests at once~\cite{RosierY86}. 
This accordingly results in the number of VASS counters increasing, but we instantiate this with just two counters.
We remark that the VASS constructed in Lemma~\ref{lem:k-clique-reduction} is structurally bounded, so for any initial configuration there is a limit on the largest observable counter value, as is the VASS Lipton constructed.

\begin{lemma}[Corollary of~{\cite[Lemma 4.3]{RosierY86}}] \label{lem:rosier-yen}
	Let $\mathcal T$ be an $n$-state unary $(n^{\Oh(k)}, 2)$-VASS with zero-tests, for some parameter $k$.
    Then there exists an $\Oh(n)$-state $(\Oh(\log{k}))$-VASS $\mathcal{V}$, such
    that there is a run from $q_I(\vec{0})$ to $q_F(\vec{v})$, for some $\vec{v}
    \geq \vec{0}$, in $\mathcal{T}$ if and only if there is a run from $q_I'(\vec{0})$ to $q_F'(\vec{w})$, for some $\vec{w} \geq \vec{0}$, in $\mathcal{V}$.
    Moreover, $\mathcal{V}$ has size $\Oh(\abs{\mathcal{T}})$ and can be constructed in the same time.
\end{lemma}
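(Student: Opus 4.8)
The plan is to invoke the construction of Rosier and Yen~\cite{RosierY86}, which itself rests on Lipton~\cite{Lipton76}, and to check that our parameters fit. Recall the statement being used: a finite-control counter machine with zero-tests (equivalently, a bounded VASS with zero-tests) whose counter values never exceed $(2n)^{2^{k'}}$ can be simulated by a unary $(\Oh(k'))$-VASS with $\Oh(n)$ states in a manner that preserves control-state reachability. Since in both $\mathcal{T}$ and $\mathcal{V}$ the target merely asks to reach a control state ($q_F$, respectively $q_F'$) while covering $\vec{0}$---which every non-negative configuration does---the equivalence we need is precisely control-state reachability, so it suffices to verify the parameter bounds.

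For the parameter translation, write the counter bound of $\mathcal{T}$ as $n^{ck}$ for a constant $c$. As $2^{k'} \ge ck$ holds as soon as $k' \ge \log(ck)$, the choice $k' = \ceil{\log(ck)} = \Oh(\log k)$ gives $n^{ck} \le (2n)^{2^{k'}}$. Hence every counter value reachable in $\mathcal{T}$ lies within the budget of the construction at level $k'$, and the simulating VASS has dimension $\Oh(k') = \Oh(\log k)$, exactly as claimed. The machine $\mathcal{T}$ carries two counters, and Rosier and Yen remark that their construction applies to several counters with zero-tests at once, each contributing additively to the dimension; applied to the two counters $\vr{x}$ and $\vr{y}$ this leaves the dimension at $\Oh(\log k)$.

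It remains to account for the size and construction time. In the construction, each state and transition of $\mathcal{T}$ is replaced by a gadget of $\Oh(1)$ size, while the zero-testing machinery---shared across all gadgets---uses $\Oh(\log k)$ counters and a number of states polynomial in the level, which is $\Oh(\log k)$. The total number of states is therefore $\Oh(n)$ and the total size is $\Oh(\norm{\mathcal{T}})$, with the VASS produced in time proportional to its size, matching the construction time allotted for $\mathcal{T}$.

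The step I expect to be the crux---though it is exactly what~{\cite[Lemma 4.3]{RosierY86}} already supplies---is the \emph{soundness} of the simulated zero-tests: I must rule out that $\mathcal{V}$ reaches $q_F'$ by ``cheating'', that is, by passing a test gadget while the simulated counter is nonzero. This is guaranteed by Lipton's complementary-counter invariant, in which each simulated counter $c$ is paired with one holding $B - c$, so that testing $c$ for zero reduces to subtracting the fixed constant $B$ from the complement and later adding it back; both operations are forced to transfer \emph{exactly} $B$ by the non-negativity of the VASS counters and, recursively, by the zero-tests one level down. Because this enforcement is local to each gadget and never appeals to the exact final counter values, it survives the relaxation from reachability to coverability, and instantiating~{\cite[Lemma 4.3]{RosierY86}} with the parameters above completes the proof.
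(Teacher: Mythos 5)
Your proposal is correct and follows essentially the same route as the paper: the paper gives no standalone proof of this lemma, but derives it exactly as you do, by citing Rosier and Yen's simulation of a $(2n)^{2^{k'}}$-bounded counter machine by a unary $\Oh(k')$-VASS, noting their remark that it extends to multiple zero-tested counters, and choosing $k' = \Oh(\log k)$ so that $n^{\Oh(k)} \le (2n)^{2^{k'}}$. Your additional discussion of the complementary-counter soundness argument is a reasonable sanity check but is already supplied by the cited lemma, as you acknowledge.
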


With this, we can finish the proof of our main theorem for this section.

\begin{proof}[Proof of Theorem~\ref{thm:unary-kvass-coverability-lb}]
	Let $k = 2^d$. 
	We instantiate Lemma~\ref{lem:k-clique-reduction} on $k$-partite graphs $G$ with $n$ vertices.
	We therefore obtain a unary $(n^{2^{\Oh(d)}}, 2)$-VASS with zero tests $\mathcal{T}$ such that
    $G$ contains a $k$-clique if and only if there is a run from $q_I(\vec{0})$ to $q_F(\vec{v})$, for some $\vec{v} \geq \vec{0}$, in $\mathcal{T}$.

	Given the bound on the value of the counters, we can apply Lemma~\ref{lem:rosier-yen} to $\mathcal{T}$.
	This gives us an $\Oh(n)$-state $(\Oh(d))$-VASS $\mathcal{V}$ such that $G$ contains a $k$-clique if and only if there is a run from $q_I'(\vec{0})$ to $q_F'(\vec{w})$, for some $\vec{w} \geq \vec{0}$, in $\mathcal{V}$.

    By Theorem~\ref{thm:k-clique} we conclude that under the Exponential Time Hypothesis there does not exist an $n^{2^{o(d)}}$-time algorithm deciding coverability in unary $d$-VASS.
\end{proof}

\section{Coverability and Reachability in Bounded Unary VASS}
\label{sec:bounded-vass}

In this section, we give even tighter bounds for coverability in \emph{bounded} fixed dimension unary VASS. 
Specifically, for a time constructible function $B(n)$, the coverability problem in $(B(n), d)$-VASS asks, for a given $(B(n), d)$-VASS $\Vv = (Q, T)$ of size $n$ as well as configurations $p(\vec{u})$, $q(\vec{v})$, whether there is a run in $\Vv$ from $p(\vec{u})$ to $q(\vec{v}')$ for some $\vec{v}' \geq \vec{v}$ such that each counter value remains in $\set{0,\ldots, B(n)}$ throughout. 
We would like to clarify the fact that the bound is not an input parameter.
We focus on the natural setting of linearly-bounded fixed dimension VASS, that is $(\Oh(n), d)$-VASS.
There is simple algorithm, presented in the proof of in Observation~\ref{obs:baseline_bounded}, that yields an immediate $\Oh(n^{d+1})$ upper bound for the time needed to decide the coverability problem.
We accompany this observation with closely matching lower bounds, please see Table~\ref{tab:coverability-time-bounds} for an overview.
\begin{table}[ht]
	\centering
	\begin{tabular}{c | c | c c}
		$d$ & Lower Bound & Upper Bound \\ \hline
		$0$ & $\Omega(n)$ (trivial) & $\Oh(n)$ \\
		$1$ & $n^{2-o(1)}$ (Theorem~\ref{thm:k-cycle-to-vass}) & $\Oh(n^2)$ \\
		$2$ & $n^{2-o(1)}$ (from above) & $\Oh(n^3)$ \\
		$3$ & $n^{2-o(1)}$ (from above) & $\Oh(n^4)$ \\
		$d \geq 4$ & $n^{d-2-o(1)}$ (Theorem~\ref{thm:bounded-reachability}) & $\Oh(n^{d+1})$
	\end{tabular}
	\caption{Conditional lower bounds and upper bounds of the time complexity of coverability and reachability in unary $(\Oh(n),d)$-VASS. 
	For clarity, we remark that Theorem~\ref{thm:k-cycle-to-vass} is subject to Hypothesis~\ref{hyp:k-cycle-hypothesis} and that Theorem~\ref{thm:bounded-reachability} is subject to Hypothesis~\ref{hyp:hyperclique}.
	Note that the lower bounds for dimensions $d=2$ and $d=3$ follow from Theorem~\ref{thm:k-cycle-to-vass} by just adding components consisting of only zeros.
	All upper bounds follow from Observation~\ref{obs:baseline_bounded}.}
	\label{tab:coverability-time-bounds}
\end{table}

\begin{observation}\label{obs:baseline_bounded}
Coverability in an $n$-sized unary $(B(n), d)$-VASS can be solved in $\Oh(n(B(n)+1)^d)$-time.
\end{observation}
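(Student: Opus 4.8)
The plan is to perform a standard graph search over the (finite) space of reachable configurations, exploiting that the counters are bounded by $B(n)$. I would work with the \emph{configuration graph} $G$ whose vertices are the configurations $(q, \vec{v})$ with $q \in Q$ and $\vec{v} \in \{0, \ldots, B(n)\}^d$, and whose edges connect $p(\vec{u})$ to $q(\vec{v})$ whenever there is a transition $(p, \vec{x}, q) \in T$ with $\vec{v} = \vec{u} + \vec{x}$ and both endpoints are valid $B(n)$-bounded configurations. Coverability then amounts to asking whether, starting from the initial configuration $p(\vec{u})$, some configuration $q(\vec{v}')$ with $\vec{v}' \geq \vec{v}$ is reachable in $G$; this is decided by a single breadth-first (or depth-first) search from $p(\vec{u})$, testing the covering condition $\vec{v}' \geq \vec{v}$ each time a vertex with state $q$ is visited. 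There is no need to materialise $G$ explicitly: the search can enumerate the outgoing edges of a configuration on-the-fly from the transition list, and a visited-array indexed by the (state, counter-vector) pair provides constant-time membership tests, so no logarithmic overhead is incurred.

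For the running time, the key is to bound the total size of $G$. The number of vertices is at most $|Q| \cdot (B(n)+1)^d \leq n \cdot (B(n)+1)^d$, since $|Q| \leq n$. For the edges, I would count them \emph{by transition} rather than by vertex: each transition $(p, \vec{x}, q) \in T$ induces at most one outgoing edge from each of the $(B(n)+1)^d$ configurations whose state is $p$, so $G$ has at most $|T| \cdot (B(n)+1)^d \leq n \cdot (B(n)+1)^d$ edges. Since a graph search runs in time linear in the number of vertices plus edges, and each of these quantities is $\Oh(n(B(n)+1)^d)$, the total running time is $\Oh(n(B(n)+1)^d)$, as claimed.

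The only thing that requires care is this edge count: the naïve bound of (number of vertices) $\times$ (maximum out-degree) risks an extra factor of $n$ when the out-degree is itself linear in $n$, whereas the per-transition count above cleanly yields a single factor of $n$. Beyond this accounting there is no genuine obstacle, as the search visits each configuration at most once and each covering test costs $\Oh(d)$, which is absorbed into the stated bound for fixed dimension $d$. One could alternatively invoke Lemma~\ref{lem:cov_to_reach} to reduce coverability to reachability and search for the exact target, but performing the covering check directly during the search avoids even that preprocessing.
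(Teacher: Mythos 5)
Your proposal is correct and follows essentially the same route as the paper's proof: a depth-first/breadth-first search over the finite configuration space $Q \times \{0,\ldots,B(n)\}^d$, with the running time obtained by charging each state and each transition once per admissible counter vector, and accepting upon witnessing any configuration covering the target. The per-transition edge count you emphasise is exactly the accounting the paper uses (each transition considered at most once per vector), so there is no substantive difference.
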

\begin{proof}
    Since all configuration in a $(B(n), d)$-VASS belong to the finite set $Q \times \{0,\ldots,B(n)\}^d$, we can exhaustively explore all configurations reachable from $p(\vec{v})$ using a straightforward depth-first search. 
	Each state $q \in Q$ and each transition $t \in T$ will be considered at most once for each admissible vector in $\{0,\ldots,B(n)\}^d$, requiring time $\Oh(n(B(n)+1)^d)$ since $\abs{Q}, \abs{T} \leq n$. 
	We accept the instance if and only if we ever witnessed a configuration $q(\vec{v}')$ for some $\vec{v}' \geq \vec{v}$.  
\end{proof}

\subsection*{Lower Bounds for Coverability in Linearly-Bounded VASS}

Now, we consider lower bounds for the coverability problem in linearly-bounded fixed dimension unary VASS. 
Firstly, in dimension one, we show that quadratic running time is conditionally optimal under the $k$-cycle hypothesis. 
Secondly, in dimensions four and higher, we require a running time at least $n^{d-2-o(1)}$ under the 3-uniform hyperclique hypothesis. 
Together, this provides evidence that the simple $\Oh(n^{d+1})$ algorithm for coverability in $(\Oh(n), d)$-VASS is close to optimal, as summarised in Table~\ref{tab:coverability-time-bounds}.

\begin{hypothesis}[$k$-Cycle Hypothesis]\label{hyp:k-cycle-hypothesis}
	For every $\varepsilon > 0$, there exists some $k$ such that there does not exist a $\Oh(m^{2-\varepsilon})$-time algorithm for finding a $k$-cycle in directed graphs with $m$ edges.
\end{hypothesis}

The $k$-cycle hypothesis arises from the state-of-the-art $\Oh(m^{2-\frac{c}{k}+o(1)})$-time algorithms, where $c$ is some constant~\cite{AlonYZ97,YusterZ04,DalirrooyfardVW21}. 
It has been previously used as an assumption for hardness results, for example, see~\cite{LincolnVWW18,AnconaHRWW19,DalirrooyfardJVWW22}.
It is a standard observation, due to colour-coding arguments, that we may without loss of generality assume that the graph given is a $k$-circle-layered graph~{\cite[Lemma 2.2]{LincolnVWW18}}.
Specifically, we can assume that the input graph $G=(V, E)$ has vertex partition $V=V_0\cup \cdots \cup V_{k-1}$ such that each edge $\set{u, v} \in E$ is in $V_i \times V_{i+1\Mod{k}}$ for some $0 \le i < k$. 
Furthermore, we may assume $|V| \leq |E|$.

\begin{lemma}\label{lem:k-cycle-reduction}
	Given a $k$-circle-layered graph $G = (V_0 \cup \cdots \cup V_{k-1}, E)$
	with $m$ edges, there exists a unary $(\Oh(n), 1)$-VASS $\Vv$ such that there is a $k$-cycle in $G$ if and only if there exists a run from $p(0)$ to $q(0)$ in $\Vv$. 
    Moreover, $\Vv$ has size $n \leq \Oh(m)$ and can be constructed in $\Oh(m)$ time.
\end{lemma}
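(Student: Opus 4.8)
The plan is to build a one-counter machine that traverses a candidate cycle through the layers $V_0, V_1, \ldots, V_{k-1}$ while tracking the current vertex in the finite control, and that uses the single counter solely to remember the identity of the starting vertex in $V_0$ so that it can verify the cycle closes up. Fix an arbitrary injective indexing $\mathrm{idx}\colon V_0 \to \{1, \ldots, |V_0|\}$. The counter will hold $\mathrm{idx}(v_0)$ for the guessed start vertex $v_0$ throughout the traversal, and a final check will confirm that the vertex reached after $k$ steps has the same index, which by injectivity forces it to equal $v_0$.

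Concretely, I would assemble $\Vv$ from three parts glued together. First, an increment chain $p = c_0 \trans{} c_1 \trans{} \cdots \trans{} c_{|V_0|}$ whose transitions all have update $+1$; from each $c_j$ (for $j \ge 1$) a zero-update transition leads to the state $s_v$ representing the unique $v \in V_0$ with $\mathrm{idx}(v) = j$, thereby both fixing the counter to $j$ and committing to the start vertex $v$. Second, a traversal gadget with a state $s_v$ for every vertex $v$ and a zero-update transition $s_u \trans{} s_w$ for every edge $(u,w) \in E$ with $u \in V_i$, $w \in V_{i+1}$, and $0 \le i \le k-2$; this lets the run walk from $v_0$ through one vertex of each successive layer up to some $v_{k-1} \in V_{k-1}$, leaving the counter untouched. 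Third, for every edge $(u, v_0')$ from $V_{k-1}$ back into $V_0$, a zero-update transition from $s_u$ into the entry point $d_{\mathrm{idx}(v_0')}$ of a decrement chain $d_{|V_0|} \trans{} \cdots \trans{} d_1 \trans{} d_0 = q$ whose transitions all have update $-1$.

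The correctness argument is a direct unrolling: any run from $p(0)$ to $q(0)$ must increment the counter to some value $j$, branch to the start vertex $v_0$ with $\mathrm{idx}(v_0) = j$, walk to a vertex $v_{k-1} \in V_{k-1}$, take a closing edge to some $v_0' \in V_0$, and then decrement exactly $\mathrm{idx}(v_0')$ times; reaching $d_0 = q$ with counter $0$ is possible precisely when $j = \mathrm{idx}(v_0')$, that is $v_0 = v_0'$, and conversely a genuine $k$-cycle $v_0 v_1 \cdots v_{k-1} v_0$ induces such a run. Since the layers are disjoint and the walk uses one vertex per layer, the closed walk it traces is automatically a simple $k$-cycle. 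For the resource bounds, the two chains and the branch transitions contribute $\Oh(|V_0|)$ states and transitions, the traversal gadget contributes $\Oh(|V|)$ states and $\Oh(|E|)$ transitions, and all updates lie in $\{-1, 0, +1\}$, so $\Vv$ is a unary VASS of size $n = \Oh(|V| + |E|) = \Oh(m)$ (using $|V| \le |E|$), constructible in $\Oh(m)$ time; the counter never exceeds $|V_0| \le m = \Oh(n)$, making $\Vv$ linearly bounded.

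The main obstacle to watch out for is achieving size $\Oh(m)$ rather than $\Oh(m^2)$: naively giving each candidate start vertex its own increment chain of length $\mathrm{idx}(v_0)$ would cost $\sum_{v \in V_0} \mathrm{idx}(v) = \Theta(|V_0|^2)$ transitions. Sharing a single increment chain with branch-off points, and symmetrically a single decrement chain with branch-in points, is exactly what keeps the construction linear. I would also verify carefully that no run can \emph{cheat} by entering or leaving a chain at the wrong counter value; here this is guaranteed because the only way to change the counter is to traverse the shared chains monotonically, and the only accepting configuration is $d_0 = q$ at counter $0$.
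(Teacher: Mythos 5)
Your proposal is correct and follows essentially the same construction as the paper: the counter records the index of the guessed start vertex in $V_0$ via a shared increment chain, the layered edges are copied as zero-effect transitions, and a shared decrement chain verifies that the closing vertex matches. The only cosmetic difference is that the paper overlays the increment and decrement chains directly on two copies $P_0$ and $Q_0$ of $V_0$ rather than using separate chain states with branch-off transitions; the correctness and size analysis are the same.
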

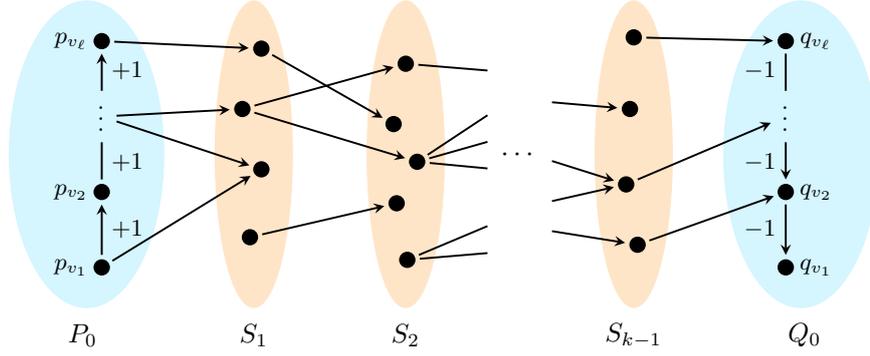
\begin{figure}[ht!]
	\centering
	\begin{tikzpicture}
	\draw[fill, lightblue, opacity = 0.5] (-0.2, 0) ellipse (0.4in and 0.8in);
	\node at (-0.25, -2.4) {$P_0$};
	\node[fill, circle, inner sep = 0.03in] (pv01) at (0, -1.5) {};
	\node at (-0.4, -1.5) {\small $p_{v_1}$};
	\node[fill, circle, inner sep = 0.03in] (pv02) at (0, -0.5) {};
	\node at (-0.4, -0.5) {\small $p_{v_2}$};
	\node[rotate = 90] (pv0d) at (0, 0.5) {\small $\cdots$};
	\node[fill, circle, inner sep = 0.03in] (pv0n) at (0, 1.5) {};
	\node at (-0.4, 1.5) {\small $p_{v_\ell}$};

	\draw[-stealth, line width = 0.01in, shorten >= 0.02in, shorten <= 0.02in] (pv01) -- node[right]{\small $+1$} (pv02);
	\draw[line width = 0.01in, shorten <= 0.02in] (pv02) -- node[right]{\small $+1$} (pv0d);
	\draw[-stealth, line width = 0.01in, shorten >= 0.02in] (pv0d) -- node[right]{\small $+1$} (pv0n);

	\draw[fill, lightblue, opacity = 0.5] (9.2, 0) ellipse (0.4in and 0.8in);
	\node at (9.25, -2.4) {$Q_0$};
	\node[fill, circle, inner sep = 0.03in] (qv0n) at (9, 1.5) {};
	\node at (9.4, 1.5) {\small $q_{v_\ell}$};
	\node[rotate=90] (qv0d) at (9, 0.5) {\small $\cdots$};
	\node[fill, circle, inner sep = 0.03in] (qv02) at (9, -0.5) {};
	\node at (9.4, -0.5) {\small $q_{v_2}$};
	\node[fill, circle, inner sep = 0.03in] (qv01) at (9, -1.5) {};
	\node at (9.4, -1.5) {\small $q_{v_1}$};
	\draw[line width = 0.01in, shorten <= 0.02in] (qv0n) -- node[left]{\small $-1$} (qv0d);
	\draw[-stealth, line width = 0.01in, shorten >= 0.02in] (qv0d) -- node[left]{\small $-1$} (qv02);
	\draw[-stealth, line width = 0.01in, shorten >= 0.02in, shorten <= 0.02in] (qv02) -- node[left]{\small $-1$} (qv01);

	\draw[fill, lightorange, opacity = 0.5] (2,0) ellipse (0.2in and 0.8in);
	\node at (2, -2.4) {$S_1$};
	\node[fill, circle, inner sep = 0.03in] (sv11) at (2.1, 1.4) {};
	\node[fill, circle, inner sep = 0.03in] (sv12) at (1.85, 0.6) {};
	\node[fill, circle, inner sep = 0.03in] (sv13) at (2.1, -0.2) {};
	\node[fill, circle, inner sep = 0.03in] (sv14) at (1.95, -1.1) {};

	\draw[fill, lightorange, opacity = 0.5] (4, 0) ellipse (0.2in and 0.8in);
	\node at (4, -2.4) {$S_2$};
	\node[fill, circle, inner sep = 0.03in] (sv21) at (4, 1.2) {};
	\node[fill, circle, inner sep = 0.03in] (sv22) at (3.84, 0.4) {};
	\node[fill, circle, inner sep = 0.03in] (sv23) at (4.15, -0.1) {};
	\node[fill, circle, inner sep = 0.03in] (sv24) at (3.88, -0.65) {};
	\node[fill, circle, inner sep = 0.03in] (sv25) at (4.02, -1.4) {};
	\node (sd21) at (5.2, 1.1) {};
	\node (sd22) at (5.2, 0.6) {};
	\node (sd23) at (5.2, 0.2) {};
	\node (sd24) at (5.2, -0.2) {};
	\node (sd25) at (5.2, -0.9) {};
	\node (sd26) at (5.2, -1.3) {};

	\node at (5.5, 0) {\large $\cdots$};

	\node (sd31) at (5.8, 1.3) {};
	\node (sd32) at (5.8, 0.7) {};
	\node (sd33) at (5.8, -0.05) {};
	\node (sd34) at (5.8, -0.65) {};
	\node (sd35) at (5.8, -1) {};
	\draw[fill, lightorange, opacity = 0.5] (7,0) ellipse (0.2in and 0.8in);
	\node at (7, -2.4) {$S_{k-1}$};
	\node[fill, circle, inner sep = 0.03in] (svk1) at (7, 1.55) {};
	\node[fill, circle, inner sep = 0.03in] (svk2) at (6.95, 0.6) {};
	\node[fill, circle, inner sep = 0.03in] (svk3) at (6.9, -0.4) {};
	\node[fill, circle, inner sep = 0.03in] (svk4) at (7.05, -1.2) {};

	\draw[-stealth, line width = 0.01in, shorten >=0.02in, shorten <= 0.02in] (pv01) -- (sv13);
	\draw[-stealth, line width = 0.01in, shorten >= 0.02in] (pv0d) -- (sv12);
	\draw[-stealth, line width = 0.01in, shorten >= 0.02in] (pv0d) -- (sv13);
	\draw[-stealth, line width = 0.01in, shorten >=0.02in, shorten <= 0.02in] (pv0n) -- (sv11);
	\draw[-stealth, line width = 0.01in, shorten >=0.02in, shorten <= 0.02in] (svk1) -- (qv0n);
	\draw[-stealth, line width = 0.01in, shorten <= 0.02in] (svk3) -- (qv0d);
	\draw[-stealth, line width = 0.01in, shorten >=0.02in, shorten <= 0.02in] (svk4) -- (qv02);

	\draw[-stealth, line width = 0.01in, shorten >= 0.02in, shorten <= 0.02in] (sv11) -- (sv22);
	\draw[-stealth, line width = 0.01in, shorten >= 0.02in, shorten <= 0.02in] (sv12) -- (sv21);
	\draw[-stealth, line width = 0.01in, shorten >= 0.02in, shorten <= 0.02in] (sv12) -- (sv23);
	\draw[-stealth, line width = 0.01in, shorten >= 0.02in, shorten <= 0.02in] (sv14) -- (sv24);
	\draw[line width = 0.01in, shorten <= 0.02in] (sv21) -- (sd21);
	\draw[line width = 0.01in, shorten <= 0.02in] (sv23) -- (sd22);
	\draw[line width = 0.01in, shorten <= 0.02in] (sv23) -- (sd23);
	\draw[line width = 0.01in, shorten <= 0.02in] (sv23) -- (sd24);
	\draw[line width = 0.01in, shorten <= 0.02in] (sv25) -- (sd25);
	\draw[line width = 0.01in, shorten <= 0.02in] (sv25) -- (sd26);
	\draw[-stealth, line width = 0.01in, shorten >= 0.02in] (sd32) -- (svk2);
	\draw[-stealth, line width = 0.01in, shorten >= 0.02in] (sd33) -- (svk3);
	\draw[-stealth, line width = 0.01in, shorten >= 0.02in] (sd34) -- (svk3);
	\draw[-stealth, line width = 0.01in, shorten >= 0.02in] (sd35) -- (svk4);
\end{tikzpicture}
	\caption{The $(\Oh(n), 1)$-VASS $\Vv$ of size $n \le \Oh(m)$ for finding $k$-cycle in a $k$-circle-layered graphs with $m$ edges.
	Note that unlabelled transitions have zero effect.
	Observe that the graph is mostly copied into the states and transitions of the linearly-bounded 1-VASS.  
	Importantly, two copies of $V_0$ are created.
	By starting at $p_{v_1}(0)$ in the first copy, a vertex from $V_0$ belonging to the $k$-cycle can be selected by loading the sole counter with a value corresponding to that vertex.
	Then, in the second copy, $q_{v_1}(0)$ can only be reached if the state first arrived at corresponds to the vertex selected in the beginning.
	Accordingly, there is a run from $p_{v_1}(0)$ to $q_{v_1}(0)$ if and only if there exists a $k$-cycle, since the states visited in the underlying path of the run correspond to the vertices of the $k$-cycle.}
	\label{fig:1vass}
\end{figure}
\begin{proof}
	Consider the unary $(\Oh(m), 1)$-VASS $\Vv = (Q, T)$ that is defined as follows, please also refer back to Figure~\ref{fig:1vass}.
	For ease of construction let us number the vertices in $V_0$, so suppose that $V_0 = \set{v_1, \ldots, v_\ell}$.

	Let us define the set of states $Q$.
	There are two copies of the vertex subset $V_0$, namely $P_0 = \set{p_{v_1}, \ldots, p_{v_\ell}}$ and $Q_0 = \set{q_{v_1}, \ldots, q_{v_\ell}}$.
	There are also copies of each of the vertex subsets $V_1, V_2, \ldots, V_{k-1}$, namely $S_i = \set{s_v : v \in V_i}$ for each $1 \leq i \leq k-1$.
	\begin{equation*}
		Q = P_0 \cup S_1 \cup S_2 \cup \cdots \cup S_{k-1} \cup Q_0
	\end{equation*}

	Now, we define the set of transitions $T$.
    There are three kinds of transitions, the initial \emph{vertex selection}
    transitions $T_I$, the intermediate transitions $T_E$, and the final
    \emph{vertex checking} transitions $T_F$.
	\begin{equation*}
		T = T_I \cup T_E \cup T_F
	\end{equation*}
    The initial transitions connect states in $P_0$ sequentially. Each
    transition increments the counter.
	Intuitively speaking, the counter takes a value corresponding to the vertex in $V_0$ that will belong to the $k$-cycle in $G$.
	\begin{equation*}
		T_I = \set{(p_{v_i}, 1, p_{v_{i+1}}) : 1 \leq i < \ell}
	\end{equation*}
	The intermediate transitions are directed copies of the edges in the original graph. 
	The only difference is that edges between $V_0$ and $V_1$ are now
    transitions from $P_0$ to $S_1$ and edges between $V_{k-1}$ and $V_0$ become transitions from $S_{k-1}$ to $Q_0$.
	\begin{align*}
		T_E = \, 
		& \set{ (p_u, 0, s_v) : \set{u, v} \in V_0 \times V_1 } \, \cup \, \set{ (s_u, 0, q_v) : \set{u, v} \in V_{k-1} \times V_0 } \, \cup \\
		& \set{ (s_u, 0, s_v) : \set{u, v} \in V_i \times V_{i+1} \text{ for some } 1 \leq i < k-1 } 
	\end{align*}
    The final transitions connect the states in $Q_0$ sequentially. Each such
    transition decrements the counter.
	Intuitively speaking, if the state reached in $Q_0$ matches the counter that has a value corresponding to the vertex in $V_0$ then the final state $q_{v_1}$ can be reached with counter value zero.
	\begin{equation*}
		T_F = \set{(q_{v_{i+1}}, -1, q_{v_i}) : 1 \leq i < \ell}
	\end{equation*}

	Importantly, there is a run from the initial configuration $p_{v_1}(0)$ to the target configuration $q_{v_1}(0)$ in $\Vv$ if and only if there is a $k$-cycle in the $k$-circle-layered graph $G$.
	In closing, observe that $\abs{Q} \leq 2|V|$ and $\abs{T} \leq 2|V| + |E|$.
	Therefore, $\Vv$ has size $\Oh(m)$.
	We remark that the greatest possible counter value is trivially bounded
    above by $\abs{Q}$, hence $\Vv$ is a unary $(\Oh(m),1)$-VASS of size $\Oh(m)$.
\end{proof}

\begin{theorem} \label{thm:k-cycle-to-vass}
	Assuming the $k$-cycle hypothesis, there does not exist an
    $\Oh(n^{2-o(1)})$-time algorithm deciding coverability or reachability in
    unary $(\Oh(n), 1)$-VASS of size $n$.
\end{theorem}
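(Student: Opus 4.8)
The plan is to prove both lower bounds by a single fine-grained reduction from the $k$-cycle problem to coverability/reachability in a linearly-bounded unary $1$-VASS, combined with a contradiction argument against the $k$-cycle hypothesis. Fix an arbitrary $\varepsilon > 0$, which we may assume satisfies $\varepsilon < 1$. By Hypothesis~\ref{hyp:k-cycle-hypothesis} there is a constant $k = k(\varepsilon)$ for which no $\Oh(m^{2-\varepsilon})$-time algorithm finds a $k$-cycle in directed graphs with $m$ edges; by the colour-coding observation recalled above, the same hardness holds when the input is promised to be $k$-circle-layered. Suppose, for contradiction, that some algorithm $A$ decides reachability (resp.\ coverability) in unary $(\Oh(n), 1)$-VASS of size $n$ in time $\Oh(n^{2-\varepsilon})$.

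First I would settle the reachability case directly. Given a $k$-circle-layered graph $G$ with $m$ edges, Lemma~\ref{lem:k-cycle-reduction} builds, in $\Oh(m)$ time, a unary $(\Oh(n), 1)$-VASS $\Vv$ of size $n \le \Oh(m)$ such that $G$ has a $k$-cycle if and only if $p(0) \reach q(0)$ in $\Vv$. Running $A$ on this instance therefore decides the $k$-cycle question in total time $\Oh(m) + \Oh(n^{2-\varepsilon}) = \Oh(m^{2-\varepsilon})$, contradicting the choice of $k$. Since $\varepsilon$ was arbitrary, no $\Oh(n^{2-o(1)})$-time reachability algorithm exists.

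For coverability I must be more careful, since the target of Lemma~\ref{lem:k-cycle-reduction} is $q(0)$ and covering the all-zero vector is trivially satisfied by merely reaching state $q$. I would argue that in this linearly-bounded setting the matching test can be rephrased as a genuine coverability condition. Concretely, rather than decrementing back to $q_{v_1}$ and insisting on counter value $0$, append to each state $q_{v_j} \in Q_0$ a path of $\ell - j$ increment transitions leading to a fresh sink state $f$, set the counter bound to $B = \ell - 1 = \Oh(n)$, and take $f(\ell - 1)$ as the coverability target. The selection phase loads the counter with $i-1$ for the chosen vertex $v_i$, so the sink is reached with value $(i-1) + (\ell - j)$; boundedness blocks the run unless $i \le j$, while the target $\ell - 1$ is covered only if $i \ge j$. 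Hence $f(\ell - 1)$ is coverable exactly when $i = j$, i.e.\ exactly when a $k$-cycle closes up. This modified $\Vv$ remains a linearly-bounded unary $1$-VASS of size $\Oh(m)$ built in $\Oh(m)$ time, so the same contradiction applies; alternatively one may simply invoke the equivalence of coverability and reachability in linearly-bounded unary VASS.

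The hard part will be the bookkeeping ensuring $n = \Theta(m)$ throughout, so that the hypothesised exponent $2-\varepsilon$ in $n$ transfers to the exponent $2-\varepsilon$ in $m$. This is exactly where the linear bound $B = \Oh(n)$ and the $\Oh(m)$ size guarantee of Lemma~\ref{lem:k-cycle-reduction} are essential, and where the coverability variant must be engineered so that the appended increment paths and the target value $\ell - 1$ keep the instance linearly bounded rather than inflating $B$ (and hence $n$) beyond linear.
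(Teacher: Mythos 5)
Your reachability argument is exactly the paper's proof: instantiate the $k$-cycle hypothesis for the given $\varepsilon$, apply Lemma~\ref{lem:k-cycle-reduction} to get an equivalent reachability instance of size $n \leq \Oh(m)$, and derive the contradiction. For coverability the paper simply invokes the equivalence of coverability and reachability in linearly-bounded unary VASS (Lemma~\ref{lem:equiv_bounded_reach_cover}), which you correctly identify as a sufficient fallback, so your proposal does go through. However, your primary bespoke route has a concrete flaw: appending to \emph{each} state $q_{v_j} \in Q_0$ its own path of $\ell - j$ increment transitions creates $\sum_{j=1}^{\ell}(\ell - j) = \Theta(\ell^2)$ new states, and since $\ell = \abs{V_0}$ can be $\Theta(m)$ in a sparse graph, this inflates the VASS to size $\Theta(m^2)$ and destroys the $n = \Oh(m)$ guarantee on which the exponent transfer rests (contrary to your closing remark, the danger is the number of appended states, not the bound $B$). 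This is repairable by sharing a single increment chain $c_0 \trans{+1} c_1 \trans{+1} \cdots \trans{+1} c_{\ell-1} = f$ with a zero-effect entry transition from $q_{v_j}$ into $c_{j-1}$, which costs only $\Oh(\ell)$; note that this per-instance use of the exact bound $B = \ell - 1$ is the same trick the paper packages once, globally at the target, inside Lemma~\ref{lem:equiv_bounded_reach_cover}, which is the cleaner way to present it.
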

\begin{proof}
	Assume for contradiction that reachability in a unary $(\Oh(n),1)$-VASS of size $n$ can be solved in time $\Oh(n^{2-\varepsilon})$ for some $\varepsilon > 0$. 
	By the $k$-cycle hypothesis (Hypothesis~\ref{hyp:k-cycle-hypothesis}), there exists a $k$ such that the problem of finding a $k$-cycle in a $k$-circle layered graph with $m$ vertices cannot be solved in time $\Oh(m^{2-\varepsilon})$. 
	Via the reduction presented above in Lemma~\ref{lem:k-cycle-reduction}, we
    create a $(\Oh(n), 1)$-VASS $\Vv$ of size $n \leq \Oh(m)$ together with an initial configuration $p(0)$ and a target configuration $q(0)$, such that deciding reachability from $p(0)$ to $q(0)$ in $\Vv$ determines the existence of a $k$-cycle in $G$. 
	Thus the $\Oh(n^{2-\varepsilon})$ algorithm for reachability would give a
    $\Oh(m^{2-\varepsilon})$ algorithm for finding $k$-cycles, contradicting the $k$-cycle hypothesis.

	By the equivalence of coverability and reachability in unary $(\Oh(n), 1)$ VASS in Lemma~\ref{lem:equiv_bounded_reach_cover}, the same lower bound holds for coverability.
\end{proof}

\begin{corollary} \label{cor:unary-2vass-coverability}
	Assuming the $k$-cycle hypothesis, there does not exist an
    $\Oh(n^{2-o(1)})$-time algorithm for coverability in unary 2-VASS of size $n$.
\end{corollary}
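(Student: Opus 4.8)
The plan is to reduce from the exact-reachability instance in linearly-bounded unary $1$-VASS supplied by Theorem~\ref{thm:k-cycle-to-vass}, and to discharge the bound while simultaneously converting the exact-target requirement into a coverability requirement by introducing a single \emph{complementary} counter. Let $\Vv$ be the $(\Oh(n),1)$-VASS from Lemma~\ref{lem:k-cycle-reduction} with bound $B = \Oh(n)$, initial configuration $p(0)$, and target $q(0)$, so that deciding whether $p(0) \reach q(0)$ in $\Vv$ determines the existence of a $k$-cycle. I would build an unbounded unary $2$-VASS $\Vv'$ on the same state set by replacing every transition whose effect on the single counter $x$ is $\delta \in \set{-1,0,1}$ with a transition whose effect on the pair of counters $(x,y)$ is $(\delta, -\delta)$. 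These effects all lie in $\set{-1,0,1}^2$, so $\Vv'$ is again a unary VASS, and its size is $\Oh(\norm{\Vv}) = \Oh(n)$.

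The second counter is initialised so that the invariant $x + y = B$ holds from the start: I take the initial configuration of $\Vv'$ to be $p(0, B)$, which contributes $\norm{(0,B)} = B = \Oh(n)$ to the instance size and thus keeps the reduction of size $\Oh(n)$; alternatively one may prepend a chain of $\Oh(n)$ transitions, each adding $1$ to $y$, so as to start from the all-zero configuration. Since each transition changes $x$ and $y$ by opposite amounts, the invariant $x + y = B$ is preserved along every run of $\Vv'$. Consequently, in the \emph{unbounded} VASS $\Vv'$ the non-negativity constraint $y \geq 0$ forces $x \leq B$, while $x \geq 0$ is the usual constraint, so the admissible runs of $\Vv'$ correspond exactly to the runs of $\Vv$ that stay within $\set{0, \ldots, B}$ on the first counter; this yields a bijection between runs of $\Vv'$ and runs of the bounded VASS $\Vv$.

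It remains to fix the coverability target. I would ask whether $p(0,B) \reach q(x', y')$ for some $(x', y') \geq (0, B)$, that is, whether the configuration $q(0,B)$ can be covered. By the invariant $x' + y' = B$, the requirement $y' \geq B$ together with $x' \geq 0$ forces $x' = 0$ and $y' = B$; hence covering $q(0,B)$ in $\Vv'$ is equivalent to reaching $q(0)$ exactly in $\Vv$, and therefore to the existence of a $k$-cycle in $G$. An $\Oh(n^{2-\varepsilon})$-time algorithm for coverability in unary $2$-VASS would thus decide the $k$-cycle instance in time $\Oh(m^{2-\varepsilon})$ via Lemma~\ref{lem:k-cycle-reduction}, contradicting the $k$-cycle hypothesis.

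The construction itself is routine; the one point that genuinely needs care --- and which I regard as the crux --- is the interplay between the two roles played by the complementary counter $y$. On one hand $y$ must enforce the linear bound in an otherwise unbounded model, and on the other hand the chosen coverability target must pin the counters to exact values so that a single coverability query faithfully simulates the exact-reachability condition of Theorem~\ref{thm:k-cycle-to-vass}. Verifying that the one invariant $x + y = B$, together with the target $q(0,B)$, achieves both simultaneously is the heart of the argument; everything else, namely unary encoding, size $\Oh(n)$, and the bijection between runs, then follows directly.
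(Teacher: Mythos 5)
Your construction is exactly the paper's: the paper likewise adds an opposite counter (replacing each effect $t$ by $(t,-t)$), starts from $p(0,n)$, and asks to cover $q(0,n)$, so that the invariant $x+y=n$ pins the covering configuration to the exact target. The proposal is correct and matches the paper's proof, with the only (welcome) difference being that you spell out the invariant argument the paper leaves implicit.
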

\begin{proof}
	Consider a standard modification of the reduction presented for Lemma~\ref{lem:k-cycle-reduction}, that is to increase the dimension of $\Vv = (Q, T)$ by one by adding an opposite counter of sorts, yielding a 2-VASS $\Ww = (Q, T')$.
	For every transition $(p, t, q) \in T$, create a transition also modifying the opposite counter, $(p, (t, -t), q) \in T'$.
	Now, the instance $(\Ww, p(0, n), q(0, n))$ of coverability holds if and only if the instance $(\Vv, p(0), q(0))$ of reachability holds.
	The rest follows by Theorem~\ref{thm:k-cycle-to-vass}.
\end{proof}
 
Reachability in $(\Oh(n), d)$-VASS can be decided in $\Oh(n(B(n)+1)^d)$-time using the simple algorithm in Observation~\ref{obs:baseline_bounded} with a trivially modified acceptance condition.
It turns out that coverability and reachability are equivalent in unary $(\Oh(n), d)$-VASS.
This is true in the sense that it may hold that for example coverability in a $(100n,d)$-VASS may be reduced in linear time to reachability in a $(3n, d)$-VASS.
Conversely, reachability in some linearly-bounded $d$-VASS can be reduced in linear time to a corresponding instance of coverability in a linearly-bounded $d$-VASS.
Note that perversely, it appears plausible that instances of coverability in a $(100n,d)$-VASS could in fact be simpler to solve than in a $(3n, d)$-VASS.

\begin{lemma}\label{lem:equiv_bounded_reach_cover}
	For a $(B(n),d)$-VASS, let $C^{B(n)}(n)$ and $R^{B(n)}(n)$ denote the optimal running times for coverability and reachability, respectively.
	For any $\gamma > 0$, there exists some $\delta > 0$ such that $C^{\gamma\cdot
    n}(n) \leq \Oh(R^{\delta\cdot n}(n))$. 
	Conversely, for any $\gamma > 0$, there exists some $\delta>0$ such that
	$R^{\gamma\cdot n}(n) \le \Oh(C^{\delta\cdot n}(n))$.
\end{lemma}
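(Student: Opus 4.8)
The plan is to establish both inequalities by exhibiting \emph{dimension-preserving}, linear-time reductions that increase the size and the counter bound by at most constant factors; the bound constants are then absorbed into the choice of $\delta$ and into the $\Oh(\cdot)$ using monotonicity of the optimal running times in the size and bound parameters. Throughout I abbreviate the bound as $B = \gamma n$ and exploit the single structural fact that makes the equivalence possible: in a $B$-bounded VASS every reachable configuration satisfies $\vec{c} \le B\vec{1}$ componentwise, so \emph{covering} a target whose $i$-th coordinate equals $B$ in fact \emph{forces} that coordinate to equal $B$ exactly. This is the bounded analogue of the mirror trick already used in Corollary~\ref{cor:unary-2vass-coverability}, except that here I must avoid spending an extra dimension.

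For the first inequality $C^{\gamma n}(n) \le \Oh(R^{\delta n}(n))$ I would reduce coverability to reachability as in the folklore Lemma~\ref{lem:cov_to_reach}, but guarded so that the decrements cannot interfere with the simulation. Given the instance $(\Vv, \config{p}{u}, \config{q}{v})$ I add a fresh state $q'$, a single zero-effect transition $(q, \vec{0}, q')$, and the decremental self-loops $(q', -\vec{e}_i, q')$ for $1 \le i \le d$; the target becomes $\config{q'}{v}$ and I ask for exact reachability. Since $q'$ has no outgoing transition other than its own loops, any run reaching $\config{q'}{v}$ must first reach $q(\vec{a})$ for some $\vec{a} \ge \vec{v}$ inside $\Vv$ and then only decrement, so the two decision problems coincide; moreover every configuration along the draining phase lies in the box between $\vec{v}$ and $\vec{a}$, hence in $\set{0,\ldots,B}^d$, so the bound is unchanged. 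The construction adds $\Oh(d)$ states and transitions, and thus preserves size and dimension.

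For the converse $R^{\gamma n}(n) \le \Oh(C^{\delta n}(n))$ I would first normalise the target to $\vec{0}$, as in the preliminaries: append to $q$ a draining gadget that decrements coordinate $i$ exactly $\vec{v}[i] \le B$ times, leading to a new target state $q'$ with target $\vec{0}$; this costs $\Oh(dB) = \Oh(n)$ additional states and keeps every configuration within the bound. Reachability of $\config{q'}{0}$ then only needs to certify the \emph{upper} bound $\vec{c} \le \vec{0}$, since $\vec{c} \ge \vec{0}$ is automatic, and this is exactly what the bound lets coverability enforce after complementation. Concretely I build the mirror VASS $\Ww$ on the same state set by negating every transition effect, run it from the complemented initial configuration $B\vec{1} - \vec{u}$, and ask coverability of the target $B\vec{1}$. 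Setting $\bar{\vec{c}} = B\vec{1} - \vec{c}$ puts the runs of $\Ww$ in bijection with the runs of the normalised VASS, and this bijection preserves membership in $\set{0,\ldots,B}^d$; covering $B\vec{1}$ in the $B$-bounded $\Ww$ forces $\bar{\vec{c}} = B\vec{1}$, \ie the original reaches $\config{q'}{0}$ exactly. Again dimension and, up to constants, size and bound are preserved.

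The main obstacle is precisely \emph{dimension preservation}: the textbook way to turn an equality test into a covering test keeps, for each counter, a complementary counter summing to $B$, which would double $d$ and break the claimed equivalence within a fixed dimension. The linear bound lets me sidestep this by \emph{globally} complementing the whole run rather than pairing counters, at the cost of first flattening the target to $\vec{0}$ so that only one of the two inequalities needs enforcing. The remaining work is routine bookkeeping: verifying that the complement map sends valid bounded runs to valid bounded runs in both directions, checking that the draining and mirror gadgets inflate the size by only $\Oh(n)$, and choosing $\delta$ as a suitable constant multiple of $\gamma$ so that, together with monotonicity of $C^{\cdot}(\cdot)$ and $R^{\cdot}(\cdot)$, the constant blow-ups in size and bound can be re-expressed in the stated form.
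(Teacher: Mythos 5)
Your proposal is correct; its first half essentially coincides with the paper's proof, while its second half takes a genuinely different route. For $C^{\gamma n}(n)\le\Oh(R^{\delta n}(n))$ the paper attaches the decrement loops $(q,-\vec{e}_i,q)$ directly to the target state, whereas you route them through a fresh sink state $q'$; your guarded variant is in fact the more robust one in the bounded setting, because with in-place loops a run may decrement at an \emph{intermediate} visit to $q$ and then continue, and stripping those decrements out can push a counter above $B$ (with $B=2$ and transitions $(p,+1,q)$, $(q,+2,r)$, $(r,0,q)$, the loop lets $q(2)$ be reached exactly from $p(0)$ even though $q(2)$ is not coverable in the bounded original), so the sink state is not mere pedantry. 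For the converse, both proofs first drain the target to $\vec{0}$; the paper then appends a path that adds $B$ to every counter and asks to cover $B\cdot\vec{1}$ at a fresh final state $s$ --- the bound forces every counter to be $0$ at the start of that pump, so reaching $s$ with the ceiling covered certifies exact reachability --- whereas you globally complement the system via $\vec{c}\mapsto B\vec{1}-\vec{c}$, negating all effects and covering $q'(B\vec{1})$ from $p(B\vec{1}-\vec{u})$. Both constructions exploit the same structural fact, namely that covering the ceiling of a $B$-bounded VASS forces equality, and both preserve the dimension with only $\Oh(n)$ overhead in size and bound; the paper's pump is a purely local gadget appended after the drain, while your mirror map is a global involution whose correctness is an immediate bijection on bounded runs. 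Nothing in your argument needs repair beyond the routine bookkeeping you already flag (checking $\vec{u},\vec{v}\le B\vec{1}$, else the instance is trivially negative, and choosing $\delta$ as a suitable constant multiple of $\gamma$).
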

\begin{proof}
	Given an instance $(\Vv, \config{p}{u}, \config{p}{v})$, of size $n$, of coverability in $B(n)$-bounded VASS $\Vv$.
	We construct a $B(n)$-bounded VASS $\Vv'$ from $\Vv$ by adding transitions $(q,-\vec{e}_i,q)$ for every $1 \le i \le d$. 
	It is easy to see that there exists a run from $p(\vec{u})$ to $q(\vec{v})$ in $\Vv'$ if and only if there exists and a run from $p(\vec{u})$ to $q(\vec{v}')$ for some $\vec{v}' \ge \vec{v}$, in $\Vv$. 
	Since $\norm{\Vv'} = \Oh(n)$, for $B(n) = \gamma\cdot n$ we can ensure that
	$B(n) = \delta\cdot\norm{\Vv'}$ for an appropriately selected $\delta$. 
	Thus, $C^{\gamma n}(n) \leq \Oh(R^{\delta n}(\Oh(n))) \leq \Oh(R^{\delta n}(n))$.

    Conversely, consider an instance $(\Vv, \config{p}{u}, \config{p}{v})$, of size $n$, of reachability in a $B(n)$-bounded VASS $\Vv$, again denote $n = \norm{\Vv}$.
    We construct the VASS $B(n)$-bounded VASS $\Vv'$ from $\Vv$ by adding a path from $q$ to a new state $r$ whose transitions update the counters by $-\vec{v}$.
    This is easily implementable by a path of length at most $B(n)$, for if $\norm{\vec{v}} > B(n)$ this instance is trivially false.
    We then append a path from $r$ to a new state $s$ whose transitions add $B(n)$ to
    every counter. 
    It is easy to see that there is a run from $p(\vec{u})$ to $s(B(n)\cdot\vec{1})$ in $\Vv'$ if and only if there exists a run from $p(\vec{u})$ to $q(\vec{v})$ in $\Vv$. Since $\norm{\Vv'} = \Oh(n + B(n))$, for $B(n) = \gamma \cdot n$ we can ensure that $B(\norm{\Vv'}) = \delta \norm{\Vv'}$ for some $\delta$.
	Thus, $R^{\gamma n}(n) \leq \Oh(C^{\delta n}(\Oh(n))) \leq \Oh(C^{\delta n}(n))$.
\end{proof}

\subsection*{Lower Bounds for Reachability in Linearly-Bounded VASS}

To obtain further lower bounds for the coverability problem in $(\Oh(n),d)$-VASS, by Lemma~\ref{lem:equiv_bounded_reach_cover}, we can equivalently find lower bounds for the reachability problem in $(\Oh(n),d)$-VASS.
In Theorem~\ref{thm:bounded-reachability}, we will assume a well-established hypothesis concerning the time required to find hypercliques in 3-uniform hypergraphs.
In fact, Lincoln, Vassilevska Williams, and Williams state and justify an even stronger hypothesis about $\mu$-uniform hypergraphs for every $\mu \geq 3$~{\cite[Hypothesis 1.4]{LincolnVWW18}}. 
We will use this computational complexity hypothesis to expose precise lower bounds on the time complexity of reachability in linearly-bounded fixed dimension unary VASS.
\begin{hypothesis}[$k$-Hyperclique Hypothesis~{\cite[Hypothesis 1.4]{LincolnVWW18}}]
    \label{hyp:hyperclique}
    Let $k \ge 3$ be an integer. On Word-RAM with $\Oh(\log(n))$ bit words, finding an $k$-hyperclique in a $3$-uniform hypergraph on $n$ vertices requires $n^{k - o(1)}$ time.
\end{hypothesis}

For the remainder of this section, we focus on the proof of the following Theorem.
\begin{theorem}\label{thm:bounded-reachability}
    Assuming Hypothesis~\ref{hyp:hyperclique}, reachability in unary $(\Oh(n), d+2)$-VASS of size $n$ requires $n^{d-o(1)}$ time.
\end{theorem}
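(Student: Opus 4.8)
The plan is to follow the same two-stage template as Theorem~\ref{thm:k-cycle-to-vass}: first reduce the source problem to reachability in a bounded VASS equipped with a fixed number of zero-tests, and then eliminate the zero-tests by paying a single extra dimension. Here the source problem is the $4d$-hyperclique problem in a $3$-uniform hypergraph $H$ on $N$ vertices, which by Hypothesis~\ref{hyp:hyperclique} requires $N^{4d-o(1)}$ time. The essential bookkeeping is that the entire construction must output a VASS of size $n = \Theta(N^4)$; then, since $(N^4)^d = N^{4d}$, any $\Oh(n^{d-\varepsilon})$-time reachability algorithm would decide $4d$-hyperclique in $\Oh(N^{4d-4\varepsilon})$ time, contradicting Hypothesis~\ref{hyp:hyperclique} for every fixed $\varepsilon>0$. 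This fixes both the hyperclique size $4d$ and the target VASS size $N^4$.

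For the first stage (which I would isolate as Lemma~\ref{lem:hyperclique-reduction}), I would split a candidate $4d$-hyperclique into $d$ blocks of four vertices and devote one counter to each block, reserving a single scratch counter for the arithmetic gadgets; this yields the $d+1$ counters. For each block in turn, the states and transitions enumerate all $\Oh(N^4)$ four-tuples of vertices, loading the block's counter with a base-$N$ positional encoding of the chosen tuple and, at load time, checking the $\binom{4}{3}=4$ internal hyperedges so that only internally valid tuples are stored. This per-block enumeration is precisely what pins the size at $n=\Theta(N^4)$ and keeps every counter bounded by $\Oh(N^4)$, so that the system is genuinely linearly bounded. It remains to verify the cross-block hyperedges, that is, every selected triple split across two or three distinct blocks; there are only $\binom{4d}{3}=\Oh(d^3)$ such triples, a constant for fixed $d$. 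Each cross-block check I would implement in a dedicated checking phase that decodes the relevant block counters digit-by-digit against a hard-wired list of admissible vertex combinations of $H$, using the scratch counter together with a constant number of zero-tests in the spirit of the arithmetic gadgets of Lemma~\ref{lem:k-clique-reduction}. Since the number of cross-block triples is constant, the total number of zero-tests is bounded independently of $N$.

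For the second stage (which I would isolate as Lemma~\ref{lem:czerwinski-orlikowski}), I would eliminate this fixed number of zero-tests using the controlling-counter technique of Czerwi\'nski and Orlikowski~\cite{CzerwinskiO21}. Their technique adds one further counter whose transition effects, together with an exact reachability target, can only be balanced when a prescribed constant number of counters are zero at prescribed points of the run, thereby simulating the zero-tests. Applying this raises the dimension from $d+1$ to $d+2$, preserves linear boundedness and the $\Theta(N^4)$ size, and turns the zero-test-augmented instance into an ordinary reachability instance in a bona fide $(d+2)$-VASS. Combining the two stages with the exponent arithmetic above proves the stated lower bound for reachability; by Lemma~\ref{lem:equiv_bounded_reach_cover} the same $n^{d-o(1)}$ bound then transfers to coverability in linearly-bounded unary $(d+2)$-VASS.

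I expect the whole difficulty to sit in the first stage. Because the counters must stay linearly bounded, the exponential prime-product encoding of Lemma~\ref{lem:k-clique-reduction} is unavailable, so cross-block hyperedges cannot be tested by mere divisibility and must instead be verified by explicit decoding against $H$ using only a strictly constant number of zero-tests. Reconciling ``check all $\Oh(d^3)$ cross-block triples'' with ``use only $d+1$ counters, all bounded by $\Oh(N^4)$, and only constantly many zero-tests'' is the delicate engineering; and ensuring the block size is exactly four, so that the size is $\Theta(N^4)$ and the exponent lands on $d$ rather than a larger or smaller value, is what makes the resulting bound tight against the $\Oh(n^{d+1})$ upper bound.
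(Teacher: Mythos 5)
Your skeleton is exactly the paper's: reduce from $4d$-hyperclique, split the candidate clique into $d$ blocks of four vertices with one counter per block plus one scratch counter (hence $d+1$), keep everything bounded by roughly $N^4$ so that the VASS has size $n=\Theta(N^{4+o(1)})$ and the exponent arithmetic $n^{d}\approx N^{4d}$ lands correctly, then eliminate the constantly many ($\Oh(d^3)$) zero-tests with the Czerwi\'nski--Orlikowski controlling counter at the cost of one extra dimension (Lemma~\ref{lem:czerwinski-orlikowski}). That matches Lemma~\ref{lem:hyperclique-reduction} and the paper's proof of Theorem~\ref{thm:bounded-reachability} step for step, including the observation that the zero-test count is constant for fixed $d$.

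The one place you deviate is the encoding of a block's four vertices, and this is where your write-up has a gap --- moreover, a self-inflicted one. You assert that ``the exponential prime-product encoding of Lemma~\ref{lem:k-clique-reduction} is unavailable'' because the counters must stay linearly bounded. That premise is wrong: the blow-up in Lemma~\ref{lem:k-clique-reduction} comes from multiplying \emph{all $k$} primes onto one counter (value $n^{2k}$), whereas here each counter only ever holds the product of \emph{four} primes, i.e.\ at most $\Oh((N\log N)^4)=\Oh(N^{4+o(1)})$, which is linear in the $\Theta(N^{4+o(1)})$-sized VASS. This is precisely what the paper does: vertices are primes, each $\vr{x}_i$ stores a product of four of them, and every cross-block hyperedge is checked by the same \DivideGadget/\MultiplyGadget divisibility gadgets as before, with two zero-tests each. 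Having discarded this route, you replace it with a base-$N$ positional encoding whose cross-block check requires ``decoding the relevant block counters digit-by-digit'' --- but you never say how to extract and compare a digit non-destructively with a single bounded scratch counter and constantly many zero-tests, and you yourself flag this as ``the delicate engineering.'' It can be made to work (e.g.\ guess the full 4-tuple in the control state, subtract its encoding, zero-test, restore), but as written the only genuinely non-trivial step of the reduction is missing, and the simpler divisibility-based solution was rejected on incorrect grounds. The rest --- internal-hyperedge checks at load time versus in the checking phase, and the final transfer to coverability via Lemma~\ref{lem:equiv_bounded_reach_cover} --- is a matter of presentation and is fine.
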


The lower bound is obtained via reduction from finding hyperclique in $3$-uniform hypergraphs, hence the lower bound is subject to the $k$-Hyperclique Hypothesis.
We present our reduction in two steps.
The first step is an intermediate step, in Lemma~\ref{lem:hyperclique-reduction} we offer a reduction to an instance of reachability in unary VASS with a limited number of zero-tests.
The second step extends the first, in Lemma~\ref{lem:czerwinski-orlikowski} we modify the reduction by adding a counter so zero-tests are absented.
This extension leverages the recently developed \emph{controlling counter technique} of Czerwi\'nski and Orlikowski~\cite{CzerwinskiO21}.
This technique allows for implicit zero-tests to be performed in the presence of a dedicated counter whose transition effects and reachability condition ensure these implicit zero-tests were indeed performed correctly.

It has been shown that we may as assume that the hypergraph is $\ell$-partite for the $k$-Hyperclique Hypothesis~{\cite[Theorem 3.1]{LincolnVWW18}}.
Thus, we may assume that the vertices can be partitioned into $\ell$ disjoint subsets $V = V_1 \cup \cdots \cup V_\ell$ and all hyperedges contain three vertices from distinct subsets $\set{u, v, w} \in V_i \times V_j \times V_k$ for some $1 \leq i < j < k \leq \ell$. 

\begin{lemma}\label{lem:hyperclique-reduction}
    Let $d \ge 1$ be a fixed integer. 
    Given a $4d$-partite $3$-uniform hypergraph $H = (V_1 \cup \ldots \cup V_{4d}, E)$ with $n$ vertices, there exists a unary $(\Oh(n^{4+o(1)}), d+1)$-VASS with $\Oh(d^3)$ zero-tests $\mathcal{T}$ such that there is a $4d$-hyperclique in $H$ if and only if there is a run from $q_I(\vec{0})$ to $q_F(\vec{v})$, for some $\vec{v} \geq \vec{0}$, in $\mathcal{T}$.
    Moreover, $\mathcal{T}$ can be constructed in $\poly{d} \cdot n^{4+o(1)}$ time.
\end{lemma}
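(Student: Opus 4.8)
The plan is to lift the prime-encoding reduction of Lemma~\ref{lem:k-clique-reduction} from cliques in graphs to hypercliques in $3$-uniform hypergraphs, spreading the selected vertices across several counters so as to keep both the dimension and the counter bound small. First I would compute the first $n$ primes and associate a distinct prime $p_v$ to every vertex $v$; by the Prime Number Theorem each such prime is $\Oh(n\log n)$. I then partition the $4d$ parts into $d$ groups of four parts each and equip $\mathcal{T}$ with one counter $\vr{x}_1,\ldots,\vr{x}_d$ per group, together with a single auxiliary counter $\vr{y}$ used only inside the \MultiplyGadget and \DivideGadget subprocedures; this yields dimension $d+1$.

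The construction proceeds in two phases, mirroring Algorithm~\ref{algo_vass}. In the \emph{selection phase}, after initialising each $\vr{x}_i$ to $1$, for each of the $4d$ parts $V_j$ the VASS non-deterministically guesses a vertex $v\in V_j$ and performs \Multiply{\vr{x}_{i}}{p_v}, where $i$ is the group containing $V_j$. After this phase each $\vr{x}_i$ stores the product of the four primes of the vertices chosen from its four parts, which is at most $(\Oh(n\log n))^4 = \Oh(n^{4+o(1)})$. In the \emph{verification phase} I would iterate over all $\binom{4d}{3}$ triples of parts; for a triple $(V_a,V_b,V_c)$ the VASS guesses a hyperedge $\set{u,v,w}\in E\cap(V_a\times V_b\times V_c)$ and invokes a \HyperEdgeGadget subprocedure. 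Writing $g(\cdot)$ for the group of a part, this subprocedure is just the sequence \Divide{\vr{x}_{g(a)}}{p_u}, \Multiply{\vr{x}_{g(a)}}{p_u}, \Divide{\vr{x}_{g(b)}}{p_v}, \Multiply{\vr{x}_{g(b)}}{p_v}, \Divide{\vr{x}_{g(c)}}{p_w}, \Multiply{\vr{x}_{g(c)}}{p_w}. Exactly as in the \EdgeGadget gadget, each \DivideGadget gets stuck unless the corresponding prime divides the corresponding counter, and each following \MultiplyGadget restores the counter so that subsequent triples are unaffected.

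For correctness I would observe that, since all primes are distinct and each $\vr{x}_i$ is a product of exactly four distinct primes (one per part of its group), $p_u \mid \vr{x}_{g(a)}$ holds if and only if $u$ is precisely the vertex selected from $V_a$. Hence a run can reach $q_F$ without getting stuck if and only if the $4d$ selected vertices are compatible on every triple, that is, they form a $4d$-hyperclique in $H$; conversely, any $4d$-hyperclique dictates the non-deterministic choices of such a run, and its final configuration covers $\vec{0}$ trivially.

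Finally I would tally the resources. Each \MultiplyGadget or \DivideGadget performs two zero-tests, so the selection phase contributes $2\cdot 4d$ zero-tests and each of the $\binom{4d}{3}$ \HyperEdgeGadget invocations contributes $12$, for $\Oh(d^3)$ zero-tests in total; the dimension is $d+1$ and the counter bound is $\Oh(n^{4+o(1)})$ as computed above. Since a \MultiplyGadget or \DivideGadget for a prime $p$ contributes $\Oh(p)=\Oh(n\log n)$ to the unary size, and the VASS contains one \HyperEdgeGadget branch per hyperedge, \ie $\Oh(n^3)$ instances in total, the dominating term is $\Oh(n^3)\cdot\Oh(n\log n)=\Oh(n^{4+o(1)})$, while the selection phase adds only $\Oh(n^2\log n)$; thus $\norm{\mathcal{T}}=\poly{d}\cdot n^{4+o(1)}$ and $\mathcal{T}$ is constructible within the same time bound. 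The main obstacle, and where the argument genuinely departs from the clique case, is the verification step: because the three vertices of a triple may reside in three different counters, the \HyperEdgeGadget must act on up to three distinct counters at once, and one must verify carefully that the divide-then-multiply pattern certifies the chosen vertex in each counter without corrupting the products stored for the other triples.
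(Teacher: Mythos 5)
Your proposal is correct and takes essentially the same approach as the paper: prime-encode the vertices, store the product of four primes on each of the $d$ counters $\vr{x}_1,\ldots,\vr{x}_d$ with a shared auxiliary counter $\vr{y}$, and verify each of the $\binom{4d}{3}$ triples by divide-then-multiply divisibility checks, with the same accounting of zero-tests, counter bound $\Oh(n^{4+o(1)})$, and size. The only (harmless) implementation difference is that you fix a static assignment of the $4d$ parts into $d$ groups of four, so each \HyperEdgeGadget acts on three known counters, whereas the paper guesses vertices unrestrictedly in the selection phase and has its \VertexSelectedGadget branch non-deterministically over all $d$ counters.
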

\begin{proof}
    We will re-employ some of the ideas already used in the constructions in the proof of Lemma~\ref{lem:k-clique-reduction}.
	In particular, we will use \MultiplyGadget and \DivideGadget subprocedures, see Algorithm~\ref{alg:multiply-divide}.
	Let us denote the $d+1$ counters $\vr{x}_1, \ldots, \vr{x}_d, \vr{y}$.
	The collective role of $\vr{x}_1, \ldots, \vr{x}_d$ is to maintain a representation of the $4d$ vertices forming the $4d$-hyperclique.
    The role of $\vr{y}$ is to ensure multiplications and divisions are completed correctly.
    Just as previously seen, before the execution of \MultiplyGadget or \DivideGadget, we require $\vr{y}=0$.
    We will combine these subprocedures to construct new subprocedures for unary
	$(d+1)$-VASS with zero-tests to verify properties related to $3$-uniform hypergraphs.

    We start by finding the first $4d\cdot \ell$ primes. 
    We associate a distinct prime $p_v$ to each vertex $v \in V$.
    Now we encode the chosen $4d$ vertices that will form the $4d$-clique by storing, on $d$ many counters, products of four primes corresponding to four of the selected vertices.
	Therefore, after the initial guessing part, the value of counter $\vr{x}_i$ will be $p_t \cdot p_u \cdot p_v \cdot p_w$ for some vertices $t, u, v, w \in V$.
    Roughly speaking, we store the product of four primes on one counter so that
	the maximum observable counter value matches the size of the resulting VASS with zero-tests.

    \begin{algorithm}
    \SetKwInOut{Input}{input}
    \Input{$\vr{x}_1, \ldots, \vr{x}_d, \vr{y} = 0$}
    \BlankLine
    \For{$i\leftarrow 1$ \KwTo $d$}{
    	\increase{\vr{x}_i}{1}\;
        \For{$j \leftarrow 1$ \KwTo $4$}{
        $\textbf{guess } k \in \{1,\ldots,n\}$ \;
    	\Multiply{\vr{x}_i}{p_{k}}\;
    }
    }
    \For{$(i,j,k) \in \{1,\ldots,4d\}^3$, $i \neq j \neq k \neq i$}{
    $\textbf{guess } e \in E \cap (V_i \times V_j \times V_k)$ \;
    \HyperEdge{e}\;
    }
    \BlankLine
    \caption{A counter program representing the VASS with zero tests. 
    As seen earlier, the variables in $\textbf{for}$ are just syntactic sugar for repeating similar lines of code, corresponding to states in the VASS. 
    Similarly, the variables in $\textbf{guess}$ represent non-deterministic branching transitions in a VASS.}
    \label{algo_hyper}
\end{algorithm}
    
    \subparagraph*{Guessing part}
    The VASS presented in Algorithm~\ref{algo_hyper} implements the following algorithm.
	Guess $4d$ vertices, not necessarily distinct, and check whether they form a $4d$-hyperclique. 
    Note that this algorithm is correct because guessing vertices that are the same or in the same $V_i$ does not help us. 
    In contrast to Algorithm~\ref{algo_vass}, the main difference is that the guessed vertices are encoded as quadruple products of primes across counters $\vr{x}_1$, \ldots, $\vr{x}_d$. 
    We do not store the entire product of $4d$ primes explicitly, only the values of each counter.

    \subparagraph*{Checking part}
    In the second part, we verify that we have selected a $4d$-hyperclique by
	testing for each of the ${{4d} \choose 3}$ hyperedges.
    This is achieved essentially in the same way as \Edge{e} was implemented in Algorithm~\ref{algo_edge}.
    We check that between every triplet of vertex subsets there is a hyperedge that has all of the vertices selected in the first part.
    We implement \HyperEdgeGadget subprocedure for checking an individual hyperedge, see Figure~\ref{fig:hyperedge-gadget}.
    This subprocedure checks that the three primes corresponding to the three vertices
	in the hyperedge can divide one of the values stored in $\vr{x}_1, \ldots,
	\vr{x}_d$.
    For ease of presentation and as previously mentioned, we introduce a \VertexSelectedGadget subprocedure that checks whether a given vertex has been selected, see Figure~\ref{fig:vertex-selected-gadget}.
    \begin{figure}[ht!]
        \centering
        \begin{tikzpicture}
	\node (v) at (4, -2) {\VertexSelected{v}};

	\node[fill, circle, inner sep = 0.03in] (q0) at (0,0) {};
	\node[fill, circle, inner sep = 0.03in] (q1) at (8,0) {};

	\node[draw, rectangle, line width = 0.01in, minimum width = 1in] (d1) at (2.25, 1) {\small\Divide{\vr{x}_1}{p_v}};
	\node[draw, rectangle, line width = 0.01in, minimum width = 1in] (m1) at (5.75, 1) {\small\Multiply{\vr{x}_1}{p_v}};
	\node[draw, rectangle, line width = 0.01in, minimum width = 1in] (d2) at (2.25, 0.3) {\small\Divide{\vr{x}_2}{p_v}};
	\node[draw, rectangle, line width = 0.01in, minimum width = 1in] (m2) at (5.75, 0.3) {\small\Multiply{\vr{x}_2}{p_v}};
	\node[rotate=90] at (2.25, -0.35) {$\cdots$};
	\node[rotate=90] at (5.75, -0.35) {$\cdots$};
	\node[draw, rectangle, line width = 0.01in, minimum width = 1in] (dd) at (2.25, -1) {\small\Divide{\vr{x}_d}{p_v}};
	\node[draw, rectangle, line width = 0.01in, minimum width = 1in] (md) at (5.75, -1) {\small\Multiply{\vr{x}_d}{p_v}};

	\path[-stealth, line width = 0.01in, shorten <= 0.02in, shorten >= 0.02in] (q0) edge[bend left = 15] (d1.west);
	\path[-stealth, line width = 0.01in, shorten <= 0.02in, shorten >= 0.02in] (q0) edge[bend left = 8] (d2.west);
	\path[-stealth, line width = 0.01in, shorten <= 0.02in, shorten >= 0.02in] (q0) edge[bend right = 15] (dd.west);

	\draw[-stealth, line width = 0.01in, shorten <= 0.02in, shorten >= 0.02in] (d1.east) -- (m1.west);
	\draw[-stealth, line width = 0.01in, shorten <= 0.02in, shorten >= 0.02in] (d2.east) -- (m2.west);
	\draw[-stealth, line width = 0.01in, shorten <= 0.02in, shorten >= 0.02in] (dd.east) -- (md.west);

	\path[-stealth, line width = 0.01in, shorten <= 0.02in, shorten >= 0.02in] (m1.east) edge[bend left = 15] (q1);
	\path[-stealth, line width = 0.01in, shorten <= 0.02in, shorten >= 0.02in] (m2.east) edge[bend left = 8] (q1);
	\path[-stealth, line width = 0.01in, shorten <= 0.02in, shorten >= 0.02in] (md.east) edge[bend right = 15] (q1);
\end{tikzpicture}
        \caption{The \VertexSelectedGadget subprocedure implemented in a unary $(d+1)$-VASS with zero-tests.
        To instantiate this subprocedure, a vertex $v$ is specified so that the $d$ counters can be checked for divisibility by the prime $p_v$, with the effect of checking whether the vertex $v$ has been selected.
        The counter $y$ is used by the \DivideGadget and \MultiplyGadget subprocedures to ensure the division and multiplications are completed correctly.}
        \label{fig:vertex-selected-gadget}
    \end{figure}
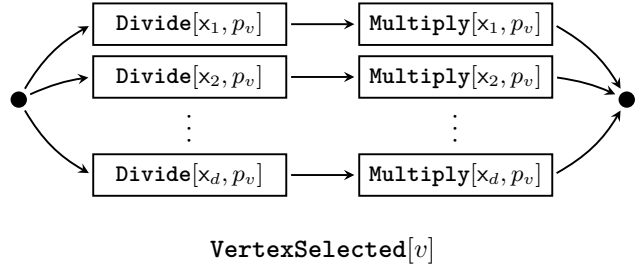
    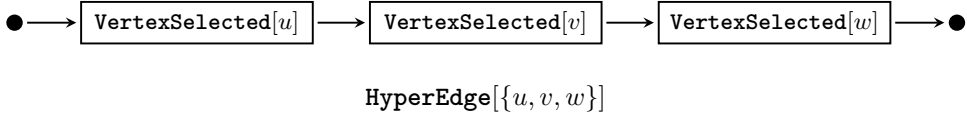
\begin{figure}[ht!]
        \centering
        \begin{tikzpicture}
	\node (e) at (6.2, -1) {\HyperEdge{\set{u, v, w}}};

	\node[fill, circle, inner sep = 0.03in] (p) at (0, 0) {};
	\node[fill, circle, inner sep = 0.03in] (q) at (12.4, 0) {};

	\node[draw, rectangle, line width = 0.01in, minimum width = 1.2in] (su) at (2.4, 0) {\small\VertexSelected{u}};
	\node[draw, rectangle, line width = 0.01in, minimum width = 1.2in] (sv) at (6.2, 0) {\small\VertexSelected{v}};
	\node[draw, rectangle, line width = 0.01in, minimum width = 1.2in] (sw) at (10, 0) {\small\VertexSelected{w}};

	\draw[-stealth, line width = 0.01in, shorten <= 0.02in, shorten >= 0.02in] (p) -- (su.west);
	\draw[-stealth, line width = 0.01in, shorten <= 0.02in, shorten >= 0.02in] (su.east) -- (sv.west);
	\draw[-stealth, line width = 0.01in, shorten <= 0.02in, shorten >= 0.02in] (sv.east) -- (sw.west);
	\draw[-stealth, line width = 0.01in, shorten <= 0.02in, shorten >= 0.02in] (sw.east) -- (q);
\end{tikzpicture}
        \caption{The \HyperEdgeGadget subprocedure implemented in a unary $(d+1)$-VASS with zero-tests.
        Note that the three vertices of the given hyperedge may be stored across
		any of the $d$ counters $\vr{x}_1, \ldots, \vr{x}_d$.
        Therefore, we make use of the \VertexSelectedGadget subprocedure three times to check if indeed $u$, $v$, and $w$ have been selected.}
        \label{fig:hyperedge-gadget}
    \end{figure}

    Now, we use the aforementioned subprocedures to construct the checking part of $\mathcal{T}$. 
    This part consists of a sequence of ${{4d} \choose 3}$ non-deterministic branching sections, one for each triplet of vertex subsets $U \times V \times W$.
    In each branching section, there is an instance of the \HyperEdgeGadget subprocedure for each of the hyperedges $\set{u, v, w} \in U \times V \times W$. 
    In order to reach the final state $q_F$, there must be a hyperedge between each of the $4d$ vertices selected in the first part of $\mathcal{T}$.
    Thus, there is a $4d$-hyperclique in $H$ if and only if there is a run from $q_I(\vec{0})$ to $q_F(\vec{v})$ for some $\vec{v} \geq \vec{0}$ in $\mathcal{T}$.
    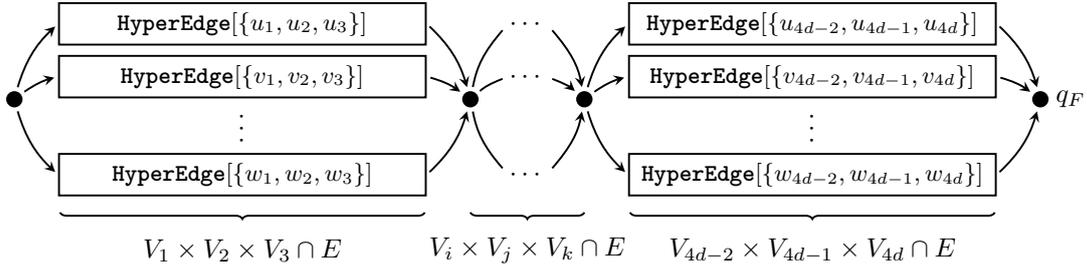
\begin{figure}[ht!]
        \centering
        \begin{tikzpicture}
	\node[fill, circle, inner sep = 0.03in] (q1) at (0, 0) {};
	\node[fill, circle, inner sep = 0.03in] (q2) at (6, 0) {};
	\node[fill, circle, inner sep = 0.03in] (q3) at (7.5, 0) {};
	\node[fill, circle, inner sep = 0.03in] (q4) at (13.5, 0) {};
		\node (q) at (13.9, 0) {$q_F$};

	\node[draw, rectangle, line width = 0.01in, minimum width = 1.9in] (g11) at (3, 1) {\small\HyperEdge{\set{u_1, u_2, u_3}}};
	\node[draw, rectangle, line width = 0.01in, minimum width = 1.9in] (g12) at (3, 0.3) {\small\HyperEdge{\set{v_1, v_2, v_3}}};
	\node[rotate = 90] (g1d) at (3, -0.35) {$\cdots$};
	\node[draw, rectangle, line width = 0.01in, minimum width = 1.9in] (g1n) at (3, -1) {\small\HyperEdge{\set{w_1, w_2, w_3}}};
	\draw[line width = 0.01in, decoration={brace, mirror}, decorate] (0.6, -1.5) -- (5.4, -1.5);
	\node at (3, -2) {$V_1 \times V_2 \times V_3 \cap E$};

	\node (g21) at (6.5, 1) {};
	\node (g22) at (6.5, 0.3) {};
	\node (g2n) at (6.5, -1) {};
	\node (qd) at (6.75, 1) {$\cdots$};
	\node (qd) at (6.75, 0.3) {$\cdots$};
	\node (qd) at (6.75, -1) {$\cdots$};
	\node (g31) at (7, 1) {};
	\node (g32) at (7, 0.3) {};
	\node (g3n) at (7, -1) {};
	\draw[line width = 0.01in, decoration={brace, mirror}, decorate] (6, -1.5) -- (7.5, -1.5);
	\node at (6.75, -2) {$V_{i} \times V_{j} \times V_{k} \cap E$};

	\node[draw, rectangle, line width = 0.01in, minimum width = 1.9in] (gk1) at (10.5, 1) {\small\HyperEdge{\set{u_{4d-2}, u_{4d-1}, u_{4d}}}};
	\node[draw, rectangle, line width = 0.01in, minimum width = 1.9in] (gk2) at (10.5, 0.3) {\small\HyperEdge{\set{v_{4d-2}, v_{4d-1}, v_{4d}}}};
	\node[rotate = 90] (gkd) at (10.5, -0.35) {$\cdots$};
	\node[draw, rectangle, line width = 0.01in, minimum width = 1.9in] (gkn) at (10.5, -1) {\small\HyperEdge{\set{w_{4d-2}, w_{4d-1}, w_{4d}}}};
	\draw[line width = 0.01in, decoration={brace, mirror}, decorate] (8.1, -1.5) -- (12.9, -1.5);
	\node at (10.5, -2) {$V_{4d-2} \times V_{4d-1} \times V_{4d} \cap E$};

	\path[-stealth, line width = 0.01in, shorten <= 0.02in, shorten >= 0.02in] (q1) edge[bend left = 15] (g11.west);
	\path[-stealth, line width = 0.01in, shorten <= 0.02in, shorten >= 0.02in] (g11.east) edge[bend left = 15] (q2);
	\path[-stealth, line width = 0.01in, shorten <= 0.02in, shorten >= 0.02in] (q1) edge[bend left = 10] (g12.west);
	\path[-stealth, line width = 0.01in, shorten <= 0.02in, shorten >= 0.02in] (g12.east) edge[bend left = 10] (q2);
	\path[-stealth, line width = 0.01in, shorten <= 0.02in, shorten >= 0.02in] (q1) edge[bend right = 15] (g1n.west);
	\path[-stealth, line width = 0.01in, shorten <= 0.02in, shorten >= 0.02in] (g1n.east) edge[bend right = 15] (q2);

	\path[line width = 0.01in, shorten <= 0.02in, shorten >= -0.02in] (q2) edge[bend left = 15] (g21);
	\path[line width = 0.01in, shorten <= 0.02in, shorten >= -0.02in] (q2) edge[bend left = 10] (g22);
	\path[line width = 0.01in, shorten <= 0.02in, shorten >= -0.02in] (q2) edge[bend right = 15] (g2n);
	\path[-stealth, line width = 0.01in, shorten <= -0.02in, shorten >= 0.02in] (g31) edge[bend left = 15] (q3);
	\path[-stealth, line width = 0.01in, shorten <= -0.02in, shorten >= 0.02in] (g32) edge[bend left = 10] (q3);
	\path[-stealth, line width = 0.01in, shorten <= -0.02in, shorten >= 0.02in] (g3n) edge[bend right = 15] (q3);

	\path[-stealth, line width = 0.01in, shorten <= 0.02in, shorten >= 0.02in] (q3) edge[bend left = 15] (gk1.west);
	\path[-stealth, line width = 0.01in, shorten <= 0.02in, shorten >= 0.02in] (gk1.east) edge[bend left = 15] (q4);
	\path[-stealth, line width = 0.01in, shorten <= 0.02in, shorten >= 0.02in] (q3) edge[bend left = 10] (gk2.west);
	\path[-stealth, line width = 0.01in, shorten <= 0.02in, shorten >= 0.02in] (gk2.east) edge[bend left = 10] (q4);
	\path[-stealth, line width = 0.01in, shorten <= 0.02in, shorten >= 0.02in] (q3) edge[bend right = 15] (gkn.west);
	\path[-stealth, line width = 0.01in, shorten <= 0.02in, shorten >= 0.02in] (gkn.east) edge[bend right = 15] (q4);
\end{tikzpicture}
        \caption{The check part of the unary $(d+1)$-VASS with zero-tests $\mathcal{T}$ for finding a $4d$-hyperclique in $4d$-partite hypergraph.}
        \label{fig:hyperclique-check-part}
    \end{figure}

    We are now able to finalize the proof. First, we will carefully analyse the maximum counter value observed on any run and count the number of zero-tests performed on any run. Then, we will evaluate the size of $\mathcal{T}$.
    
	The highest counter value observed by each counter $\vr{x}_1, \ldots, \vr{x}_d$ is the product of four primes.
	The highest counter value observed by $\vr{y}$ is equal to the highest counter
	value observed by any of the other counters  $\vr{x}_1, \ldots, \vr{x}_d$.
    Therefore the bound on the highest value observed, altogether can be bounded about by $p^4$ where $p$ is the $n$-th prime.
    By the Prime Number Theorem (for example, see~\cite{Zagier79}) we know that $p \in \Oh(n \log(n))$.
    Therefore, every run from $q_I(\vec{0})$ to $q_F(\vec{v})$, for some $\vec{v} \geq \vec{0}$, in $\mathcal{T}$ is $\Oh(n^{4+o(1)})$-bounded.

    Zero-tests are only performed by the \MultiplyGadget and \DivideGadget subprocedures, each instance of these subprocedures contains two zero-tests.
    It remains to count the number of these subprocedures are executed on any run.
    In the guessing part, there is one \MultiplyGadget subprocedure for each of the $4d$ vertices selected to form a $4d$-hyperclique.
    In the checking part, there is a sequence of ${{4d} \choose 3}$ many \HyperEdgeGadget subprocedures.
    Each \HyperEdgeGadget subprocedure contains three instances of the \VertexSelectedGadget subprocedure which executes one \DivideGadget subprocedure and one \MultiplyGadget subprocedure.
    In total, there are $2(4d + 6{{4d} \choose 3}) \in \Oh(d^3)$ many zero-tests are performed in any run from $q_I(\vec{0})$ to $q_F(\vec{v})$, for some $\vec{v} \geq \vec{0}$, in $\mathcal{T}$.

    Finally, each instance of the \MultiplyGadget and \DivideGadget subprocedures has size $\Oh(n\log{n})$.
    Note that the first $n$ primes can be found in $\Oh(n^{1+o(1)})$-time~\cite{AgrawalKS04}.
    In the guessing part, there are $4dn$ instances of the \MultiplyGadget subprocedure.
    In the checking part, there is an instance of the \HyperEdgeGadget subprocedure for each edge in the hypergraph.
    The \HyperEdgeGadget subprocedures themselves appear in ${{4d} \choose 3}$ many collections, one for each triplet of vertex subsets.
    Each \HyperEdgeGadget subprocedure contains three instances of the \VertexSelectedGadget subprocedure, which contains $d$ instances of the \MultiplyGadget subprocedure and $d$ instances of the \DivideGadget subprocedure.
    Therefore, in total $\mathcal{T}$ has polynomial size and can be constructed in time $\Oh(dn \cdot n\log{n} + m \cdot \binom{4d}{3} \cdot d \cdot n\log{n})$, where $m \in \Oh(n^3)$ is the total number of hyperedges.
\end{proof}

Consider our earlier described two-step approach towards proving Theorem~\ref{thm:bounded-reachability} by first obtaining a unary $(d+1)$-VASS $\mathcal{T}$ with zero-tests, then obtaining a unary $(d+2)$-VASS $\Vv$ by increasing the dimension by one and removing the zero-tests.
In actuality, both steps occur together to prove Theorem~\ref{thm:bounded-reachability}.
Ultimately, the $d+2$ counters of $\mathcal{V}$ have the following roles.
The counters $\vr{x}_1, \ldots, \vr{x}_d$  are used to store the products of primes
corresponding to vertices of hyperclique.
The counter $\vr{y}$ is used to complete multiplications and divisions.
In the remainder of this section, we add the $d+2$-nd counter that is used to ensure
the (implicit) zero-tests are performed faithfully. 
We achieve this by leveraging the \emph{controlling counter technique} introduced by Czerwi\'nski and Orlikowski~\cite{CzerwinskiO21}. 
The following is the restatement of their technique, their lemma has been restricted to our scenario and rewritten using the notation of this paper.

\begin{lemma}[{\cite[Lemma 10]{CzerwinskiO21}}] \label{lem:czerwinski-orlikowski}
    Let $\rho$ be a run in a $(d+2)$-VASS such that $\run{q_I(\vec{0})}{\rho}{q_F(\vec{0})}$.
    Further, let $q_0(\vec{v}_0), q_1(\vec{v}_1) \ldots, q_r(\vec{v}_r)$ be some distinguished configurations observed along the run $\rho$ with $q_0(\vec{v}_0) = q_I(\vec{0})$ and $q_r(\vec{v}_r) = q_F(\vec{0})$ and let $\rho_j$ be the segment of $\rho$ that is between $q_{j-1}(\vec{v}_{j-1})$ and $q_j(\vec{v}_j)$, so $\rho$ can be described as
    \begin{equation*}
        q_I(\vec{0}) = q_0(\vec{v}_1) \xrightarrow{\rho_1} q_1(\vec{v}_1) \rightarrow \cdots \rightarrow  q_{r-1}(\vec{v}_{r-1}) \xrightarrow{\rho_r} q_r(\vec{v}_r) = q_F(\vec{0}).
    \end{equation*}
    Let $S_1, \ldots, S_d, S_{d+1} \subseteq \set{0, 1, \ldots, r}$ be the sets
	of indices of the distinguished configurations where zero-tests could be
	performed on counters $\vr{x}_1, \ldots, \vr{x}_d, \vr{x}_{d+1}$, respectively.
    Let $t_{j,i} = \abs{\set{ s \geq j : s \in S_i }}$ be the number of
	zero-test for the counter $\vr{x}_i$ in the remainder of the run $\rho_{j+1}\cdots\rho_r$.
    Given that $\vec{v}_0 = \vec{0}$ and $\vec{v}_r = \vec{0}$, if 
    \begin{equation}
		\eff{\rho_j}[d+2] = \sum_{i=1}^{d+1} t_{j, i}\cdot \eff{\rho_j}[i],
		\label{equ:controlling-counter-effect}
    \end{equation}
    then for every $i \in \set{1, \ldots, d, d+1}$ and $j \in S_i$, we know that $\vec{v}_j[i] = 0$.
\end{lemma}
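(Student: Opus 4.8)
The plan is to exploit the single global constraint that the controlling counter $\vr{x}_{d+2}$ must return to its starting value, and to show that the hypothesis~\eqref{equ:controlling-counter-effect} converts this one scalar identity into a sum of non-negative integers that is forced to vanish termwise. The crux of the lemma is precisely this conversion: one linear equation becomes $\sum_{i} \abs{S_i}$ separate zero-constraints.

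First I would record that, since $\vec{v}_0 = \vec{0}$ and $\vec{v}_r = \vec{0}$, the total effect of $\rho$ on the controlling counter is zero, that is $\sum_{j=1}^r \eff{\rho_j}[d+2] = \vec{v}_r[d+2] - \vec{v}_0[d+2] = 0$. Substituting the hypothesis~\eqref{equ:controlling-counter-effect} into this identity and exchanging the order of summation yields
\[
  0 = \sum_{i=1}^{d+1} \sum_{j=1}^r t_{j,i}\cdot\eff{\rho_j}[i].
\]

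Second, I would fix a counter index $i$ and evaluate the inner sum by summation by parts, using $\eff{\rho_j}[i] = \vec{v}_j[i] - \vec{v}_{j-1}[i]$. The key observation is that the weights telescope: since $t_{j,i}$ counts the indices of $S_i$ that are at least $j$, we have $t_{j,i} - t_{j+1,i} = 1$ exactly when $j \in S_i$ and $0$ otherwise. Abel summation therefore collapses the inner sum to $\sum_{j \in S_i} \vec{v}_j[i]$, with the boundary contributions at $j=0$ and $j=r$ cancelling because $\vec{v}_0[i] = \vec{v}_r[i] = 0$ (note that if $0 \in S_i$ or $r \in S_i$, the corresponding term $\vec{v}_j[i]$ already equals $0$, so the conclusion holds trivially there). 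Summing over $i$ gives
\[
  \sum_{i=1}^{d+1}\;\sum_{j\in S_i} \vec{v}_j[i] = 0.
\]

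The final step is the conceptual heart of the argument: every distinguished configuration lies in $\N^{d+2}$, so each summand $\vec{v}_j[i]$ is a non-negative integer, and a sum of non-negative terms equal to zero forces each term to vanish. This yields $\vec{v}_j[i] = 0$ for every $i \in \set{1, \ldots, d+1}$ and every $j \in S_i$, as claimed. I expect the main obstacle to be purely the bookkeeping in the summation-by-parts step --- correctly identifying $t_{j,i} - t_{j+1,i}$ as the indicator of $j \in S_i$ and verifying that the boundary terms genuinely cancel; everything else is a short chain of equalities followed by the non-negativity argument.
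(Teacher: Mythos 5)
Your proof is correct: the telescoping identity $t_{j,i}-t_{j+1,i}=[j\in S_i]$, the vanishing boundary terms from $\vec{v}_0=\vec{v}_r=\vec{0}$, and the final non-negativity argument all check out. Note that the paper itself gives no proof of this lemma --- it imports it verbatim as Lemma~10 of Czerwi\'nski and Orlikowski --- and your Abel-summation derivation is precisely the standard argument behind the controlling-counter technique in that cited work, so there is nothing to add.
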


With Lemma~\ref{lem:czerwinski-orlikowski} in hand we can ensure that every zero-test is
executed correctly and conclude this section with a proof of Theorem~\ref{thm:bounded-reachability}.

\begin{proof}[Proof of Theorem~\ref{thm:bounded-reachability}]
    Consider the reduction, presented in Lemma~\ref{lem:hyperclique-reduction}, from finding a $4d$-hyperclique in a $4d$-partite $3$-uniform hypergraph $H$ to reachability in $(\Oh(n^{4+o(1)}), d+1)$-VASS with $\Oh(d^3)$ zero-tests.
    Now, given Lemma~\ref{lem:czerwinski-orlikowski}, we will add a controlling counter to $\mathcal{T}$ so that the zero-tests on the $d+1$ counters $\vr{x}_1, \ldots, \vr{x}_d, \vr{y}$ are instead performed implicitly.
	So we introduce another counter $\vr{z}$ that receives updates on transitions, consistent with Equation~\ref{equ:controlling-counter-effect}, whenever any of the other counters are updated.
	Note that counters $\vr{y}$ and $\vr{z}$, for the sake of a succinct and consistent description, are respectively referred to as counters $\vr{x}_{d+1}$ and $\vr{x}_{d+2}$ in the statement of Lemma~\ref{lem:czerwinski-orlikowski}. 
    Moreover, notice that the maximum value of $\vr{z}$ is bounded by $\poly{d} \cdot \left(\sum_{i=1}^{d+1} \vr{x}_i\right) \in \poly{d} \cdot n^{4+o(1)}$.

	Therefore, we have constructed a unary $(\poly{d} \cdot n^{4+o(1)}, d+2)$-VASS $\mathcal{V}$ with the property that there $H$ contains a $4d$-hyperclique if and only if there is a run from $q_I'(\vec{0})$ to $q_F'(\vec{0})$ in $\mathcal{V}$.
	Such a $(\poly{d}\cdot n^{4+o(1)}, d+2)$-VASS $\mathcal{V}$ has size $\Oh(t\cdot\abs{\mathcal{T}})$ where $t \in \poly{d}$ is the number of zero-tests performed on the run from $q_I(\vec{0})$ to $q_F(\vec{0})$ in $\mathcal{T}$.
	Moreover, $\mathcal{V}$ can be constructed in $\poly{d}\cdot n^{4+o(1)}$ time.
	Hence, if reachability in $(\Oh(n), d+2)$-VASS of size $n$ can be solved faster than $n^{d-o(1)}$, then one can find a $4d$-hyperclique in a $3$-uniform
	hypergraph faster than $n^{4d-o(1)}$, contradicting Hypothesis~\ref{hyp:hyperclique}.
\end{proof}

\section{Conclusion} \label{sec:conclusion}

\subparagraph*{Summary}
In this paper, we have revisited a classical problem of coverability in $d$-VASS. 
We have closed the gap left by Rosier and Yen~\cite{RosierY86} on the length of runs witnessing instances of coverability in $d$-VASS.
We have lowered the upper bound of $n^{2^{\Oh(d \log d)}}$, from Rackoff's technique~\cite{Rackoff78}, to $n^{2^{\Oh(d)}}$ (Theorem~\ref{thm:min-run}), matching the $n^{2^{\Omega(d)}}$ lower bound from Lipton's construction~\cite{Lipton76}.
This accordingly closes the gap on the exact space required for the coverability problem and yields a deterministic $n^{2^{\Oh(d)}}$-time algorithm for coverability in $d$-VASS (Corollary~\ref{cor:coverability-algorithm}).
We complement this with a matching lower bound conditional on ETH; there does not exist a deterministic $n^{2^{o(d)}}$-time algorithm for coverability (Theorem~\ref{thm:unary-kvass-coverability-lb}).
By and large, this settles the exact space and time complexity of coverability in VASS.

In addition, we study linearly-bounded unary $d$-VASS.
Here, coverability and reachability are equivalent and the trivial exhaustive search $\Oh(n^{d+1})$ algorithm is near-optimal.
We prove that reachability in linearly-bounded 1-VASS requires $n^{2-o(1)}$-time under the $k$-cycle hypothesis (Theorem~\ref{thm:k-cycle-to-vass}), matching the trivial upper bound.
We further prove that reachability in linearly-bounded $(d+2)$-VASS requires $n^{d-o(1)}$ time under the $3$-uniform hyperclique hypothesis (Theorem~\ref{thm:bounded-reachability}).

\subparagraph*{Open Problems}
The \emph{boundedness problem}, a problem closely related to coverability, asks whether, from a given initial configuration, the set of all reachable configurations is finite.
This problem was also studied by Lipton then Rackoff and is \class{EXPSPACE}-complete~\cite{Lipton76, Rackoff78}.
Boundedness was further analysed by Rosier and Yen~{\cite[Theorem 2.1]{RosierY86}} and the same gap also exists for the exact space required. 
We leave the same improvement, to eliminate the same twice-exponentiated $\log(d)$ factor, as an open problem.

Our lower bounds for the time complexity of coverability and reachability in linearly-bounded unary $d$-VASS, for $d \geq 2$, leave a gap of up to $n^{3+o(1)}$, see Table~\ref{tab:coverability-time-bounds}.
We leave it as an open problem to either improve upon the upper bound $\Oh(n^{d+1})$ given by the trivial algorithm, or to raise our conditional lower bounds.

\bibliography{bib}


\end{document}